\documentclass{fundam}

\usepackage{amsmath,amssymb}
\usepackage{lineno}
\usepackage{stmaryrd}

\usepackage{xspace}
\usepackage{multicol}
\usepackage{todonotes}
\usepackage{xspace}
\usepackage{xcolor}
\usepackage{amsmath}
\usepackage{amssymb}
\usepackage{mathtools}
\usepackage{mathtools}
\usepackage{mathrsfs}
\usepackage{url}
\usepackage{booktabs}
\usepackage{adjustbox}
\usepackage[utf8]{inputenc}
\usepackage{amsbsy}
\usepackage{stmaryrd}
\usepackage{multicol}

\def \usym2613{\bf{\times}}

\usepackage{lineno}
\usepackage{cleveref}

\usepackage{tikz}
\usetikzlibrary{shapes,snakes}
\usepackage{xcolor}

\newcommand{\Trace}[1]{\ensuremath{\langle{#1}\rangle}\xspace}
\newcommand{\trace}[1][]{\ensuremath{\sigma_{#1}}\xspace}
\newcommand{\sactivity}{{\ensuremath{\blacktriangleright}}\xspace}
\newcommand{\eactivity}{\ensuremath{\mathsmaller{\blacksquare}}\xspace}
\newcommand{\set}[1]{\ensuremath{\{ {#1} \}}\xspace}
\newcommand{\without}[1]{\ensuremath{\backslash{#1}}\xspace} 
\newcommand{\tuple}[1]{\ensuremath{( {#1})}\xspace} 
\newcommand{\sequence}[1]{\ensuremath{\langle {#1}\rangle}\xspace} 
\newcommand{\powerset}[1]{\ensuremath{\mathbb{P}({#1})}\xspace} 
\newcommand{\multisetset}[1]{\ensuremath{\mathbb{M}({#1})}\xspace} 
\newcommand{\sizeof}[1]{\ensuremath{{|{#1}|}}\xspace} 

\newcommand{\place}[1][]{\ensuremath{p_{#1}}\xspace} 

\newcommand{\setofplaces}[1][]{\ensuremath{{P}_{#1}}\xspace} 
\newcommand{\setofactivities}[1][]{\ensuremath{A_{#1}}\xspace} 
\newcommand{\universeoftraces}{\ensuremath{\mathcal{T}}\xspace}
\newcommand{\net}{\ensuremath{N}\xspace}

\newcommand{\inoutpair}[2]{\ensuremath{({#1} |  {#2})}\xspace} 
\newcommand{\fire}[1]{\ensuremath{\xrightarrow{#1}}\xspace}

\renewcommand{\log}[1][]{\ensuremath{L_{#1}}\xspace} 

\newcommand{\underfedsubscript}[3][]{\ensuremath{{\triangledownsubscript^{#1}_{\! \!#2}{(#3)}}\xspace}} 
\newcommand{\underfed}[3][]{\ensuremath{{\triangledown^{#1}_{#2}{(#3)}}\xspace}} 
\newcommand{\overfed}[3][]{\ensuremath{\triangle^{#1}_{#2}{(#3)}}\xspace} 
\newcommand{\fitting}[3][]{\ensuremath{\Box^{#1}_{#2}{(#3)}}\xspace} 

\newtheorem{observation}{Observation}[section]
\newcommand{\Qlength}{Q^\leq}
\newcommand{\proj}[2]{\ensuremath{#1{\upharpoonright}_{#2}}\xspace}

\def\figdist{2mm}

\def\fitnessmetric{\texttt{fm}\xspace}
\def\pFit{\texttt{fitting}\xspace}
\def\pNeg{\texttt{underfed}\xspace}
\def\pPos{\texttt{overfed}\xspace}

\renewcommand{\triangledown}{\begin{sideways}\begin{sideways}$\triangle$\end{sideways}\end{sideways}}
\newcommand{\triangledownsubscript}{\begin{sideways}\begin{sideways}$_\triangle $\end{sideways}\end{sideways}}

\newcommand{\algorithmname}[1]{\texttt{#1}\xspace} 

\def\universeofactivities{\mathcal{A}}

\usepackage{relsize}
\usepackage{rotating}

 \usepackage{hyperref}

\begin{document}


\setcounter{page}{109}
\publyear{24}
\papernumber{2168}
\volume{190}
\issue{2-4}

   \finalVersionForARXIV


\title{Discovering Process Models with Long-Term Dependencies while Providing Guarantees and Filtering Infrequent Behavior Patterns}

\author{Lisa L. Mannel\thanks{Address for correspondence: Process and Data Science (PADS),
                     RWTH Aachen University, Aachen, Germany}, \ Wil M. P. van der Aalst
\\
Process and Data Science (PADS) \\
RWTH Aachen University \\
Aachen, Germany\\
mannel@pads.rwth-aachen.de, \ wvdaalst@pads.rwth-aachen.de}

\maketitle

\runninghead{L.L. Mannel and W. M.P. van der Aalst}{Discovering Process Models while Providing Guarantees...}

\begin{abstract}
In process discovery, the goal is to find, for a given event log, the model describing the underlying process. While process models can be represented in a variety of ways, Petri nets form a  theoretically well-explored description language and are therefore often used. In this paper, we extend the eST-Miner process discovery algorithm. The eST-Miner computes a set of Petri net places which are considered to be fitting with respect to a certain fraction of the behavior described by the given event log as indicated by a given noise threshold. It evaluates all possible candidate places using token-based replay. The set of replayable traces is determined for each place in isolation, i.e., these sets do not need to be consistent. This allows the algorithm to abstract from infrequent behavioral patterns occurring only in some traces. However, when combining places into a Petri net by connecting them to the corresponding uniquely labeled transitions, the resulting net can replay exactly those traces from the event log that are allowed by the combination of all inserted places. Thus, inserting places one-by-one without considering their combined effect may result in deadlocks and low fitness of the Petri net. In this paper, we explore adaptions of the eST-Miner, that aim to select a subset of places such that the resulting Petri net guarantees a definable minimal fitness while maintaining high precision with respect to the input event log. Furthermore, current place evaluation techniques tend to block the execution of infrequent activity labels. Thus, a refined place fitness metric is introduced and thoroughly investigated. In our experiments we use real and artificial event logs to evaluate and compare the impact of the various place selection strategies and place fitness evaluation metrics on the returned Petri net.
\end{abstract}

\begin{keywords}
Process Discovery, Petri Nets, eST-Miner, long-term dependencies, non-free choice
\end{keywords}

\section{Introduction and related work} \label{sec:fm:intro}
\noindent
More and more corporations and organizations support their processes using information systems, which record the occurring behavior and represent this data in the form of \emph{event logs}.
Each event in such a log has a name identifying the executed activity (activity name), an identifier mapping the event to some execution instance (case id), a timestamp showing when
the event was observed, and often extended meta-data of the activity or process instance. Based on the timestamp and case id, we can organize the event log as sequences of activities, also called \emph{traces}, which represent example execution sequences of the process.
In the field of \emph{process discovery}, we utilize the event log to identify relations between the activities (e.g., pre-conditions, choices, concurrency), which are then expressed within a process model. While several modeling formalisms exist, this work focuses on modeling processes using Petri nets~\cite{ReisigPN, desel1998petri, freechoicebook, bert87}.

Process discovery is non-trivial for various reasons. We cannot assume that the given event log is complete, as some possible behavior might be yet unobserved. Also, real-life event logs often contain noise in the form of incorrectly recorded data or deviant behavior, which is usually not desired to be reflected in the process model. However, correctly classifying behavior as noise is often not trivial.
Several quality dimensions have been defined to evaluate process models (\cite{PMbook}) which are partially in conflict with each other and thus it is usually not possible to achieve perfect values for all aspects. A process model with good \emph{fitness} can (mostly) reproduce the  behavior contained in the event log, while high \emph{precision} corresponds to not allowing for (much) unobserved behavior. Furthermore, a model with good \emph{generalization} is expected to express behavior possible in the process that generated the event log even if it has not yet been observed. Finally, since process models are often used in contexts that require interpretation by a person, model \emph{simplicity} is of interest.
Additionally, an ideal discovery algorithm should be reasonably time and space efficient.
Since real-life event logs rarely allow for a model that perfectly satisfies all the different quality criteria, different discovery algorithms focus on different aspects, while neglecting others. As a result, the models returned by these discovery algorithms for a given event log can differ significantly. Note that generally there is no clearly definable best process model but instead this choice depends on what purpose it should serve.

Many existing discovery algorithms abstract from the full information given in a log or generate places heuristically, in order to decrease computation time and complexity of the returned process models. While this is convenient in many settings, the resulting models are often underfitting, in particular when processes are complex. Examples are the Alpha Miner variants~\cite{Wen2007}, the Inductive Mining family~\cite{ind}, genetic algorithms or Heuristic Miner~\cite{DBLP:FHM}. In contrast to these approaches, which are not able to (reliably) discover complex model structures, algorithms based on region theory~\cite{bookonregions, lbregions1, lbregions, ilp, ilp2, ilp3, carmonaminregions, darondeau, bergenthum, rozenberg, carmonaboundednets}) discover models whose behavior is the minimal behavior representing the input event log. On the downside, these approaches are known to be rather time-consuming, cannot handle noise, and tend to produce complex, overfitting models which can be hard to interpret. A combination of strategies has been introduced in~\cite{kalenkova20}, which aims to circumvent performance issues by limiting the application of region theory to small fragments of a pre-discovered Petri net.

In~\cite{PN2019} we introduced the discovery algorithm eST-Miner. This approach aims to combine the capability of \emph{finding complex control-flow structures like long-term dependencies} (non-free choice constructs) with an inherent ability to \emph{filter infrequent behavior patterns} while exploiting the token-game to \emph{increase efficiency}.
The basic idea is to construct a Petri net without any places and one uniquely labeled transition for each activity in the event log, and then compute and insert a set of fitting places. A place can be considered fitting even if it disagrees with a part of behavior in the event log, which allows to filter infrequent behavior. Efficiency is significantly increased by skipping uninteresting parts of the search space.
The approach guarantees that all places considered fitting with respect to the event log are discovered, and can thereby provide some kind of precision guarantee. In fact, in the case where fitting places are defined as feasible places, i.e., the discovered Petri net is required to reflect all behavior in the input event log, it guarantees to return a minimal over-approximation.

The eST-Miner evaluates each candidate place in isolation, i.e., for each place it focuses only on the activities in the event log whose transitions are connected to that place. This allows us filter infrequent behavior patterns in the event log, by requiring each place to be able to replay only parts of the traces in the event log. A candidate place will be accepted and inserted into the returned Petri net, if the event log contains sufficient support for the relation between the activities as defined by the place.
Common noise filtering techniques, which are often applied as a preprocessing step before applying a discovery algorithm lose information by simply removing infrequent trace variants or infrequent activities from the event log.  In contrast, the eST-Miner can consider all information in the event log to discover relations between activities. In particular, we aim to accurately represent behavior patterns with high support in the event log, while ignoring infrequent deviations.
\noindent
\begin{figure}[tbh]\label{fig:examplefilter}
\noindent
\parbox{0.3\linewidth}{
\noindent
\begin{align*}
L=[
&\Trace{\sactivity, x_1, x_2, x_3, \textcolor{red}{y_3, y_2, y_1}, \eactivity}^1,\\
&\Trace{\sactivity, x_1, x_2, x_3, \textcolor{red}{y_3, y_1, y_2}, \eactivity}^1,\\
&\Trace{\sactivity, x_1, x_2, x_3,\textcolor{red}{ y_2, y_3, y_1}, \eactivity}^1,\\
&\Trace{\sactivity, x_1, x_2, x_3,\textcolor{red}{ y_2, y_1, y_3}, \eactivity}^1,\\
&\Trace{\sactivity, x_1, x_2, x_3, \textcolor{red}{y_1, y_3, y_2}, \eactivity}^1,
\end{align*}
}
\parbox{0.3\linewidth}{
\noindent
\begin{align*}
&\Trace{\sactivity, \textcolor{red}{x_3, x_2, x_1}, y_1, y_2, y_3, \eactivity}^1,\\
&\Trace{\sactivity, \textcolor{red}{x_3, x_1, x_2}, y_1, y_2, y_3, \eactivity}^1,\\
&\Trace{\sactivity, \textcolor{red}{x_2, x_3, x_1}, y_1, y_2, y_3, \eactivity}^1,\\
&\Trace{\sactivity, \textcolor{red}{x_2, x_1, x_3}, y_1, y_2, y_3, \eactivity}^1,\\
&\Trace{\sactivity, \textcolor{red}{x_1, x_3, x_1}, y_1, y_2, y_3, \eactivity}^1
]
\end{align*}
}
\centering
\includegraphics[width =0.9\linewidth, trim={0cm 8.75cm 0 8.75cm},clip]{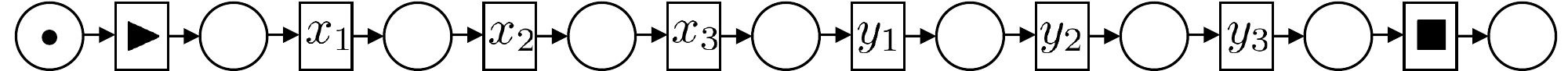}
\caption{The behavior in event log $L$ corresponds in large parts to the sequential Petri net below. However, in all traces some deviations in activity order occur (marked in red). Since all traces and all activities are equally frequent, it is not possible to filter infrequent behavior patterns and discover the underlying main process structure by simply removing infrequent traces or activities in a preprocessing step. This becomes even more challenging for processes that include concurrency, choice or non-free choice constructs.}
\label{fig:exampleNoiseFiltering}
\end{figure}

Consider the simple example event log and Petri net in Figure~\ref{fig:exampleNoiseFiltering}. The event log exhibits frequent behavioral patterns which are reflected in the behavior of the Petri net. However, in all traces some deviations in activity order occur (marked in red). Often, users do not want a discovery algorithm to return a model including all possible behavior in an event log but rather only the main process structure with noise or deviations filtered out. In this example, it is not possible to discover the main process behavior by simply removing infrequent traces or activities in a preprocessing step. In more complex processes, exhibiting concurrency, choice or non-free choice constructs, this becomes even more challenging. However, the place-wise perspective of the eST-Miner allows it to ignore deviating parts of a trace not involved with the place currently evaluated. Each place included in the presented Petri net is able to replay a large part of the presented event log and can therefore be discovered when considered in isolation.

The place-wise perspective of the eST-Miner ensures that all occurrences of activities within a log are considered when discovering a model. However, when the set of discovered fitting places is combined in a Petri net, this Petri net allows only for the behavior in the intersection of the behaviors allowed by all inserted places. This intersection may be small or even empty, thus, the  Petri net may contain deadlocks or dead parts, resulting in a much lower overall fitness than the fitness of each individual place and an overly complicated model. In extreme cases, the constructed net cannot replay any trace at all, as illustrated by the small example in Figure~\ref{fig:example0}. Here, having inserted places $p_6$ and $p_7$ into the Petri net, each of which allows execution of at least $40$ traces from the example event log when considered in isolation, together they cause a deadlock and the Petri net discovered cannot fire any transition after the start transition.

\begin{figure}[tbh]
\centering
\includegraphics[width =0.45\linewidth, trim={1cm 2cm 2.5cm 2.5cm}, clip]{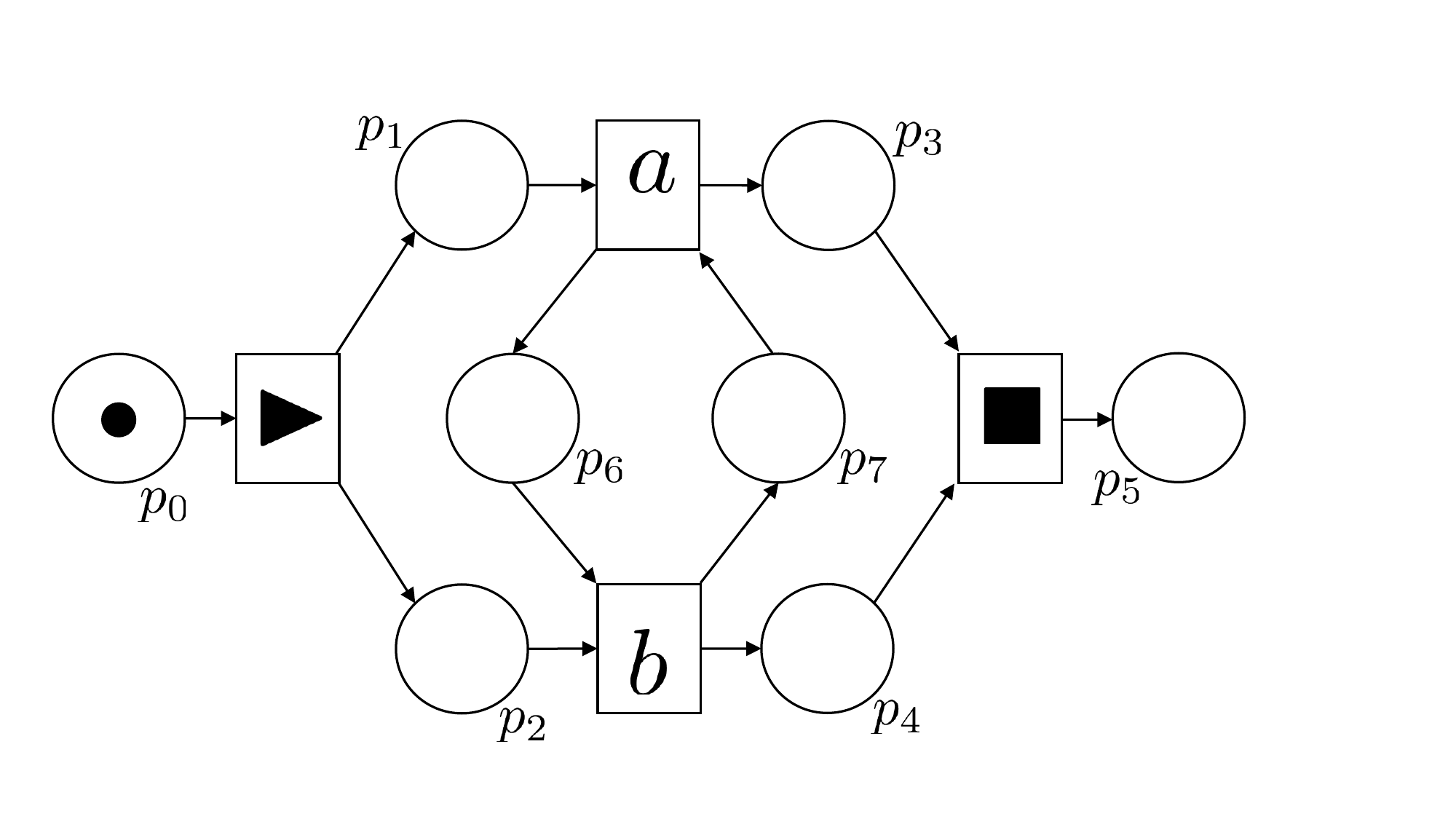}
\caption{{Consider the event log $L=[\langle \sactivity, a, b, \eactivity \rangle^{40}, \langle \sactivity, b, a, \eactivity \rangle^{60}]$, where the first trace variant occurs $40$ times and the second one $60$ times. Considered in isolation, place $p_6$ allows for the first sequence of activities while place $p_7$ allows for the second. However, in combination they cause a deadlock in the Petri net.}}
\label{fig:example0}
\end{figure}

In this paper, we aim to remedy the issue by investigating strategies of selecting a subset of places which can be combined into a deadlock-free Petri net with definable minimal fitness, while simultaneously striving for high precision and simplicity. Additionally, we introduce a new fitness metric used to evaluate candidate places with the goal to better control how candidate place selection affects the inclusion of infrequent activities into the discovered model. We require the algorithm to maintain its ability to discover and model non-free choice constructs and to provide guarantees as indicated above without over- or underfitting. Furthermore, the time and space consumption should remain reasonable, in particular more scalable than classic region theory approaches.

\medskip
This paper is a significantly extended and revised version of our work presented in~\cite{Mannel22}. With respect to this previous work, the main extensions are:
\begin{itemize}

    \item Section~\ref{sec:fm:framework} details the root causes of deadlocks and dead parts in the discovered Petri net and introduces our twofold solution approach.
    \item In Section~\ref{sec:fm:fitness} we introduce a new fitness metric for place evaluation to refine the handling of infrequent activities and thoroughly investigate its properties. In particular, we aim to be able to maintain the eST-Miner's ability to abstract from deviations and avoid overfitting, while at the same time preventing infrequent activities to cause dead parts in the returned Petri net. We explore the properties of the new metric and show that it can be directly incorporated into the eST-Mining framework. This refines the place candidate evaluation and enables significant improvements of the eST-Miners approach to noise filtering.
    \item We significantly extend the introduction to the eST-Miner (Section~\ref{sec:fm:eSToverview}), to better motivate our solution approach and provide a basis for the newly introduced fitness metric for place evaluation.
    \item Section~\ref{sec:fm:selection} has been partially reworked to clarify the goals and properties of the place selection strategy and correct a mistake related to the adaption functions.
    \item We refine and extend the experimental evaluation of our previous work and provide an in-depth discussion of the results, in particular with respect to their interpretation in the context of the applied quality metrics. Furthermore, a brief analysis of the algorithm's running time is added.
    \item We extend and refine the definitions and concepts to make the paper self-contained. Furthermore, Section~\ref{sec:fm:framework} provides a discussion of the context and assumptions we make, together with detailed examples and motivation for the proposed extensions and refinements. Examples and motivation are extended in other sections as well.
\end{itemize}
Section~\ref{sec:fm:prelims} provides basic notation and definitions.
In Section~\ref{sec:fm:eSToverview}, we briefly review the basics of the standard eST-Miner before providing a detailed problem description and method overview in Section~\ref{sec:fm:framework}. The newly introduced fitness metric and its incorporation into the eST-Miner framework are investigated in Section~\ref{sec:fm:fitness}. Section~\ref{sec:fm:selection}     presents our place selection approach, followed by an extensive evaluation in Section~\ref{sec:fm:eval}.  Finally, we discuss design choices, open questions and future work in Section~\ref{sec:fm:discussion} before concluding this work with Section~\ref{sec:fm:conclusion}.

\section{Basic notations, event logs, and process models}\label{sec:fm:prelims}
\noindent
A set, e.g. \set{a_1, a_2, \dots a_n}, does not contain any element more than once. In contrast, a multiset $m \colon A \rightarrow \mathbb{N}_0$ over the set $A$ can contain multiples of the same element. We use square brackets to denote multisets with the multiplicity of an element indicated in its exponent, i.e., we write  $[a_1^{m(a_1)}, a_2^{m(a_2)}, \dots, a_n^{m(a_n)}]$. For readability, elements with multiplicity $0$ and exponents with value $1$ are omitted and we write $a \in m$ if $m(a) \geq 1$ holds. The intersection of two sets contains only elements that occur in both sets, i.e., $\{x,y\} \cap \{y, z\}=\{y\}$, while the intersection of two multisets contains each element with its minimum frequency, i.e, $[x,y^2,z] \nplus [y^5,z^2]=[y^2, z]$. Similarly, the union of two sets contains all elements in both sets once, i.e., $\{x,y\} \cup \{y, z\}=\{x,y,z\}$, while the union of two multisets contains all elements with the sum of their frequencies, i.e, $[x,y^2,z] \uplus [y^5,z^2]=[x, y^7, z^3]$.
By \powerset{X} we refer to the power set of the set $X$, and \multisetset{X} is the set of all multisets over $X$.
We project a multiset $m$ onto a set $A$ by removing all elements not contained in $A$ from $m$ while maintaining other frequencies, e.g., for $A=\{x, y\}$ we have $\proj{[x^3, y, z^2]}{A}=[x^3, y]$. More formally, $\proj{m}{A}(a) = m(a)$  if $a \in A$, and $\proj{m}{A}(a) = 0$ otherwise.
In contrast to sets and multisets, where the order of elements is irrelevant, in sequences the elements are given in a certain order, e.g., $\sequence{x, y, x, y} \neq \sequence{x, x, y, y}$.
We refer to the $i$-th element of a sequence $\sigma$ by $\sigma(i)$.
The size of a set, multiset or sequence $X$, that is \sizeof{X}, is defined to be the number of elements in~$X$.
As short notation for the real numbers between $0$ and $1$ we introduce $\mathbb{R}_0^1 = \{r \in \mathbb{R} \mid 0 \leq r \leq 1\}$.

In the following, we give a standard definition for activities, traces, and logs, except that we require each trace to begin with a designated start activity ($\sactivity$) and end with a designated end activity ($\eactivity$).
Note that it is reasonable to assume a process to have a clear start and end of execution, with each trace describing an end-to-end example run of this process, and that any log can easily be transformed accordingly by adding artificial activities.
\begin{definition}[Activity, Trace, Event Log]\label{def:atl}
Let ${\mathcal{A}}$ be the universe of all possible \emph{activities} (e.g., actions or operations),  let $\sactivity \in \universeofactivities$ be a designated start activity and let $\eactivity \in \universeofactivities$ with $\eactivity \neq \sactivity$ be a designated end activity. A \emph{trace} is a sequence $\trace = \Trace{a_1, a_2, \dots , a_n}$ with $n \geq 2$ such that $a_1=\sactivity, a_n=\eactivity$ and $a_i \in \mathcal{A}\backslash \{\sactivity, \eactivity\}$ for $i \in \{2,3, \dots , n-1\}$. Let \universeoftraces be the universe of all such traces. An \emph{event log} $\log \in \multisetset{\universeoftraces}$ is a multiset of traces.
\end{definition}
In this paper we extend the eST-Miner algorithm, which, given an event log, returns a Petri net without silent or duplicate transition labels, i.e., a Petri net where each transition can be uniquely identified by its activity label. Therefore, in this paper we can refer to transitions as activities.  Furthermore, there is a predefined source place connected to \sactivity marking the start of the process and a predefined sink place connected to \eactivity marking the end of the process. Besides the two predefined places, we only allow for places connecting transitions that are initially unmarked. These places are uniquely identified by their non-empty sets of input activities $I$ and output activities~$O$ (we are not interested in duplicated places). Considering this context, we introduce the following simplified definition for Petri nets.
\begin{definition}[Places and Petri nets]\label{def:PTN}
A \emph{Petri net} is a pair $\net = \tuple{\setofactivities, \setofplaces}$, where $\setofactivities \subseteq \universeofactivities$ is the finite set of activities including start and end, i.e., $\{\sactivity,\eactivity\} \subseteq \setofactivities$ and $\setofplaces' \subseteq \{\inoutpair{I}{O} \in \powerset{A} \times \powerset{A} \mid I \subseteq \setofactivities \wedge I \neq \emptyset \wedge \allowbreak  O \subseteq \setofactivities \wedge O \neq \emptyset \}$ is the set of intermediate \emph{places}. We call $I$ the set of \emph{ingoing} activities of a place and $O$ the  set of \emph{outgoing} activities. The complete set of places of $N$ is $P=P' \cup \{\inoutpair{\emptyset}{\{\sactivity\}},  \inoutpair{\{\eactivity\}}{\emptyset}\}$.
\end{definition}
Note that if $p=\inoutpair{I}{O}$, then $\bullet p=I$ and $p \bullet = O$ using standard notation. To reduce notional overload, we omit set brackets in places, i.e., we write $\inoutpair{\sactivity}{\eactivity}$ instead of $\inoutpair{\{\sactivity\}}{\{\eactivity\}}$. Figure~\ref{fig:examplePN} shows an example Petri net with places represented by circles and transitions represented by squares.
\begin{figure}[hbt]
    \centering
 \scalebox{0.8}{\includegraphics[width=0.45\linewidth, trim={1.2cm 3.5cm 4cm 4cm},clip]{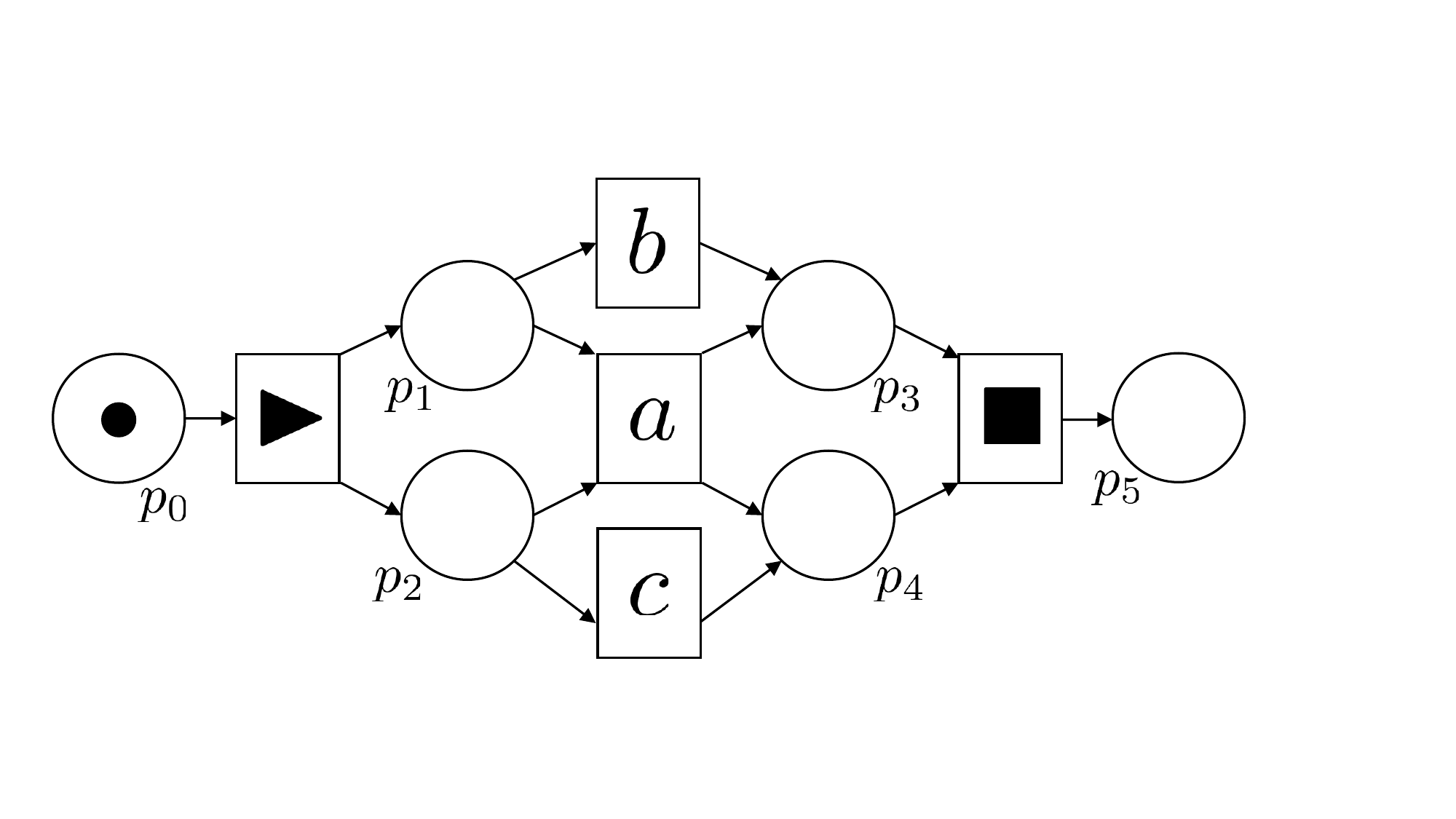}}
    \caption{Petri net with $A = \{\sactivity, a,b,c,\eactivity \}$ and $P=\{\inoutpair{\sactivity}{a,b}, \inoutpair{\sactivity}{a,c}, \inoutpair{a,b}{\eactivity}, \inoutpair{a,c}{\eactivity}, \inoutpair{\emptyset}{\sactivity}, \inoutpair{\eactivity}{\emptyset}\}$.}
    \label{fig:examplePN}
\end{figure}

The state of a Petri net $N=(A,P)$ is defined by its \emph{marking}. A marking  $M$ assigns tokens to places, i.e., $M\colon P \rightarrow \mathbb{N}_0$, and can therefore be defined as a multiset of places. The Petri net in Figure~\ref{fig:examplePN} is in marking $[p_0]$ with the corresponding token being represented as a  black dot.
A transition $a$ is \emph{enabled} in marking $M$ if and only if all places connected to it by an ingoing arc (i.e., all places in the preset of $a$, $\{\inoutpair{I}{O} \subseteq P \mid a \in O\}$) hold at least one token. An enabled transition $a$ can \emph{fire}, thereby  changing the marking $M$ of the Petri net to a new marking $M'$ by removing one token from every ingoing place and producing a token in every outgoing place. We write $M \fire{a} M'$.
In the example Petri net only the transition $\sactivity$ can fire initially, consuming the token from its ingoing place $p_0$ and producing tokens in $p_1$ and $p_2$. This enables transitions $a,b$ and $c$.

\begin{definition}[Replayable Traces]\label{def:replay}
Consider a Petri net $N=(A,P)$ and a trace $\trace = \Trace{\sactivity, a_1, \dots, a_n, \eactivity}$. We say that $N$ can \emph{replay} the trace \trace if and only if the transitions corresponding to the activities can be consecutively fired from the  marking $M_I = [\inoutpair{\emptyset}{\sactivity}]$ to reach the marking $M_F = [\inoutpair{\eactivity}{\emptyset}]$, i.e., there exists a sequence of markings $\Trace{M_1, \dots, M_n, M_{n+1}}$ such that $M_I \fire{\sactivity} M_1  \fire{a_1} M_2 \dots M_n \fire{a_n} M_{n+1} \fire{\eactivity} M_F$.
\end{definition}
Consider the Petri net in Figure~\ref{fig:examplePN}. This Petri net can replay the trace $\Trace{\sactivity, a, \eactivity}$ as follows:  $[p_0] \fire{\sactivity} [p_1, p_2] \fire{a} [p_3, p_4] \fire{\eactivity}[p_5]$. In contrast, the trace $\Trace{\sactivity, a, c, \eactivity}$ is not replayable: after firing  $[p_0] \fire{\sactivity} [p_1, p_2] \fire{a} [p_3, p_4]$ the next transition to be fired would be $c$ but $c$ is not enabled in $[p_3, p_4]$.

\begin{definition}[Behavior of a Petri net]\label{def:behavior}
We define the \emph{behavior} of the Petri net $N=\tuple{A, \setofplaces}$ to be the set of traces replayable by $N$, i.e., $\textit{behavior}(N) = \{\sigma \in \universeoftraces \mid N \textit{ can replay } \sigma\}$.
\end{definition}
Note that Definition~\ref{def:behavior} only allows for traces in the behavior of the form $\langle \sactivity,a_1,a_2, \dots, a_n,\eactivity\rangle$ (compare Definition~\ref{def:atl}) such that all intermediate places are unmarked at the end of replaying a trace and never have a negative number of tokens. The behavior of the Petri net in Figure~\ref{fig:examplePN} is $\{\Trace{\sactivity, a, \eactivity},$ \Trace{\sactivity, b,c,\eactivity}, $\Trace{\sactivity, c,b,\eactivity}\}$.

\section{Introducing the eST-Miner} \label{sec:fm:eSToverview}
\noindent
Several variants and extensions of the eST-Miner have been proposed in the past years. In the following, we introduce the eST-Miner variant used as the basis of this work. For further details, we refer the interested reader to the respective papers.

\subsubsection*{Method Overview}
As input, the algorithm takes a log \log over the set of activities $A \subseteq \universeofactivities$ and a noise threshold $\tau \in \mathbb{R}_0^1$, and returns a Petri net as output. Inspired by language-based regions, the basic strategy of the approach is to begin with a Petri net $N=(A,\{\inoutpair{\emptyset}{\sactivity}, \inoutpair{\eactivity}{\emptyset}\})$ whose transitions correspond exactly to the activities in \log and no intermediate places.
From the finite set of unmarked, intermediate candidate places, $\{\inoutpair{I}{O} \in \powerset{A} \times \powerset{A} \mid I \subseteq \setofactivities \wedge I \neq \emptyset \wedge \allowbreak  O \subseteq \setofactivities \wedge O \neq \emptyset \}$, the subset of all places \emph{fitting} with respect to $L$ and $\tau$ is computed and inserted, where the noise threshold $\tau$ allows us to ignore deviations from the main control-flow relations. A candidate place is considered fitting, if it does not hinder the replay of a significant part of \log as indicated by $\tau$. Details on the exact definition and computation of the set of fitting places are provided below.

To facilitate further computations and human readability,  \emph{implicit} places are identified and removed~\cite{impl1, impl2, colom} from the set of fitting places. A place is implicit if its removal does not  increase the number of traces replayable by the Petri net. We have two approaches at our disposal. The first option is to solve optimization problems to identify implicit places based on the structure of the Petri net, as proposed for the original eST-Miner~\cite{PN2019}. This approach reliably removes implicit places, but is computationally expensive. The second approach replays the event log to compare the markings of the places and uses this information to compute non-minimal regions, corresponding to implicit places \cite{ATAED2020}. While it is much faster, for correct and complete implicit place identification certain requirements must be satisfied, e.g.  sufficient similarity between the event log and behavior of the Petri net as well as guaranteed inclusion of certain places in the Petri net. Adaptation of this approach to the context of this work is out of scope, therefore, we use the first implicit place removal variant.

We continue with defining the exact meaning of fitting places.

\subsubsection*{Fitting Places}
The eST-Miner considers all possible candidate places based on the activities in the event log. Therefore, in the following, we consider only places whose sets of ingoing and outgoing activities occur in the corresponding event log at least once.

A candidate place can be either \emph{fitting} or \emph{unfitting} with respect to \log and $\tau$. We further distinguish \emph{unfitting} places into \emph{underfed} and \emph{overfed} places. These concepts can be used to compute the set of fitting places more efficiently. We introduce them on the level of individual traces first.

\begin{definition}[Fitting, Underfed and Overfed Places, cf.~\cite{monotonicity}] \label{def:fitting}
Let $\place = \inoutpair{I}{O} \in \powerset{\universeofactivities} \times \powerset{\universeofactivities}$  be a place, and let $\trace$ be a trace. With respect to \trace, \place is called
\begin{itemize}
    \item \emph{underfed}, denoted by  \underfed{\trace}{\place},  if and only if
$\exists k \in \set{1, 2, \dots , \sizeof{\trace}} \textit{ such that }\\
\sizeof{\set{i \; | \; i \in \set{1, 2, \dots, k-1} \wedge \trace(i) \in I}} < \sizeof{\set{i \; | \; i \in \set{1, 2, \dots, k} \wedge \trace(i) \in O}}$
    \item \emph{overfed}, denoted by  \overfed{\trace}{\place},  if and only if \\
$\sizeof{\set{i \; | \; i \in \set{1, 2, \dots, \sizeof{\trace}} \wedge \trace(i) \in I}} > \sizeof{\set{i \; | \; i \in \set{1, 2, \dots, \sizeof{\trace}} \wedge\trace(i) \in O}},$
	\item \emph{fitting}, denoted by  \fitting{\trace}{\place},  if and only if not \underfed{\trace}{\place} or \overfed{\trace}{\place}.
\end{itemize}
\end{definition}
A place is {underfed} with respect to a trace \trace if, at some point, replaying \trace requires more outgoing transitions of the place to be fired than ingoing transitions have been fired before, implying a lack of tokens in the place. Such a place hinders replay of $\sigma$ because a transition that should be fired is not enabled. Consider the Petri net in Figure~\ref{fig:examplePN} and the trace $\trace = \Trace{\sactivity, a, b, \eactivity}$. The place $p_1$ is {underfed} with respect to this trace, because when its outgoing transition $b$ occurs in the trace, not enough ingoing transitions have been fired before to provide the necessary token. Thus, we can conclude that in any Petri net that contains $p_1$, $b$ is not enabled in the marking reached after firing \sactivity and $a$.

\medskip
A place whose ingoing transitions occur more often than its outgoing transitions in a trace \trace is called {overfed} with respect to \trace. It hinders replay because it has tokens remaining if the end of replay is reached, which violates the requirement of ending in marking $[\inoutpair{\eactivity}{\emptyset}]$.  The place $p_3$ is {overfed} with respect to \trace: \sactivity is not connected, $a$ and $b$ each produce a token here, and firing \eactivity at the end of the trace consumes only one of the two tokens, thus one token is remaining.

Note that a place can be underfed and overfed with respect to a trace at the same time. If a place is neither overfed or underfed with respect to a trace, it is fitting. The place $p_2$ is {fitting} with respect to \trace. Firing \sactivity produces a token, which is subsequently consumed when $a$ is fired. Neither $b$ or \eactivity are connected to $p_2$. Thus, at the end of $\sigma$, no token is missing or remaining in $p_2$.

\medskip
We can formalize these observations and relate them to the behavior of a Petri net as introduced in Section~\ref{sec:fm:prelims} as follows:

\begin{observation}[Place Fitness and Firing Sequences]\label{cor:fitnessVSfiring}
Given a Petri net $N=(A, P)$ and a trace $\sigma = \Trace{a_1, a_2, ..., a_n} \in \universeoftraces$ that is replayable by $N$, consider a place $p=\inoutpair{I}{O}$ with $p \notin P$.
The following properties hold for $N'=(A, P \cup \{p\})$:
\begin{itemize}
\itemsep=0.95pt
\item $\fitting{\trace}{\place} \Leftrightarrow \sigma \in \textit{behavior}(N')$
\item $\begin{aligned}[t]
\underfed{\trace}{\place} \Leftrightarrow
&\; \exists k, 0 \leq k \leq n, \textit{ such that } [\inoutpair{\emptyset}{\sactivity}] \fire{a_1} M_1 \fire{a_2} ... \fire{a_k} M_k  \wedge  p \notin M_k \wedge a_{k+1} \in O \\
\Rightarrow & \; \sigma \notin \textit{behavior}(N')
\end{aligned}$
\item $\begin{aligned}[t]
\overfed{\trace}{\place} \Leftrightarrow
& \; [\inoutpair{\emptyset}{\sactivity}]  \fire{a_1} M_1 \fire{a_2} ... \fire{a_n} M_n
 \wedge p \in M_n
\Rightarrow   \sigma \notin \textit{behavior}(N')
\end{aligned}$
\end{itemize}
\end{observation}

We define the following functions to succinctly refer to multisets of traces of the event log with respect to which (sets of) places are fitting, underfed or overfed.
\begin{definition}[Multisets of Fitting/Underfed/Overfed Traces (compare \cite{monotonicity})]
Given a set of places $\setofplaces \subseteq \powerset{\universeofactivities} \times \powerset{\universeofactivities}$ and a an event log $L \in \multisetset{\universeoftraces}$, we define the following functions with respect to a place $p \in P$:
\begin{itemize}
\itemsep=0.95pt
    \item $\pFit_L(p) = \proj{L}{\{\trace \in L \mid \fitting{\trace}{\place}\}}$ is the multiset of  log traces for which $p$ is fitting 
    \item $\pNeg_L(p) = \proj{L}{\{\trace \in L \mid \underfedsubscript{\trace}{\place}\}}$ is the multiset of  log traces for which $p$ is underfed 
    \item $\pPos_L(p) = \proj{L}{\{\trace \in L \mid \overfed{\trace}{\place}\}}$ is the multiset of  log traces for which $p$ is overfed 
\end{itemize}
We extend these functions to the set of places \setofplaces as follows:
\begin{itemize}
\itemsep=0.95pt
    \item $\pFit_L(\setofplaces) = \proj{L}{\{\sigma \in L \mid \forall \place \in \setofplaces \colon \fitting{\trace}{\place}\}}.$
    \item $\pNeg_L(\setofplaces) = \proj{L}{\{\sigma \in L \mid \exists \place \in \setofplaces \colon \! \underfedsubscript{\trace}{\place}\}}.$
    \item $\pPos_L(\setofplaces) = \proj{L}{\{\sigma \in L \mid \exists \place \in \setofplaces \colon \overfed{\trace}{\place}\}}.$
\end{itemize}
\end{definition}
Note that, for a Petri net $N=(A,P)$ and event log \log,  $ \pFit_L(\setofplaces)$ corresponds exactly to the log traces replayable by $N$, since none of the places in $N$ hinders the replay of those traces.

A fitness metric is used to assign a value to a place, indicating how well it fits the given event log. We require its range to be $\mathbb{R}_0^1$ to facilitate comparison with thresholds and between different metrics.
\begin{definition}[Measuring Fitness of a Place]
Given an event log $L \in \multisetset{\universeoftraces}$ we define a \emph{fitness metric} to be a function $\fitnessmetric^L \colon \powerset{\universeofactivities} \times \powerset{\universeofactivities} \rightarrow \mathbb{R}_0^1$, which assigns a fitness score to a place based on $L$ such that $0$ represents the lowest achievable fitness, while~$1$ reflects the highest achievable fitness.
\end{definition}
Before introducing concrete fitness metrics, we define the multiset of traces in an event log which contain a given set of activities.
\begin{definition}[Activated Traces, cf. \cite{monotonicity}]
    Given an event log $L \in \multisetset{\universeoftraces}$ and a set of activities $A \subseteq \universeofactivities$, we define the multiset of traces \emph{activated} by $A$ as $\texttt{act}_L(A)= \proj{L}{\{\sigma \in L \mid \exists i \in \mathbb{N} \colon \sigma(i) \in A\}}$.
\end{definition}
Two fitness metrics have been proposed so far, in the following called \emph{absolute fitness} and \emph{relative fitness}.

\begin{definition}[Fitness Metrics, cf. \cite{monotonicity}]\label{def:fit}
Let $L \in \multisetset{\universeoftraces}$ be an event log and let $p=(I \mid O)$ be a place. We define two different fitness metrics of the place $p$ with respect to $L$.
\begin{align*}
    \textit{absolute fitness: } &\fitnessmetric^L_{\textit{abs}}(p)=
            \frac{|\pFit_L(p)|}{|L|}\\
    \textit{relative  fitness: } &\fitnessmetric^L_{\textit{rel}}(p)=
            \frac{|\texttt{act}_L(I \cup O) \bignplus \pFit_L(p)|}{|\texttt{act}_L(I \cup O)|}
\end{align*}
Both metrics return values between $0$ (low fitness) and $1$ (high fitness).
\end{definition}
Based on a fitness metric and the noise threshold $\tau$ we can lift the notions of fitting and unfitting to the complete event log rather than a single trace.
\begin{definition}[Fitness with Respect to a Threshold, compare \cite{monotonicity}]\label{def:fitnessthreshold}
With respect to an event log $\log \in \multisetset{\universeoftraces}$, a threshold $\tau \in \mathbb{R}_0^1$ and a fitness metric \fitnessmetric, we call a place $p$
\begin{itemize}
\itemsep=0.95pt
	\item \emph{fitting} 
	if and only if $\fitnessmetric^{\log}(p) \geq \tau$, and
 	\item \emph{unfitting} 
 	if and only if $\fitnessmetric^{\log}(p) < \tau.$
\end{itemize}
\end{definition}
To increase the efficiency of place candidate evaluation, we need to lift the concepts of underfed and overfed to the log level in a similar way. However, this is done for concrete fitness metrics individually to satisfy certain requirements as explained in the following subsection.

\subsubsection*{Efficient candidate places evaluation}
The eST-Miner uses token-based replay to evaluate each candidate place. The number of possible places is finite but exponential in the number of observed activities.  To avoid replaying the log on the exponential number of candidate places and increase efficiency, the eST-Miner aims to skip uninteresting subsets of candidate places, while still guaranteeing to discover all places that are considered {fitting}. To this end, it utilizes the following \emph{monotonicity properties} to derive information about candidate places based on a place $p$ already evaluated: If $p$ is underfed with respect to a trace, a candidate place generated only by adding outgoing activities to or removing ingoing activities from $p$ will lack at least as many tokens during replay and will therefore also be underfed for that trace. Vice versa, if $p$ is overfed, a candidate place constructed by adding only ingoing activities or removing outgoing activities will have at least the same number of tokens remaining  and is therefore overfed as well. Clearly, this information can be derived without actually evaluating the newly generated places.

\begin{theorem}[Monotonicity Properties, cf. \cite{monotonicity}]\label{theo:monotonicity}
    Consider two places $p_1 =(I_1 \mid O_1)$ and $p_2 =(I_2 \mid O_2)$. Then  the following holds for any event log $L  \in \multisetset{\universeoftraces}$:
    \begin{itemize}
        \item $I_1 \supseteq I_2 \wedge O_1 \subseteq O_2 \implies \pNeg_L(p_1) \subseteq \pNeg_L(p_2)$
        \item $I_1 \subseteq I_2 \wedge O_1 \supseteq O_2 \implies \pPos_L(p_1) \subseteq \pPos_L(p_2)$
    \end{itemize}
\end{theorem}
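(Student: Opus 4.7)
The plan is to prove both implications by directly unfolding Definition~\ref{def:fitting} and chaining elementary inequalities between cardinalities of preimages. The shared observation is that if $X \subseteq Y \subseteq \universeofactivities$, then for any trace $\trace$ and any prefix length $k$, we have $|\{i \in \{1,\dots,k\} \mid \trace(i) \in X\}| \leq |\{i \in \{1,\dots,k\} \mid \trace(i) \in Y\}|$, since the activities in $X$-positions form a subset of the $Y$-positions. This is the only combinatorial ingredient we need.

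For the first implication, I would pick an arbitrary $\trace \in \pNeg_L(p_1)$ and use the definition of underfed to obtain a prefix index $k$ witnessing \underfed{\trace}{p_1}, i.e.,
\[
    |\{i \in \{1,\dots,k-1\} \mid \trace(i) \in I_1\}| < |\{i \in \{1,\dots,k\} \mid \trace(i) \in O_1\}|.
\]
From $I_2 \subseteq I_1$ the left-hand side dominates its $I_2$-analogue, and from $O_1 \subseteq O_2$ the right-hand side is dominated by its $O_2$-analogue. Substituting and transitively chaining yields the corresponding strict inequality for $p_2$ at the same index $k$, so $\trace \in \pNeg_L(p_2)$, and $\trace$ was arbitrary.

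For the second implication I proceed analogously, but using only the final index $k = |\trace|$ since overfed is a global condition. If $\trace \in \pPos_L(p_1)$, then the total count of $I_1$-occurrences strictly exceeds the total count of $O_1$-occurrences; now $I_1 \subseteq I_2$ only increases the left-hand side when passing to $p_2$, and $O_2 \subseteq O_1$ only decreases the right-hand side, so the strict inequality is preserved and $\trace \in \pPos_L(p_2)$.

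I do not expect any real obstacle: the proof is essentially a bookkeeping argument on prefix cardinalities. The one spot where it is easy to slip is the direction of the monotonicity (writing $I_1 \supseteq I_2$ makes $p_1$ the \emph{stronger} producer of tokens, hence less prone to being underfed, which is why $\pNeg_L(p_1)$ is the smaller set, not the larger one). I would therefore state the direction explicitly before starting the computation, and keep a brief informal sentence next to each inequality so that the reader can see which of the two hypotheses is being invoked.
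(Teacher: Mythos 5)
Your proof is correct. Note that the paper does not actually prove Theorem~\ref{theo:monotonicity} itself --- it imports the result from the cited reference --- so there is no in-paper argument to compare against; your direct unfolding of Definition~\ref{def:fitting}, reusing the witnessing prefix index $k$ for the underfed case and the full trace for the overfed case and then chaining the two cardinality inequalities induced by $I_2 \subseteq I_1$, $O_1 \subseteq O_2$ (respectively $I_1 \subseteq I_2$, $O_2 \subseteq O_1$), is exactly the standard argument, and your explicit check of the direction of the inclusions addresses the only real pitfall.
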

The eST-Miner organizes the candidate places as a forest, called the \emph{complete candidate tree}. In this forest, we distinguish between two types of edges: blue edges connect a parent candidate place to a child place constructed from it by adding an outgoing activity, while red edges connect it to a child generated by adding an ingoing activity. A visualization of such a tree-structured candidate space for activities $\sactivity, a,b$ and $\eactivity$ is given in Figure~\ref{examplefulltree}. If combined with suitable fitness metrics, the monotonicity properties can be used to skip subtrees in the complete candidate tree.

\begin{figure}[!h]
\vspace*{1mm}
\centering
\includegraphics[width =1\linewidth]{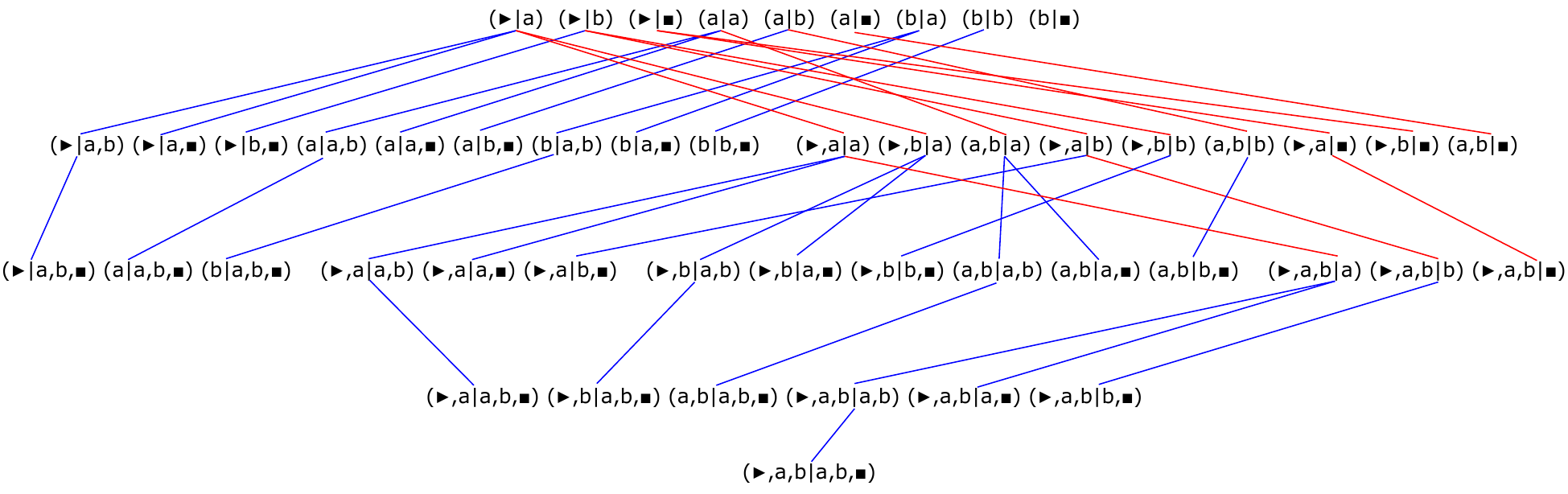}\vspace*{-1mm}
\caption{{Example of a tree-structured candidate space for the set of activities \set{\sactivity, a, b, \eactivity}, with orderings ${\eactivity >_i b >_i a >_i \sactivity}$ and ${\eactivity >_o b >_o a >_o \sactivity}$.}}
\label{examplefulltree}\vspace*{-2mm}
\end{figure}

\begin{definition}[Complete Candidate Tree] \label{def:CT} Let $A \in \universeofactivities$ be a set of activities and let $>_i, >_o$ be two total orderings on this set of activities.
A \emph{complete candidate tree}  is a pair $CT=\tuple{V, F}$ with
$$V = \set{\inoutpair{I}{O} \in \powerset{A} \times \powerset{A} \; | \; I \subseteq \setofactivities\without{\set{\eactivity}} \wedge O \subseteq \setofactivities\without{\set{\sactivity}} \wedge I \neq \emptyset \wedge O \neq \emptyset}.$$
We have that $F= F_\text{red} \cup F_\text{blue}$, with
	\begin{align*}
	F_\text{red} = &\{\tuple{\inoutpair{I_1}{O_1}, \inoutpair{I_2}{O_2}} \in V \times V
		\mid \sizeof{O_2}=1 \wedge O_1 = O_2\\
		&\wedge \exists a \in I_1 \colon \big( I_2 \cup \set{a} =I_1 \wedge \forall a' \in I_2 \colon a >_i a' \big)
	\} \textbf{~(\textcolor{red}{red edges})}\\
	F_\text{blue} = &\{\tuple{\inoutpair{I_1}{O_1}, \inoutpair{I_2}{O_2}} \in V \times V
		\mid I_1=I_2\\
		&\wedge \exists a \in O_1 \colon \big( O_2 \cup \set{a}= O_1 \wedge \forall a' \in O_2 \colon a >_o a' \big)
	\} \textbf{~(\textcolor{blue}{blue edges})}.
	\end{align*}
If $\tuple{\inoutpair{I_1}{O_1}, \inoutpair{I_2}{O_2}} \in F$, we call the candidate \inoutpair{I_1}{O_1} the \emph{child} of its \emph{parent} \inoutpair{I_2}{O_2}.
\end{definition}

\noindent Note the incremental structure of the trees, i.e., the increase in distance from the roots corresponds to the increase of the number of activities connected to the candidate places. Specifically, for a place \inoutpair{I}{O} at depth $k$ we have that $|I|+|O|=k$. Each level of the tree contains all possible places that can be created by connecting the corresponding number of activities (excluding $\sactivity$ as an outgoing activity and $\eactivity$ as an ingoing activity, since such places cannot fit), with the root level consisting of all candidate places that can be build using exactly two transitions.
The asymmetry in the definition of red and blue edges (Definition~\ref{def:CT}) ensures that all candidates are reachable from exactly one root by a unique path (using the same rules for both types of edges would result either in some candidates having both, a blue and a red parent, or no parent at all). Thus, candidate traversal following the tree structure guarantees that all candidates are considered exactly once.

\medskip
Organization of candidates within the same depth and their connections to parent and child candidates is not fixed, but defined by ordering strategies for of ingoing activities~($>_i$) and outgoing activities~($>_o$). With knowledge of these ordering strategies we can deterministically compute the next candidate place to be considered based on the last candidate we evaluated, i.e., based only on the connected activities of the last place. This is very space efficient, since we do not need to keep the whole tree in memory. It also contributes to time efficiency: we do not only avoid evaluation of candidates in skipped subtrees but do not even need to compute and traverse them.

Consider any candidate place $p=\inoutpair{I}{O}$ in the complete candidate tree (Definition~\ref{def:CT}). By construction, the tree structure guarantees that for every descendant $p_\textit{blue}=\inoutpair{I_\textit{blue}}{O_\textit{blue}}$ of $p$ reachable via a path of purely blue edges we have that $I = I_\textit{blue}$ and $O \subseteq O_\textit{blue}$ and thus, by Theorem~\ref{theo:monotonicity}, $\pNeg_L(p) \subseteq \pNeg_L(p_\textit{blue})$ holds. Respectively, for every descendant $p_\textit{red}$ of $p$ reachable via purely red edges, we have that $\pPos_L(p) \subseteq \pPos_L(p_\textit{red})$.
Combining this property with a suitable fitness metric allows us to cut off subtrees consisting of only unfitting candidate places based on the replay result of the parent candidate place.

In Definition~\ref{def:fitnessthreshold} we have used the noise threshold $\tau$ to lift the concepts of fitting and unfitting places to the log level. To enable skipping of uninteresting subtrees, we have to lift the notions of underfed and overfed places to the log level in a way that is suitable for the applied fitness metric.

\begin{samepage}
\begin{definition}[Underfed/Overfed Places with Respect to an Event Log, cf. \cite{monotonicity}] \label{def:fednessFM}
With respect to an event log $\log \in \multisetset{\universeoftraces}$ and a noise threshold $\tau \in \mathbb{R}_0^1$ we define a place $p=\inoutpair{I}{O}$ to be \emph{underfed} or \emph{overfed} based on the fitness metric used.

\noindent
A place {unfitting} with respect to \emph{absolute fitness} can be classified further as
\begin{itemize}
\itemsep=0.95pt
	\item \emph{underfed}, denoted by $\underfed[{\log,\tau}]{\textit{abs}}{\place}$,  if and only if $\frac{|\pNeg_{\log}(p)|}{|\log|} > 1-\tau$,
	\item \emph{overfed}, denoted by $\overfed[{\log,\tau}]{\textit{abs}}{\place}$,  if and only if $\frac{|\pPos_{\log}(p)|}{|\log|} > 1-\tau$.
\end{itemize}
    A place {unfitting} with respect to \emph{relative fitness} can be classified further as
    \begin{itemize}
    \itemsep=0.95pt
	\item \emph{underfed}, denoted by $\underfed[{\log,\tau}]{\textit{rel}}{\place}$,  if and only if $\frac{|\texttt{act}_L(I \cup O) \bignplus \pNeg_L(p)|}{|\texttt{act}_L(I \cup O)|} > 1-\tau$,
	\item \emph{overfed}, denoted by $\overfed[{\log,\tau}]{\textit{rel}}{\place}$,  if and only if $\frac{|\texttt{act}_L(I \cup O) \bignplus \pPos_L(p)|}{|\texttt{act}_L(I \cup O)|} > 1-\tau$.
  \end{itemize}
\end{definition}
\end{samepage}
In Figure~\ref{fig:placestatusrelations} we provide an overview illustrating the fitness status a place can have based on Definitions~\ref{def:fitnessthreshold} and~\ref{def:fednessFM}.  A place can either be fitting or unfitting. A place that is unfitting may be further classifiable to be underfed or overfed. A place can be underfed and overfed at the same time, given that sufficiently many traces validate the necessary requirements. It can also be unfitting without being neither underfed nor overfed.  Considering the event log ${L=[\Trace{\sactivity, a,a,b,d, \eactivity}^{60}, \Trace{\sactivity, a,c,d,d, \eactivity}^{40}]}$ as an example, the following place fitness classifications hold for all fitness metrics discussed in this paper.  With respect to $L$ and a noise threshold of $\tau=0.5$ the place $p_1 = \inoutpair{\sactivity}{b}$ is fitting,  $p_2 = \inoutpair{c}{\eactivity}$ is underfed, $p_3 = \inoutpair{\sactivity}{c}$ is overfed, and $p_4 = \inoutpair{a, d}{a}$ is underfed as well as overfed. With respect to $L$ and a noise threshold of $\tau=0.3$ the place $p_5 = \inoutpair{a}{d}$ is unfitting but neither underfed nor overfed.

\begin{figure}[tbh]
\vspace{2mm}
\centering
\includegraphics[width =0.5\linewidth]{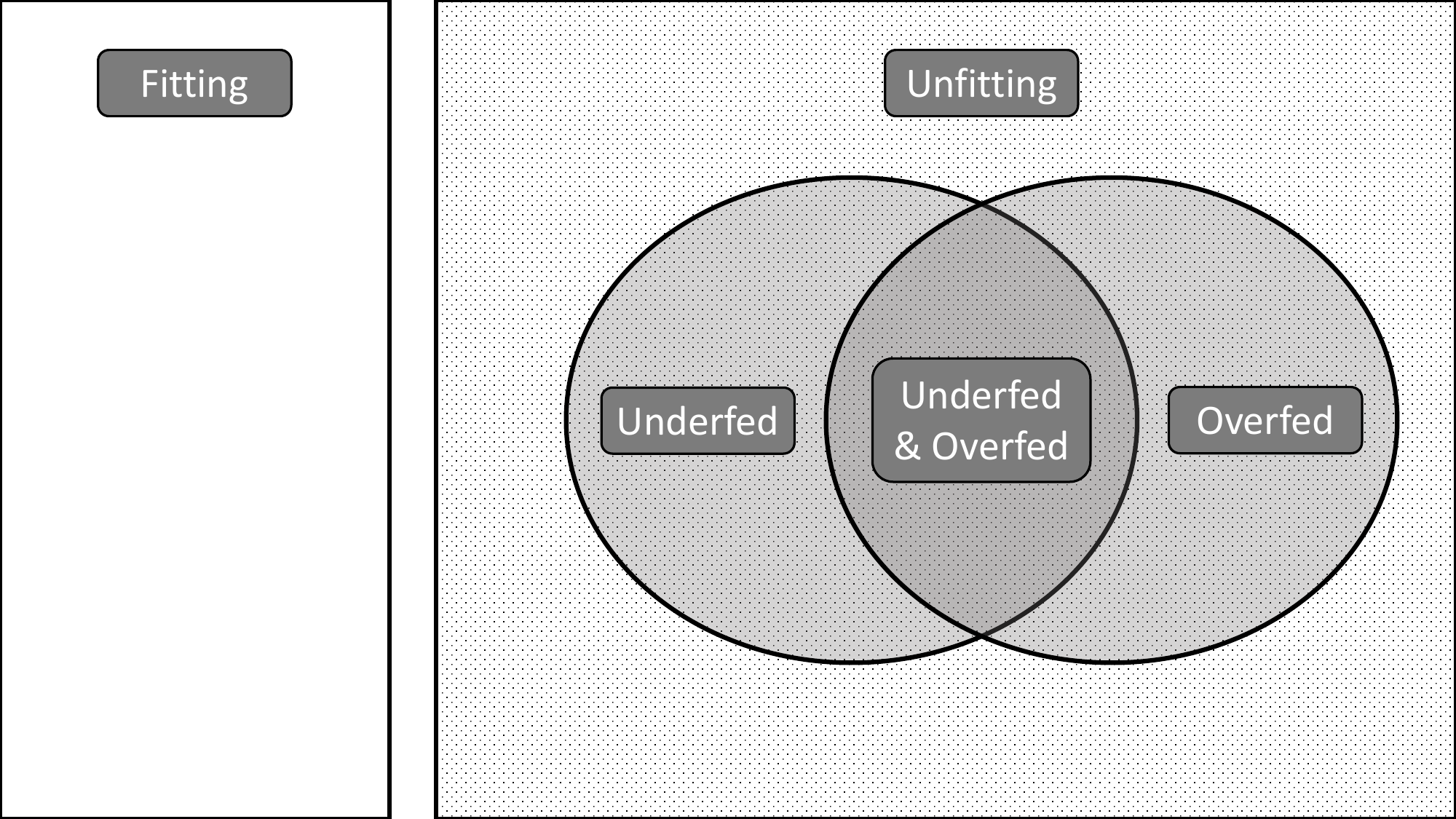}
\caption{Illustrating the fitness status a candidate place can have with respect to an event log and noise threshold. A place can either be fitting or unfitting. If it is unfitting, it may be underfed, overfed or both, enabling the skipping of derived candidate places. It may also be unfitting (Definition~\ref{def:fitnessthreshold}) without satisfying the threshold  to be underfed or overfed (Definition~\ref{def:fednessFM}), in which case no candidates can be skipped based on it.}
\label{fig:placestatusrelations}
\end{figure}

The prerequisite for the eST-Miner to skip parts of the candidate space is the guarantee, that if a place is determined to be underfed (overfed), the same must hold for all its blue (red) descendants in the complete candidate tree. Due to the construction of the tree and the following lemma this requirement is satisfied for Definition~\ref{def:fednessFM}.

\begin{samepage}
\begin{lemma}[Monotonicity of Fitness Metrics, cf.\cite{monotonicity}]\label{lemma:monotonicity}
Consider an event log $L  \in \multisetset{\universeoftraces}$ and a noise threshold $\tau \in \mathbb{R}_0^1$.

For two places $p=\inoutpair{I}{O}$ and $p'=\inoutpair{I}{O'}$ with $O \subseteq O'$ the following implications hold:
\begin{align*}
\underfed[{\log,\tau}]{\textit{abs}}{\place} &\implies \underfed[{\log,\tau}]{\textit{abs}}{\place'},\\
\underfed[{\log,\tau}]{\textit{rel}}{\place} &\implies \underfed[{\log,\tau}]{\textit{rel}}{\place'}.
\end{align*}
For two places $p=\inoutpair{I}{O}$ and $p''=\inoutpair{I''}{O}$ with $I \subseteq I''$ the following implications hold:
\begin{align*}
\overfed[{\log,\tau}]{\textit{abs}}{\place} &\implies \overfed[{\log,\tau}]{\textit{abs}}{\place''},\\
\overfed[{\log,\tau}]{\textit{rel}}{\place} &\implies \overfed[{\log,\tau}]{\textit{rel}}{\place''}.
\end{align*}
\end{lemma}
\end{samepage}

\subsubsection*{Conclusion}

\begin{figure}[!b]
\centering
\includegraphics[width =0.55\linewidth, trim={1.5cm 0cm 1.5cm 0.1cm},clip]{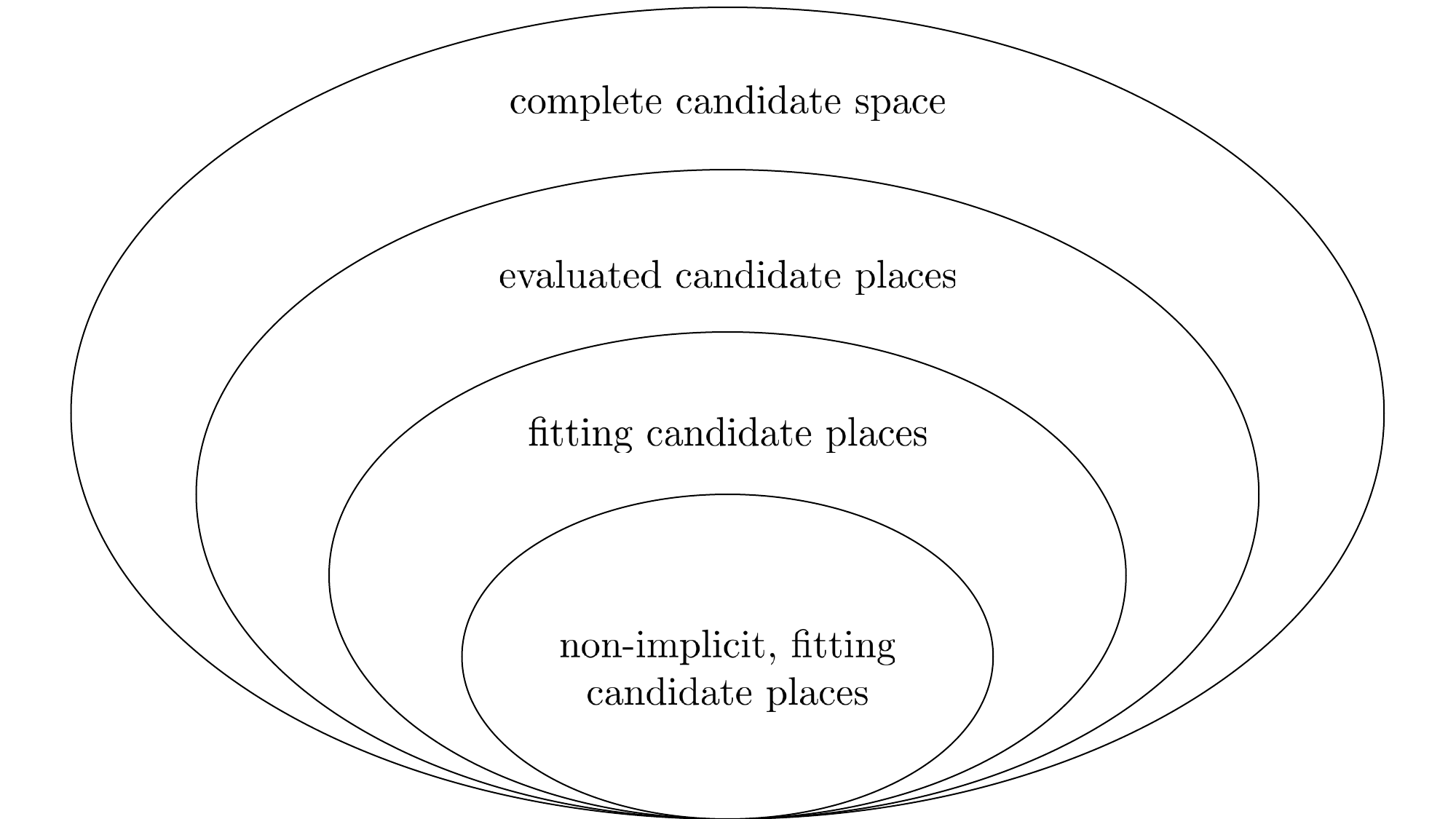}\\
\hfill\\
\scriptsize{
\centering
\begin{tabular}{@{}c|c|cc@{}}
\toprule
\textbf{Set of Places} & \textbf{Teleclaims} & \textbf{RTFM}\\ \midrule
\textbf{complete candidate space}         & 16,769,025  &    16,769,025         \\
\textbf{evaluated candidates}     & 5,119,151 (69 \% skipped)  &  3,066,180  (82 \% skipped) \\
\textbf{fitting places}           & 18,139     &     3,855             \\
\textbf{non-implicit places}      & 8        &        4              \\ \bottomrule
\end{tabular}
}
\caption{Comparison of the sets of places considered within the eST-Miner framework. To exemplify the impact of cutting off unfitting subtrees, we choose the well-known \texttt{Teleclaims} and \texttt{Road Traffic Fine Management} event logs (both have $11+|\{\sactivity, \eactivity\}|=13$  activities) with $\tau=1.0$ (i.e., requiring all places to be perfectly fitting). The table shows the size relations of the computed sets of places to the complete candidate space (for lexicographical activity orderings $>_i$ and $>_o$ in the complete candidate tree).}
\label{fig:placeSetRelations}
\end{figure}

The runtime of the eST-Miner strongly depends on the number of candidate places skipped during the search for fitting places. In Figure~\ref{fig:placeSetRelations}, the subset relations between the sets of places relevant to the eST-Miner framework are visualized. To illustrate the effect of cutting off candidate subtrees, the absolute number of places in these sets is shown for two example applications of the algorithm on two different event logs. While the size of the complete candidate space is the same for both event logs (based on the number of activities), the ordering of the candidates within the tree and properties of the event logs themselves result in different sizes for the subset of candidate places to be evaluated. This example emphasizes the performance gain achieved by skipping subtrees.

\section{Problem explanation and solution framework} \label{sec:fm:framework}
\noindent
One of the major advantages of the eST-Miner is its ability to ignore infrequent behavior patterns during discovery. This is achieved by computing for each place candidate individually the multiset of log traces it allows to replay. Thus, it can consider and combine information contained in all log traces even if not all of these traces are replayable in the constructed Petri net. This is illustrated by the example in  Figure~\ref{fig:examplefilter}: every place in the sequential net allows for a large fraction of the event log to be replayed and the discovered Petri net indeed  represents the main behavior contained in the  log.

\medskip
On the downside, its straightforward insertion of fitting places makes the eST-Miner prone  to introduce deadlocks into the discovered Petri net. Deadlocking constructs are clearly undesirable, since they may result in a model that does not meet the user's expectations with respect to fitness or at least is unnecessary complicated given the behavior it represents. We use the example event log and corresponding Petri net given in Figure~\ref{fig:smallexamplecomplete} to illustrate the issue of deadlocks, discuss its two causes and motivate our twofold solution approach.
\begin{figure}[tbh]
\vspace{2mm}
\centering
\includegraphics[width =0.52\linewidth, trim={0.25cm 0.2cm 0.4cm 4.75cm},clip]{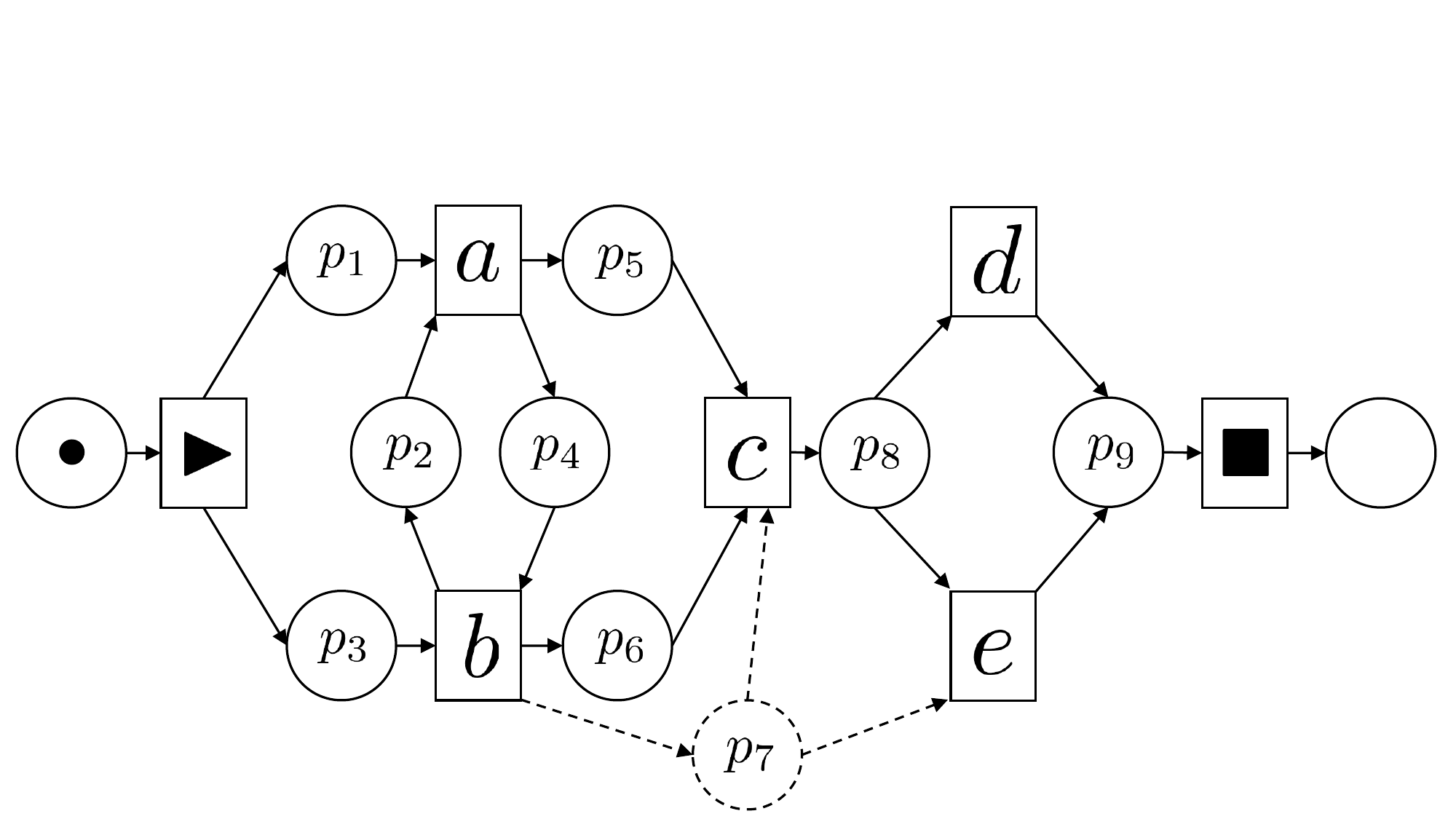}
\caption{To exemplify the problems addressed in this work we use the Petri net above together with the event log $[
\Trace{\sactivity, a,b,c,d,\eactivity}^{35},
\Trace{\sactivity, a,b,c,e,\eactivity}^{5},
\Trace{\sactivity, b,a,c,d,\eactivity}^{55},
\Trace{\sactivity, b,a,c,e,\eactivity}^{5}
]$. {One issue is the insertion of $p_7$ instead of the more desirable place $p_6$. Another problem is posed by the interaction between $p_2$ and~$p_4$.}}
\label{fig:smallexamplecomplete}\vspace*{-2mm}
\end{figure}

\subsubsection*{Deadlocks Related to Place Evaluation}
The first reason for deadlocks in the Petri net discovered by the eST-Miner is caused by how the proposed fitness metrics evaluate places that are connected to infrequent activities.  Both, the absolute and relative fitness metric, evaluate the fitness of a place based on a fraction of log traces it allows to replay. Places that hinder replay of only a few traces are likely to pass the threshold defined by $\tau$. Places that hinder replay of infrequent activities are hardly penalized by this strategy, since blocking of such activities does not significantly lower the fraction of replayable traces. Therefore, it is likely that returned models include places that block infrequent activities.

\medskip
Consider the example event log and the place $p_7 = \inoutpair{b}{c,e}$ in Figure~\ref{fig:smallexamplecomplete}. This place can replay all traces in the event log that do not include the infrequent activity $e$, i.e., a fraction of $0.9$ of all traces. Inserting this place into the Petri net would block $e$ from being executed completely. However, despite occurring only infrequently, activity $e$ has a clear positioning within the control-flow: whenever it occurs in a trace, it replaces execution of activity $d$ after $c$ and before \eactivity. The place $p_6$ can perfectly replay the event log (i.e., replays a fraction of $1.0$ of all traces) but is removed as implicit when~$p_7$ is added.

\medskip
In Section~\ref{sec:fm:fitness}, we aim to tackle this problem by introducing a new fitness metric to evaluate a place. This strategy maintains a high fitness score for places that adequately represent the control-flow patterns of all their connected activities (e.g., $p_6$) while places that block too many traces for any of their connected transitions are assigned a low fitness value (e.g., $p_7$). Additionally, the new metric is defined in such a way that the eST-Miner can still use the evaluation results to improve efficiency by skipping parts of the candidate space.

\subsubsection*{Deadlocks Related to Combination of Places}

The second reason for deadlocks in the Petri net discovered by the eST-Miner is caused by the combination of places that allow for replay of differing parts of the event log.
When a candidate place is evaluated to be fitting based on $L$ and $\tau$, it is simply inserted into the Petri net by connecting it to its uniquely labeled ingoing and outgoing transitions. The resulting Petri net can replay exactly the subset of log traces in the intersection of the traces replayable by all inserted places. Independently of the fitness metric used, this may result in Petri nets that can replay a fraction of traces significantly lower than $\tau$, despite each single place satisfying the threshold.

\medskip
We illustrate the issue using the example in Figure~\ref{fig:smallexamplecomplete}.
Consider the example event log and the candidate places $p_1$ to $p_6$. All of these places can replay at least a fraction of $0.4$ traces in the event log and thus would be inserted into the Petri net for any threshold $\tau \leq 0.4$. However, the resulting Petri net would not be able to replay any trace at all: the places $p2$ and $p3$  will block any execution of activities $a$ and $b$. Two viable solutions exist to obtain a more reasonable model. The first variant, visualized in Figure~\ref{fig:smallexamplecompleteSol1}, would be to include neither $p_2$ nor $p_4$ and allow a behavior where $a$ and $b$ are executed in parallel.

\begin{figure}[tbh]
\vspace*{2mm}
\centering
\includegraphics[width =0.52\linewidth, trim={0.25cm 0.2cm 0.4cm 4.75cm},clip]{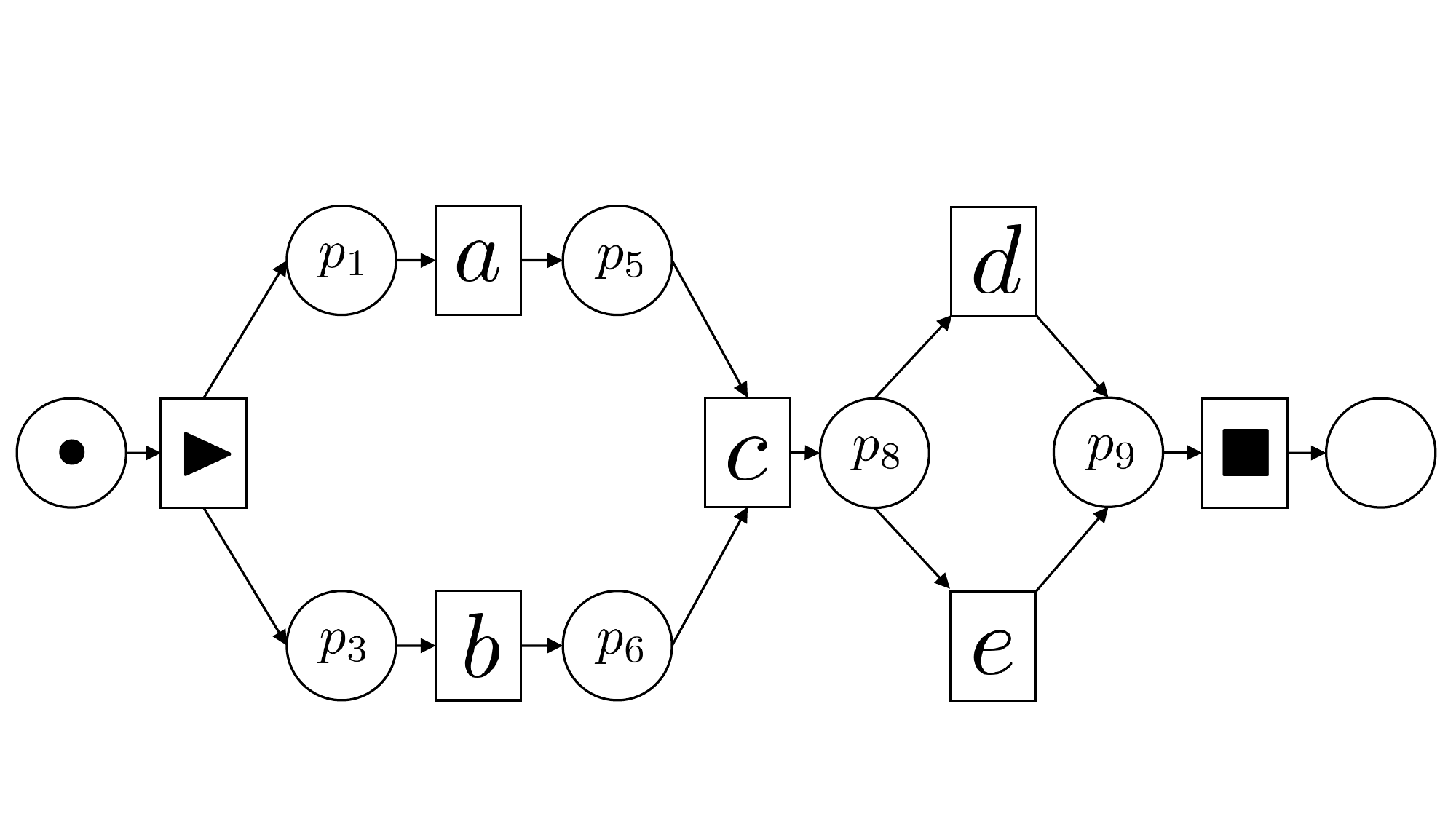}\vspace*{-5mm}
\caption{Consider the event log $[
\Trace{\sactivity, a,b,c,d,\eactivity}^{35},
\Trace{\sactivity, a,b,c,e,\eactivity}^{5},
\Trace{\sactivity, b,a,c,d,\eactivity}^{55},
\Trace{\sactivity, b,a,c,e,\eactivity}^{5}
]$. The Petri net above is able to perfectly replay all traces in the event log.}
\label{fig:smallexamplecompleteSol1}
\end{figure}
Alternatively, one can focus on the most frequent behavior and insert only $p_2$, which would result in the removal of the then implicit places $p_1$ and $p_6$. This second option is illustrated in Figure~\ref{fig:smallexamplecompleteSol2}.

\begin{figure}[tbh]
\centering
\includegraphics[width =0.52\linewidth, trim={0.25cm 0.2cm 0.4cm 4.75cm},clip]{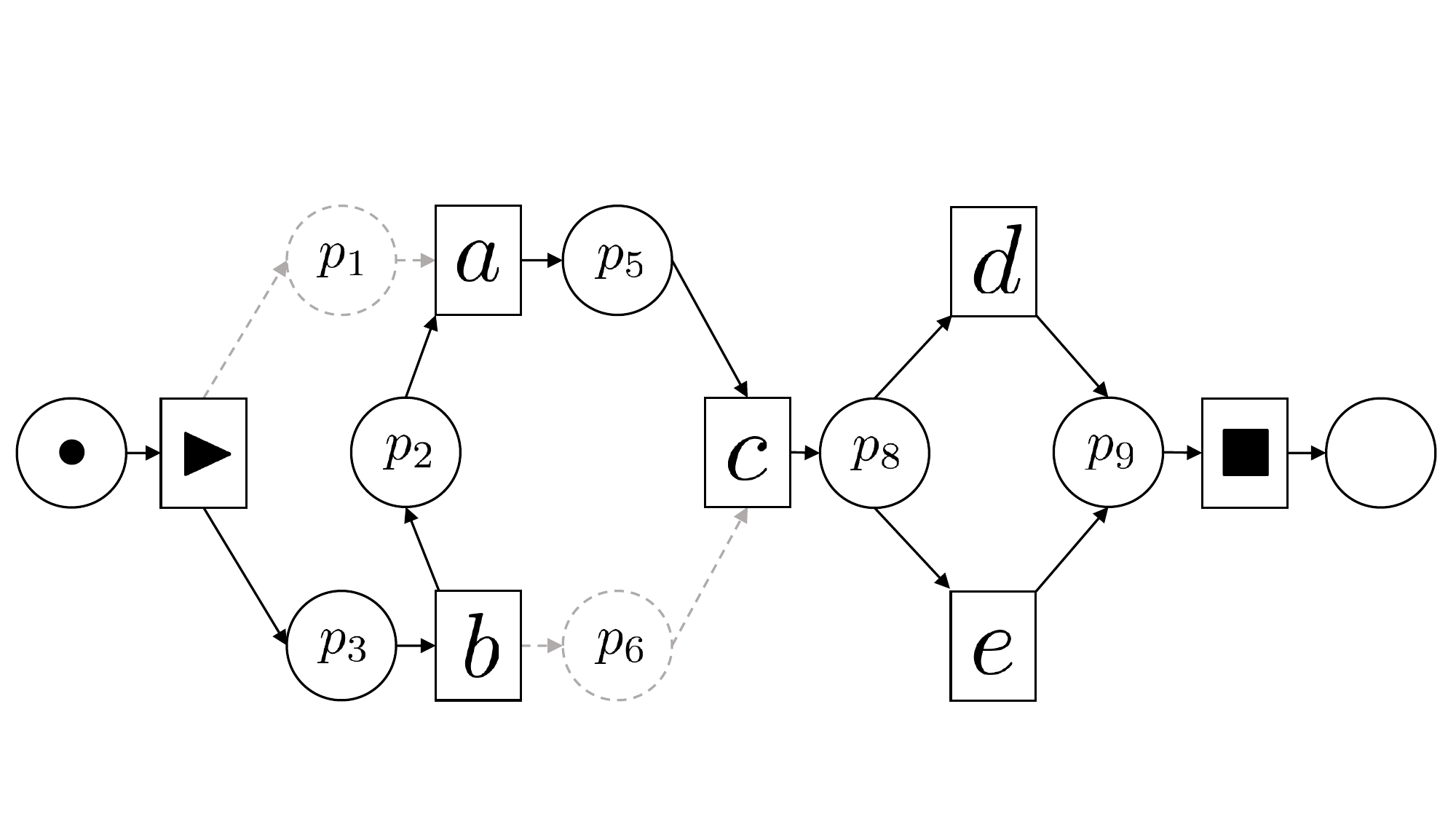}\vspace*{-5mm}
\caption{Consider the event log $[
\Trace{\sactivity, a,b,c,d,\eactivity}^{35},
\Trace{\sactivity, a,b,c,e,\eactivity}^{5},
\Trace{\sactivity, b,a,c,d,\eactivity}^{55},
\Trace{\sactivity, b,a,c,e,\eactivity}^{5}
]$. The Petri net above (implicit places marked with gray, dashed lines) is able to replay the two trace variants $\Trace{\sactivity, b,a,c,d,\eactivity}^{55}$ and $\Trace{\sactivity, b,a,c,e,\eactivity}^{5}$, which constitute a fraction of $0.6$ of the input event log.}
\label{fig:smallexamplecompleteSol2}\vspace*{-2mm}
\end{figure}

To avoid the issue of deadlocks caused by combining places, in Section~\ref{sec:fm:selection} we propose to extend the eST-Mining framework with a place selection strategy that inserts a maximal subset of all fitting places such that the resulting Petri net can replay at least a fraction of $\tau$ traces in the event log and does not contain transitions that can never be fired. With respect to the example above, this corresponds to the second solution (Figure~\ref{fig:smallexamplecompleteSol2}), which maximizes precision while still satisfying the given fitness threshold.

In the following we give an overview of the algorithmic framework including the proposed extensions.

\subsubsection*{Algorithmic Framework}
This work introduces an algorithmic framework extending the eST-Miner with the goal to address the two different causes for deadlocks described previously.
The fitness metric explored in Section~\ref{sec:fm:fitness} refines the evaluation of places.  In particular, it enhances the handling of places connected to rarely occurring activities (e.g., $p_7$ in Figure~\ref{fig:smallexamplecomplete}). To prevent deadlocks stemming from combinations of  places (e.g., $p_2$ and $p_4$ in Figure~\ref{fig:smallexamplecomplete}), we replace the straightforward place insertion of the eST-Miner with a strategy preventing such structures. This new place insertion strategy is detailed in  Section~\ref{sec:fm:selection}. The goal is to select a subset of fitting places such that the returned Petri net is guaranteed to replay at least a fraction of $\tau$ traces in the event log. At the same time, we aim to maximize simplicity and precision of the model.

The proposed algorithm extends the eST-Miner framework by adding and adapting certain submethods. An overview of the framework, including the positioning of our proposed extensions, is given in Figure~\ref{fig:frameworkoverview-highlevel} and described in the following. Details on the extensions will be provided in the corresponding sections.

\begin{figure}[ht]
\centering
\includegraphics[width =0.92\linewidth, trim={1cm 5cm 0 0.5cm},clip]{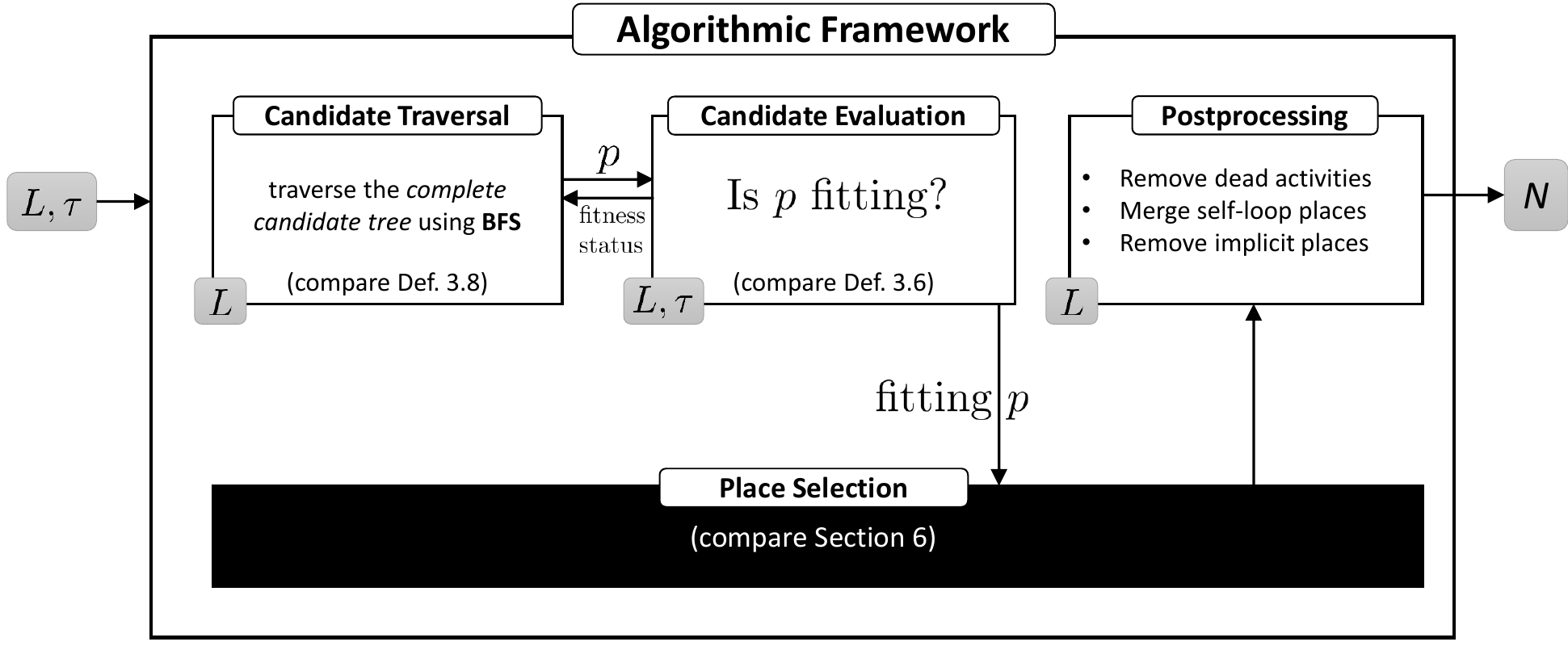}
\caption{{High-level overview of the proposed algorithmic framework. The newly introduced place selection submethod (represented as a black box) is described in detail in Section~\ref{sec:fm:selection}.}}
\label{fig:frameworkoverview-highlevel}
\end{figure}

\medskip
\textbf{Initialization: }
As the only preprocessing steps we add artificial start and end activities (\sactivity and \eactivity) to all traces in the given event log and initialize the output model $N$ as a Petri net with one labeled transition for each activity in the input event log (including \sactivity and \eactivity) and no places except for a marked start place \inoutpair{\emptyset}{\sactivity} and a final place \inoutpair{\eactivity}{\emptyset}.

\medskip
\textbf{Place Insertion: }
The eST-Miner traverses the complete candidate tree (\emph{Candidate Traversal}) to generate place candidates, which are evaluated one by one, possibly allowing to exploit monotonicity properties to skip uninteresting subtrees. Different traversal strategies of the tree are possible, e.g. Depth-First-Search, Breadth-First-Search or guided by heuristics. It is also possible to limit the depth of tree traversal, improving computation time while losing the option to discover any place below that depth. The extension proposed in this paper, with details provided in Section~\ref{sec:fm:selection}, requires a Breath-First-Search traversal strategy and allows limiting the traversal depth.

Every traversed place candidate is evaluated one by one (\emph{Candidate Evaluation}) to measure how fitting they are with respect to the event log $L$ and the noise threshold $\tau$. We can use absolute fitness, relative fitness or the new metric introduced in Section~\ref{sec:fm:fitness}.

The new place selection subroutine (\emph{Place Selection}) following place evaluation is introduced in detail in Section~\ref{sec:fm:selection}. The selected places are added to the Petri net by connecting them to their corresponding uniquely labeled transitions.

\medskip
\textbf{Model Simplification: }
After the selected fitting places have been added, we apply a set of simplification steps (\emph{Postprocessing}). First, the proposed place selection strategy may result in certain activities to be no longer included in the replayable part of the event log. The corresponding transitions are removed from the Petri net (details are provided in Section~\ref{sec:fm:selection}).

Second, all pairs of places of the form $\inoutpair{I \cup A_1}{O \cup A_1}$ and $\inoutpair{I \cup A_2}{O \cup A_2}$ are merged, i.e., replaced by a new place  $\inoutpair{I \cup A_1 \cup A_2}{O \cup A_1 \cup A_2}$. This merging of places with self-loops (with $A_1$ and $A_2$ being the sets of self-looping transitions) simplifies the Petri net without impacting its behavior. Note that this is necessary only when the tree traversal depth is limited, resulting in the two smaller places being discovered while the larger place connected to more transitions is cut off.

As the final postprocessing step we remove implicit places before returning the resulting Petri net.

\medskip
\textbf{Using the Extensions: }The newly introduced place selection strategy focuses on guaranteeing minimal fitness of the complete returned Petri net. However, as illustrated before, places that allow for replay of most of the event log may still have a devastating effect on the fitness of single activities, to the degree of blocking their execution completely. Consequently, the proposed fitness metric and the proposed selection technique complement each other but can also be applied independently.

When applying the place selection strategy with relative or absolute fitness, we expect most infrequent activities to be blocked (dead) and consequently be removed from the returned Petri net. This can be desired by users who want to obtain a model reflecting only the most important behavior or subprocess without applying manual preprocessing. However, since removing activities or traces from the event log that do not satisfy a certain frequency threshold is trivial in commonly used tools, it can be reasonable to expect a user to have applied such preprocessing. For users who have applied filtering such that the event log contains only activities and traces they are actually interested in, an ideal model represents the control-flow of all the activities remaining in the event log. We expect that the new fitness metric allows us to continue to abstract from infrequent behavioral patterns without outright removal of infrequent activities or complete trace variants.

\smallskip
Both scenarios are investigated and discussed in our evaluation.

\section{A new fitness metric}  \label{sec:fm:fitness}
To prevent the introduction of deadlocks as explained in Section~\ref{sec:fm:framework}, in this section we introduce \emph{aggregated fitness} as a new fitness metric. It refines the fitness evaluation of places with the goal to improve the handling of rarely occurring activities: places simply preventing such actvities from being fired will obtain a low fitness score.

\begin{definition}[Aggregated Fitness]\label{def:fitagg}
Let $L \in \multisetset{\universeoftraces}$ be an event log and let $p=\inoutpair{I}{O}$ be a place. We define the fitness metric \emph{aggregated fitness} as 
$$   \fitnessmetric^L_{\textit{agg}}(p)=
            \texttt{min}_{a \in (I \cup O)} \left(
            \frac{|\texttt{act}_L(\{a\}) \bignplus \pFit_L(p)|}{|\texttt{act}_L(\{a\})|}
            \right).$$
\end{definition}
\emph{Aggregated fitness} as introduced in Definition~\ref{def:fitagg} returns the minimal fitness computed for each individual transition connected to the place. To maintain the eST-Miners efficiency we require a fitness metric to allow for cutting off subtrees by exploiting the monotonicity properties of the places (Definition~\ref{theo:monotonicity}), i.e., the metric has to guarantee that there are no fitting places in the skipped subtrees of the complete candidate tree. In the following, we show that the proposed \emph{aggregated fitness} can satisfy this requirement.

\smallskip
To enable exploitation of the monotonicity results for skipping candidate subtrees while employing the noise threshold $\tau$ to filter infrequent behavior patterns, we extend the notions of underfed and overfed (compare Definition~\ref{def:fitting}) from single traces to the whole event log in a suitable way:

\begin{definition}[Aggregated Fitness: Underfed/Overfed Places with Respect to an Event Log] \label{def:fednessFMAgg}
Consider an event log $\log  \in \multisetset{\universeoftraces}$, a noise threshold $\tau \in \mathbb{R}_0^1$ and a place $p=\inoutpair{I}{O}$. If $p$ is unfitting with respect to \log and $\tau$ and \emph{aggregated fitness}, it can be classified further as
    \begin{itemize}
	\item \emph{underfed}, denoted by $\underfed[{\log,\tau}]{\textit{agg}}{\place}$, if and only if
    $\exists a \in I \cup O :
            \frac{|\texttt{act}_L(\{a\}) \bignplus \pNeg_L(p)|}{|\texttt{act}_L(\{a\})|}
            > 1-\tau$,
	\item \emph{overfed}, denoted by $\overfed[{\log,\tau}]{\textit{agg}}{\place}$,  if and only if
    $\exists a \in I \cup O :
            \frac{|\texttt{act}_L(\{a\}) \bignplus \pPos_L(p)|}{|\texttt{act}_L(\{a\})|}
            > 1-\tau$.
    \end{itemize}
\end{definition}
As we have seen before for absolute and relative fitness, Definition~\ref{def:fednessFMAgg} allows for aggregated fitness that a place is underfed and overfed at the same time, given that sufficiently many traces validate the necessary requirements, or is unfitting without being neither underfed nor overfed.

\medskip
To skip parts of the candidate space without missing fitting places, we need to guarantee, that if a place is determined to be underfed (overfed), the same is guaranteed to be true for its blue (red) descendants in the complete candidate tree.

\begin{lemma}[Monotonicity of Aggregated Fitness]\label{lemma:monotonicityagg}
Consider an event log $L  \in \multisetset{\universeoftraces}$ and a noise threshold $\tau \in \mathbb{R}_0^1$.
For three places $p=\inoutpair{I}{O}$, $p'=\inoutpair{I}{O'}$ with $O \subseteq O'$ and $p''=\inoutpair{I''}{O}$ with $I \subseteq I''$the following implications hold:
\begin{align*}
\underfed[{\log,\tau}]{\textit{agg}}{\place} &\implies \underfed[{\log,\tau}]{\textit{agg}}{\place'},\\
\overfed[{\log,\tau}]{\textit{agg}}{\place} &\implies \overfed[{\log,\tau}]{\textit{agg}}{\place''}.
\end{align*}
\end{lemma}

\begin{proof}
First, assume that $\underfed[{\log,\tau}]{\textit{agg}}{\place}$. Then, by definition, there exists some activity $a \in I \cup O$ such that $\frac{|\texttt{act}_L(\{a\}) \bignplus \pNeg_L(p)|}{|\texttt{act}_L(\{a\})|} > 1-\tau$. Because of $O \subseteq O'$ we know that $a \in I \cup O'$. Also, by Theorem~\ref{theo:monotonicity} we have that $\pNeg_L(p) \subseteq \pNeg_L(p')$. Together, this implies that $\frac{|\texttt{act}_L(\{a\}) \bignplus \pNeg_L(p')|}{|\texttt{act}_L(\{a\})|} \geq \frac{|\texttt{act}_L(\{a\}) \bignplus \pNeg_L(p)|}{|\texttt{act}_L(\{a\})|} > 1-\tau$, and thus by definition $\underfed[{\log,\tau}]{\textit{agg}}{\place'}$.

Now, assume that $\overfed[{\log,\tau}]{\textit{agg}}{\place}$. Then, by definition, there exists an activity $a \in I \cup O$ such that $\frac{|\texttt{act}_L(\{a\}) \bignplus \pPos_L(p)|}{|\texttt{act}_L(\{a\})|} > 1-\tau$. Because of $I \subseteq I''$ we know that $a \in I'' \cup O$. By Theorem~\ref{theo:monotonicity} we have that $\pPos_L(p) \subseteq \pPos_L(p'')$. Together, this implies that $\frac{|\texttt{act}_L(\{a\}) \bignplus \pPos_L(p'')|}{|\texttt{act}_L(\{a\})|} \geq \frac{|\texttt{act}_L(\{a\}) \bignplus \pPos_L(p)|}{|\texttt{act}_L(\{a\})|} > 1-\tau$, and thus by definition $\overfed[{\log,\tau}]{\textit{agg}}{\place''}$.
\end{proof}

We have shown that the monotonicity properties of the newly introduced \emph{aggregated fitness} enable the same efficiency optimizations by skipping candidate subtrees as the absolute and relative fitness. Thus, this metric can be easily integrated into the eST-Miners framework. To compare the impact on the place fitness evaluation, we continue to investigate relationships between these fitness metrics.

\begin{lemma}[Relationships Between Fitness Metrics]\label{lemma:relationships}
Let $L  \in \multisetset{\universeoftraces}$ be an event log and let $p=\inoutpair{I}{O}$ be a place. Then the following holds:
$$\fitnessmetric^L_{\textit{rel}}(p) \leq \fitnessmetric^L_{\textit{abs}}(p).$$
\end{lemma}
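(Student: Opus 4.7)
The plan is to express the numerator $|\pFit_L(p)|$ of the absolute fitness as a sum of two pieces: the fitting traces that are activated by $p$, and the fitting traces that are not activated by $p$. The key observation is that any trace $\sigma \in L$ which contains no activity from $I \cup O$ is trivially fitting for $p$, since no transition in $I$ or $O$ ever fires during replay of $\sigma$, so the token count of $p$ remains $0$ throughout and $p$ is neither underfed nor overfed with respect to $\sigma$. Hence the multiset $L \setminus \texttt{act}_L(I \cup O)$ is entirely contained in $\pFit_L(p)$, and we obtain the decomposition
\[
\pFit_L(p) \;=\; \bigl(\pFit_L(p) \nplus \texttt{act}_L(I \cup O)\bigr) \;\uplus\; \bigl(L \setminus \texttt{act}_L(I \cup O)\bigr).
\]

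The next step is to introduce the abbreviations $n = |L|$, $a = |\texttt{act}_L(I \cup O)|$, and $f = |\pFit_L(p) \nplus \texttt{act}_L(I \cup O)|$. By the decomposition above, $|\pFit_L(p)| = f + (n - a)$, so the claim $\fitnessmetric^L_{\textit{rel}}(p) \leq \fitnessmetric^L_{\textit{abs}}(p)$ becomes
\[
\frac{f}{a} \;\leq\; \frac{f + (n - a)}{n}.
\]
Clearing denominators and simplifying reduces this to $f(n - a) \leq a(n - a)$, which follows immediately from the trivial bound $f \leq a$ together with $n \geq a$ (since $\texttt{act}_L(I \cup O)$ is a sub-multiset of $L$).

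The only subtlety is the edge case $a = 0$, where the relative fitness is not defined by the formula in Definition~\ref{def:fit}. In this case no trace of $L$ interacts with $p$, so every trace is fitting for $p$, making the absolute fitness equal to $1$; I would handle this by the natural convention that $\fitnessmetric^L_{\textit{rel}}(p) = 1$ as well (vacuously), so the inequality holds with equality. I do not expect any real obstacle here: the argument is essentially a one-line multiset decomposition followed by elementary algebra, and the only thing to be careful about is making the multiset identity $\pFit_L(p) = (\pFit_L(p) \nplus \texttt{act}_L(I \cup O)) \uplus (L \setminus \texttt{act}_L(I \cup O))$ precise, which rests on the semantic observation about non-activated traces stated above.
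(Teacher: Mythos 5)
Your proposal is correct and follows essentially the same route as the paper: both arguments rest on splitting $L$ into activated and non-activated traces (the paper partitions into fitting-activated, unfitting-activated, and non-activated multisets, which is exactly your $f$, $a-f$, and $n-a$), and both reduce the inequality by elementary algebra to the nonnegativity of $(a-f)(n-a)$, i.e., of $|L_\times|\cdot|L^\boxtimes_\checkmark|$ in the paper's notation. Your explicit justification that non-activated traces are trivially fitting, and your treatment of the degenerate case $a=0$ (which the paper dismisses by restricting to activities occurring in $L$), are both sound.
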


\begin{proof}
With respect to the place $p$, we can partition the event log $L$ into the following disjunct multisets of traces:
\begin{align*}
    L^\boxempty_\checkmark &= \pFit_L(p) \nplus \texttt{act}_L(I \cup O)
        \textit{, the multiset of fitting, activated traces.}\\
    L^\boxtimes_\checkmark &= (\pNeg_L(p) \uplus \pPos_L(p)) \nplus \texttt{act}_L(I \cup O)
        \textit{, the multiset of unfitting, activated traces.}\\
    L_\times &= L \backslash \texttt{act}_L(I \cup O)
        \textit{, the multiset of non-activated traces.}
\end{align*}
Note that $|L^\boxempty_\checkmark|+|L^\boxtimes_\checkmark|>0$, since we consider only activities contained in $L$.
In the following, we use these sets to rewrite absolute and relative fitness and show the claimed relation between them:
\begin{align*}
& & \fitnessmetric^L_{\textit{rel}}(p) &\leq \fitnessmetric^L_{\textit{abs}}(p)\\
&\Leftrightarrow & \frac{|L^\boxempty_\checkmark|}{|L^\boxempty_\checkmark|+|L^\boxtimes_\checkmark|}
                &\leq
                \frac{|L^\boxempty_\checkmark|+|L_\times|}{|L^\boxempty_\checkmark|+|L^\boxtimes_\checkmark|+|L_\times|} \\
&\Leftrightarrow& {|L^\boxempty_\checkmark|} \cdot \left({|L^\boxempty_\checkmark|+|L^\boxtimes_\checkmark|+|L_\times|}\right)
                &\leq
                \left({|L^\boxempty_\checkmark|+|L_\times|}\right) \cdot \left({|L^\boxempty_\checkmark|+|L^\boxtimes_\checkmark|}\right) \\
&\Leftrightarrow& |L^\boxempty_\checkmark|^2 + |L^\boxempty_\checkmark| \cdot |L^\boxtimes_\checkmark|+ |L^\boxempty_\checkmark| \cdot |L_\times|
                &\leq
                |L^\boxempty_\checkmark|^2 +|L_\times| \cdot |L^\boxempty_\checkmark|+|L^\boxempty_\checkmark| \cdot L^\boxtimes_\checkmark|+|L_\times|\cdot |L^\boxtimes_\checkmark| \\
&\Leftrightarrow& 0
                &\leq
               |L_\times|\cdot |L^\boxtimes_\checkmark|
\end{align*}

\vspace*{-7mm}
\end{proof}

We have shown that relative fitness is at least as strict as absolute fitness. Therefore, using relative fitness instead of absolute fitness during candidate place evaluation guarantees the discovered places to be at least as fitting.
Unfortunately, no similar relation holds between $\fitnessmetric^L_{\textit{agg}}(p)$ and $\fitnessmetric^L_{\textit{rel}}(p)$ as illustrated by the example in Figure~\ref{fig:counterexample}. Consider the given place $p$ and event log $L_1$: we have that  $\fitnessmetric_\textit{agg}^{L_1}(p) = \texttt{min}(\frac{90}{90}, \frac{0}{10}) = 0$, which is strictly smaller than $\fitnessmetric_\textit{rel}^{L_1}(p) = \frac{90}{100}$. However, considering the same place $p$ and the event log $L_2$, we have that $\fitnessmetric_\textit{agg}^{L_2}(p) = \texttt{min}(\frac{33}{33}, \frac{33}{66}, \frac{33}{66}) = \frac{1}{2}$ is strictly larger than $\fitnessmetric_\textit{rel}^{L_2}(p) = \frac{33}{99}=\frac{1}{3}$.
\begin{figure}[tbh]
\vspace*{-1mm}
    \centering
    \begin{multicols}{2}
    \scalebox{0.8}{\includegraphics[width =0.4\linewidth, trim={0 10.5cm 23.7cm 0},clip]{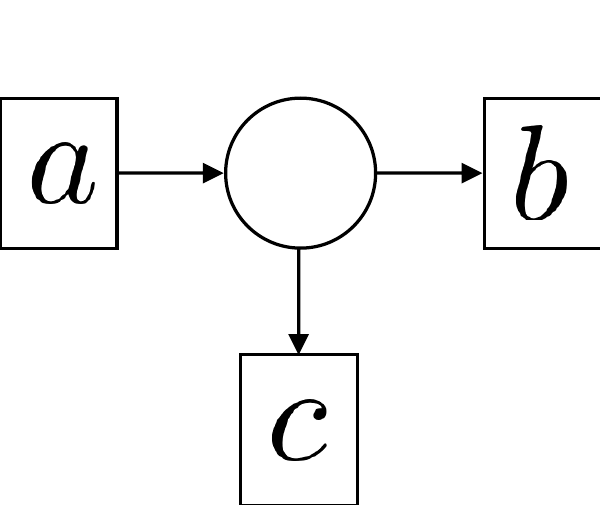} }\\
   \begin{align*}
   &L_1=[{\Trace{a,b}^{90}}, \textcolor{gray}{\Trace{x,y}^{20}}, \textcolor{red}{\Trace{c}^{10}}]\\
   &L_2=[{\Trace{a,b,a,c}^{33}}, \textcolor{gray}{\Trace{x}^{1}}, \textcolor{red}{\Trace{b}^{33}}, \textcolor{red}{\Trace{c}^{33}}]
   \end{align*}
    \end{multicols}\vspace*{-3mm}
    \caption{Traces in $L_1$ and $L_2$ which are unfitting with respect to the place $p=\inoutpair{a}{b,c}$ are marked in red, while traces not activated are colored gray. For $L_1$ we have that $\fitnessmetric_\textit{agg}^{L_1}(p) < \fitnessmetric_\textit{rel}^{L_1}(p)$, while for $L_2$ we have that $\fitnessmetric_\textit{agg}^{L_2}(p) > \fitnessmetric_\textit{rel}^{L_2}(p)$.}
    \label{fig:counterexample}
\end{figure}

To obtain a metric that guarantees to satisfy a given fitness threshold $\tau$ with respect to all three fitness metrics discussed, we define \emph{combined fitness} as a final fitness metric. Combined fitness simply takes the minimum of all three fitness metrics and therefore inherits their monotonicity properties, i.e., when using it to skip parts of the candidate space we maintain the guarantee to not skip any fitting place candidate. More specifically, if a place is fitting with respect to combined fitness, then it is also fitting with respect to each individual fitness metric, and if a place is underfed (overfed) with respect to at least one of the individual metrics then the descendants in the corresponding subtree are also underfed (overfed) with respect to that fitness metric.

\begin{samepage}
\begin{definition}[Combined Fitness of a Place]
Let $L  \in \multisetset{\universeoftraces}$ be an event log and let $p=\inoutpair{I}{O}$ be a place. We define the \emph{combined fitness} of the place $p$ with respect to $L$ as

{\centering
$\fitnessmetric_\textit{comb}^L(p) = \texttt{min}(\fitnessmetric^L_{\textit{abs}}(p),\fitnessmetric^L_{\textit{rel}}(p), \fitnessmetric^L_{\textit{agg}}(p)).$
\par
}
\end{definition}
\end{samepage}

\eject\noindent
{\emph{Combined fitness} is rather restrictive, mostly due to the newly introduced aggregated fitness. A more forgiving aggregation function than the minimum, e.g., such as average, median or harmonic mean, would be interesting to investigate as an alternative. Unfortunately, such aggregations would not allow for the skipping of candidate subtrees in the eST-Miners complete candidate tree, since they do not provide the same monotoncity properties. For example, consider the place $p=\inoutpair{a}{b,c}$ and the event log $L=[\Trace{\sactivity, a,b,b,\eactivity}^1, \Trace{\sactivity, a,c, \eactivity}^1,\Trace{\sactivity,a,d,\eactivity}^1]$. The aggregated fitness of $p$ as defined is $\fitnessmetric^L_\textit{agg}(p)=\texttt{min}(\frac{2}{3}, \frac{0}{1}, \frac{1}{1}) = 0$ and $p$ would be considered underfed because of $b$, allowing us to skip its blue subtree. When taking the average instead of the minimum, the score would be $\texttt{average}(\frac{2}{3}, \frac{0}{1}, \frac{1}{1}) \approx 0.55$. Consider now the place $p'=\inoutpair{a}{b,c,d}$ constructed from $p$ by adding the outgoing activity $d$. The aggregated fitness of $p'$ is $\fitnessmetric^L_\textit{agg}=\texttt{min}(\frac{2}{3}, \frac{0}{1}, \frac{1}{1}, \frac{1}{1}) = 0$ and $p'$ would also be underfed, as expected. However, when taking the average instead of minimum, the score would be $\texttt{average}(\frac{2}{3}, \frac{0}{1}, \frac{1}{1}, \frac{1}{1}) \approx 0.66$. This example clearly illustrates that the monotonicity properties necessary to skip subtrees without losing guarantees do not hold for such aggregation functions. }

The \emph{combined fitness} is the strictest fitness metric introduced in this work and therefore guarantees that any (set of) places satisfying a threshold $\tau$ with respect to combined fitness also satisfies $\tau$ with respect to absolute, relative and aggregated fitness. We can still choose to be less restrictive by lowering the threshold $\tau$. Note that due to Lemma~\ref{lemma:relationships} we do not need to compute absolute fitness in the implementation of combined fitness.
In Section~\ref{sec:fm:eval} we apply the proposed eST-Miner variant with \emph{combined fitness} as well as \emph{relative fitness} as a fitness metric to various event lots to evaluate their impact on real-life data.

\section{Place selection}  \label{sec:fm:selection}
\noindent
The eST-Miner evaluates all candidate places and discovers a set of places fitting the input event log based on the noise threshold $\tau$ and the chosen fitness metric. As illustrated by the simple example in Figure~\ref{fig:smallexamplecomplete}, simply adding all fitting places to a Petri net indiscriminately may result in deadlocks and unnecessary complexity. In this section, we propose an approach aiming to mitigate this problem by selecting a suitable subset of places.

\medskip
To motivate our strategy, we first discuss a more complex example. Consider the event log $L$ and the set of places $p_1$ to $p_8$ in Figure~\ref{fig:motivationalSelection}. For each of the three fitness metrics introduced in Definitions~\ref{def:fit} and~\ref{def:fitagg}, all places in this (incomplete) subset of candidate places are \emph{fitting} with respect to $L$ and $\tau = 0.75$. Inserting all of these places results in the given Petri net $N$, which can replay only the first trace variant in $L$, corresponding to $60$ \% of the traces. The introductory example in Figure~\ref{fig:smallexamplecomplete} illustrates, that the fraction of replayable traces may even decrease to $0$. Such a result is undesirable, since it is unnecessarily complex with respect to the behavior it represents, not free of dead parts and likely to disappoint user expectations with respect to fitness. In the following, we explore strategies to return a deadlock free Petri net that is guaranteed to replay at least a fraction of $\tau$ traces in the event log by inserting only a selection of the discovered fitting places.
\begin{figure}[!ht]
\centering
\begin{adjustbox}{width={\textwidth}}
\begin{tabular}{c|c|c|c|c|c|c|c|c|c|c}
\textbf{{ID}} &\textbf{{Traces in $L$}} &\textbf{{$p_1$}} &\textbf{{$p_2$}} &\textbf{{$p_3$}} &\textbf{{$p_4$}} &\textbf{{$p_5$}} &\textbf{{$p_6$}} &\textbf{{$p_7$}} &\textbf{{$p_8$}} & {$N$} \\
 &  &\textbf{{$(\sactivity|a)$}} &\textbf{{$(a|c)$}} &\textbf{{$(a|b)$}} &\textbf{{$(c|e)$}} &\textbf{{$(b|e)$}} &\textbf{{$(e|\eactivity)$}} &\textbf{{$(b|c,d)$}} &\textbf{{$(d,e|\eactivity)$ }} &\\
\hline
1 & {$\langle \sactivity, a,b,c,e, \eactivity \rangle^{60}$} & $\checkmark $ & $\checkmark$  &$\checkmark$ & $\checkmark$&
         $\checkmark$ & $\checkmark$ &$\checkmark$ &$\checkmark$ & $\checkmark $ \\
2 & {$\langle \sactivity, a,b,d, \eactivity \rangle^{20}$} & $\checkmark $ &
      $ \bf{\times} $  &$\checkmark$ & $\checkmark$ & $ \bf{\times} $ & $ \bf{\times}$ &$\checkmark$ &$\checkmark$& $ {\bf \times}$\\
3 & {$\langle \sactivity, a,c,b,e, \eactivity \rangle^{15}$} & $\checkmark $ & $\checkmark$  &$\checkmark$ & $\checkmark$&$\checkmark$ &$\checkmark$ &$ {\bf \times}$ &$\checkmark$ &$ \usym2613$\\
4 & {$\langle \sactivity, a,b,d,e, \eactivity \rangle^{5}$} & $\checkmark $ & $ {\bf \times}$  &$\checkmark$ & $\usym2613$&$\checkmark$ &$\checkmark$ &$\checkmark$ &$ {\bf \times}$ &$ {\bf \times}$\\
\end{tabular}
\end{adjustbox}\\
\vspace*{4mm}
\includegraphics[width =0.68\linewidth]{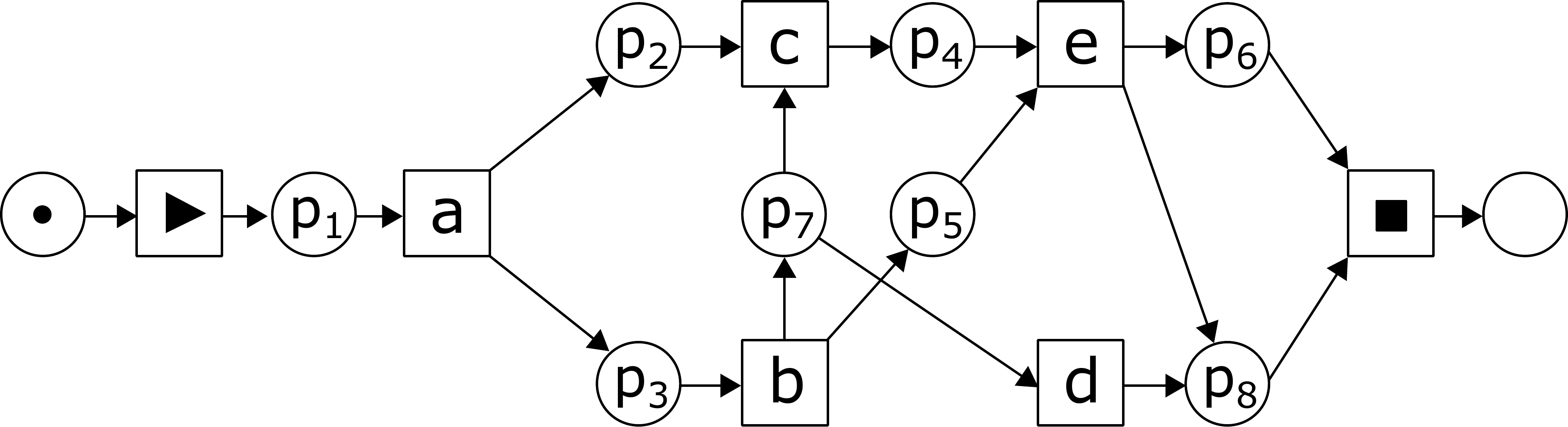}
\caption{{The table indicates for each of the given trace variants and candidate places whether the place can replay that trace variant. The Petri net $N$ is created by inserting all these places and can replay only the first trace variant, i.e., $0.6 \cdot |L| =60$ traces.}}
\label{fig:motivationalSelection}\vspace*{10mm}
%

\centering
\includegraphics[width =0.9\linewidth]{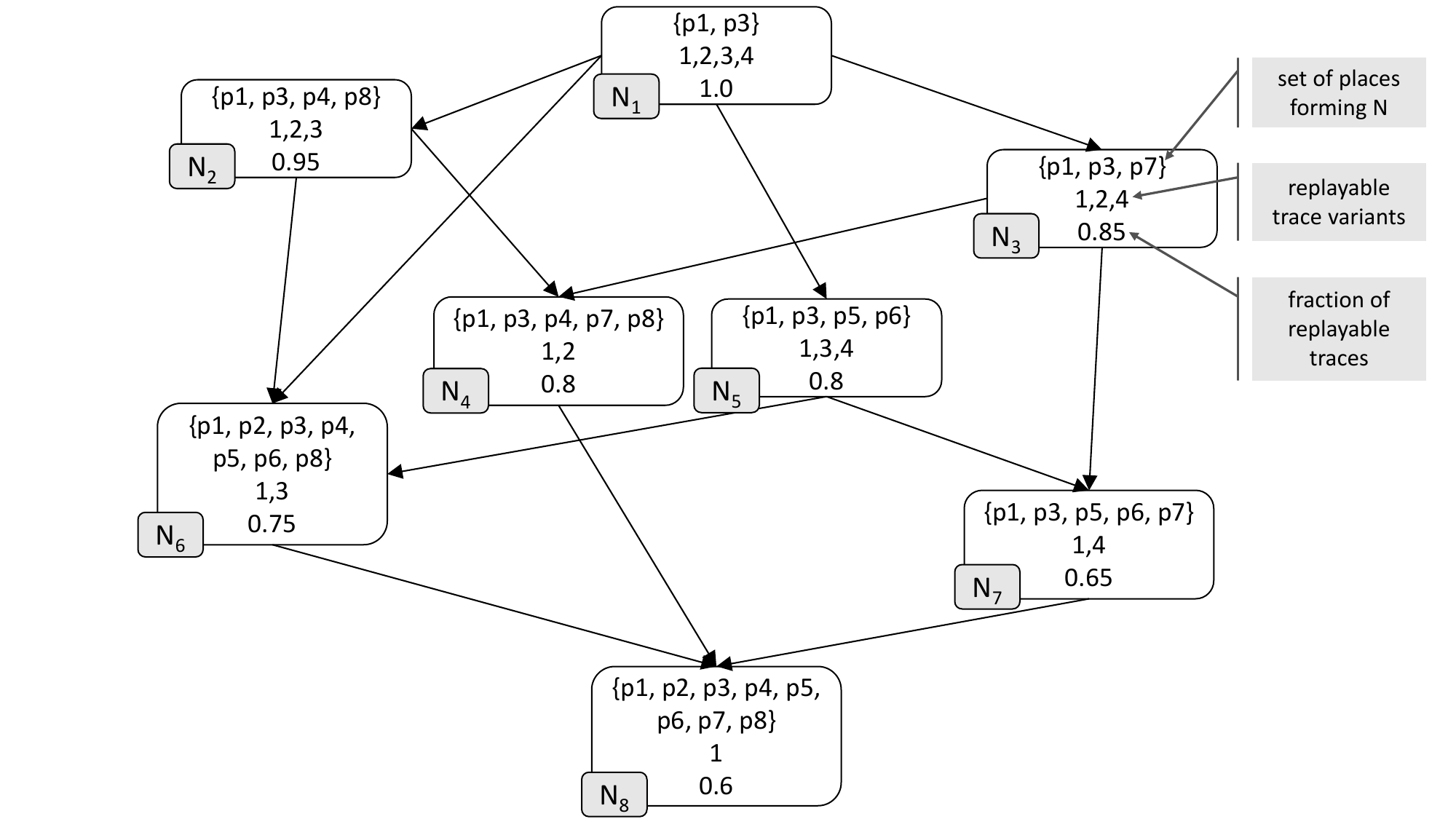}
\caption{{Consider the set of places given in Figure~\ref{fig:motivationalSelection}. This figure shows all possible combinations of these places such that adding any other place to the corresponding Petri net would decrease the number of replayable log traces. Each set of places, i.e., Petri net, is annotated with the list of trace variants it can replay and the corresponding fraction of log traces. Note that $N_8$ corresponds to the Petri net shown in Figure~\ref{fig:motivationalSelection}. }}
\label{fig:combinations}\vspace*{-2mm}
\end{figure}

Consider the set of all fitting places discovered during the place candidate evaluation of the eST-Miner. Selecting an adequate subset of these places, such that also the resulting Petri net as a whole satisfies the noise threshold $\tau$,  is challenging for a variety of reasons. While trivial solutions exist, such as not inserting any place at all, defining what the best solution would be is not straightforward, since several maximal subsets of places satisfying this requirement may exist. These subsets may differ, for example, in size, fraction of replayable traces, place complexity (number of connected activities) or subjective 'interestingness' measures. Figure~\ref{fig:combinations} illustrates all maximal sets of places that can be built from the example places given in Figure~\ref{fig:motivationalSelection}. These sets are maximal in the sense that adding any of the other places would decrease the number of replayable log traces. Depending on the choice of the minimal fitness threshold~$\tau$, the solution considered optimal by the user is unclear.

\medskip
Furthermore, even if we have somehow obtained a notion of optimality, first collecting all fitting places and then computing an optimal solution can quickly become unfeasible, both in terms of time complexity and memory requirements. This is due to the huge number of fitting but potentially implicit places discovered by the eST-Miner. Unfortunately, knowledge of which places are contained in the Petri net is required to identify implicit places reliably.

To circumvent the issue of time and space complexity, we combine the eST-Miners sequential place evaluation procedure with a guided greedy place selection approach, which is described in detail in Subsections~\ref{sec:placeclass} and~\ref{sec:framework}. In the absence of a clear notion of optimality, we propose and investigate several heuristic selection strategies and evaluate their impact on different quality aspects of the returned Petri net. In this paper, we consider fitness, precision, and simplicity as desirable properties.

A model with high fitness can express most of the behavior seen in the event log. High precision means that the model does not allow for a lot of behavior not seen in the event log. Simplicity refers to the general readability and understandability of the model and is therefore inherently subjective. In this work, we approximate simplicity by assuming that fewer arcs indicate a simpler model.  While generalization (avoiding overfitting with respect  to the sample process executions given in the event log) is desirable, additional information would be required to evaluate it, which is why we consider it  outside the scope of this work.  The concrete metrics used in this work to evaluate quality will be briefly discussed in Section~\ref{sec:fm:eval}. For further details, we refer the reader to~\cite{PMbook, bookconformance}.

\subsection{Place classification}\label{sec:placeclass}

When making the decision to insert a place into the model, this reduces the possible choices we can make later on: the place constrains the behavior of the model and only places with a sufficiently large intersection of replayable traces can be added to the model at a later point. Consider the example place combinations in Figure~\ref{fig:combinations} with a fitness threshold of $\tau = 0.75$ and assume that the model already contains the places $p_1$ and $p_3$. If the next fitting place we discover is $p_7$, and we immediately insert it into the Petri net, we can no longer discover a Petri net including, for example, $p_6$ without violating our fitness constraint. Such choices may prevent us from discovering a more desirable solution. Therefore, we aim to capture the main behavior of the log by using heuristics to postpone, or even disallow, the addition of very restrictive places.

\medskip
To this end, we introduce a new parameter $\delta$ which is our main tool to guide the choice of places while balancing fitness, precision, and simplicity. This $\delta$ specifies the largest acceptable reduction in replayable traces when adding a place to the model. Optionally, $\delta$ can be adapted for each place individually using an adaption function \texttt{adapt} to favor certain places over others, according to the user's preferences. Favored places can be added earlier, despite being rather restrictive, while other places will be added only if they do not constrain the behavior too much. Such adaption strategies are discussed in Section~\ref{sec:fm:strategies}.

Definition~\ref{def:pclassify} formalizes the use of $\tau$, $\delta$ and \texttt{adapt} to decide for a newly discovered fitting place $p$, whether the algorithm should \emph{add} it to the selected set of places $P$, \emph{keep} it for later re-evaluation or \emph{discard} it forever.
\begin{definition}[Place Classification Using $\tau, \delta$ and \texttt{adapt}] \label{def:pclassify}
Consider an event log $L  \in \multisetset{\universeoftraces}$ over the set of activities $A \in \universeofactivities$, a set of places $P \subseteq \powerset{A} \times \powerset{A}$, and a place $p \in \powerset{A} \times \powerset{A}$. We use parameters $\tau \in \mathbb{R}_0^1$ and $\delta \in \mathbb{R}_0^1$, and a function $\texttt{adapt} \colon \mathbb{R}_0^1 \times (\powerset{A} \times \powerset{A}) \rightarrow \mathbb{R}_0^1$ to categorize $p$ as follows:
\begin{align*}
	\texttt{keep}_{L, \tau}(P, p) &= |\pFit_L(P) \nplus \pFit_L(p)| \geq \tau \cdot |L|\\
	\texttt{add}_{L, \tau, \delta}(P, p) &= \texttt{keep}_{L, \tau}(P, p) \\
		&\wedge |\pFit_L(P)| - |\pFit_L(P) \nplus \pFit_L(p)| \leq \texttt{adapt}(\delta, p) \cdot |L|
\end{align*}
If $\texttt{keep}_{L, \tau}(P, p)$ does not hold, $p$ will be discarded.
\end{definition}
In the following subsection, we give an overview of the complete approach.

\subsection{Selection framework}\label{sec:framework}

\begin{figure}[!b]
\centering
\includegraphics[width =0.98\linewidth, trim={0 0.4cm 0 0.5cm},clip]{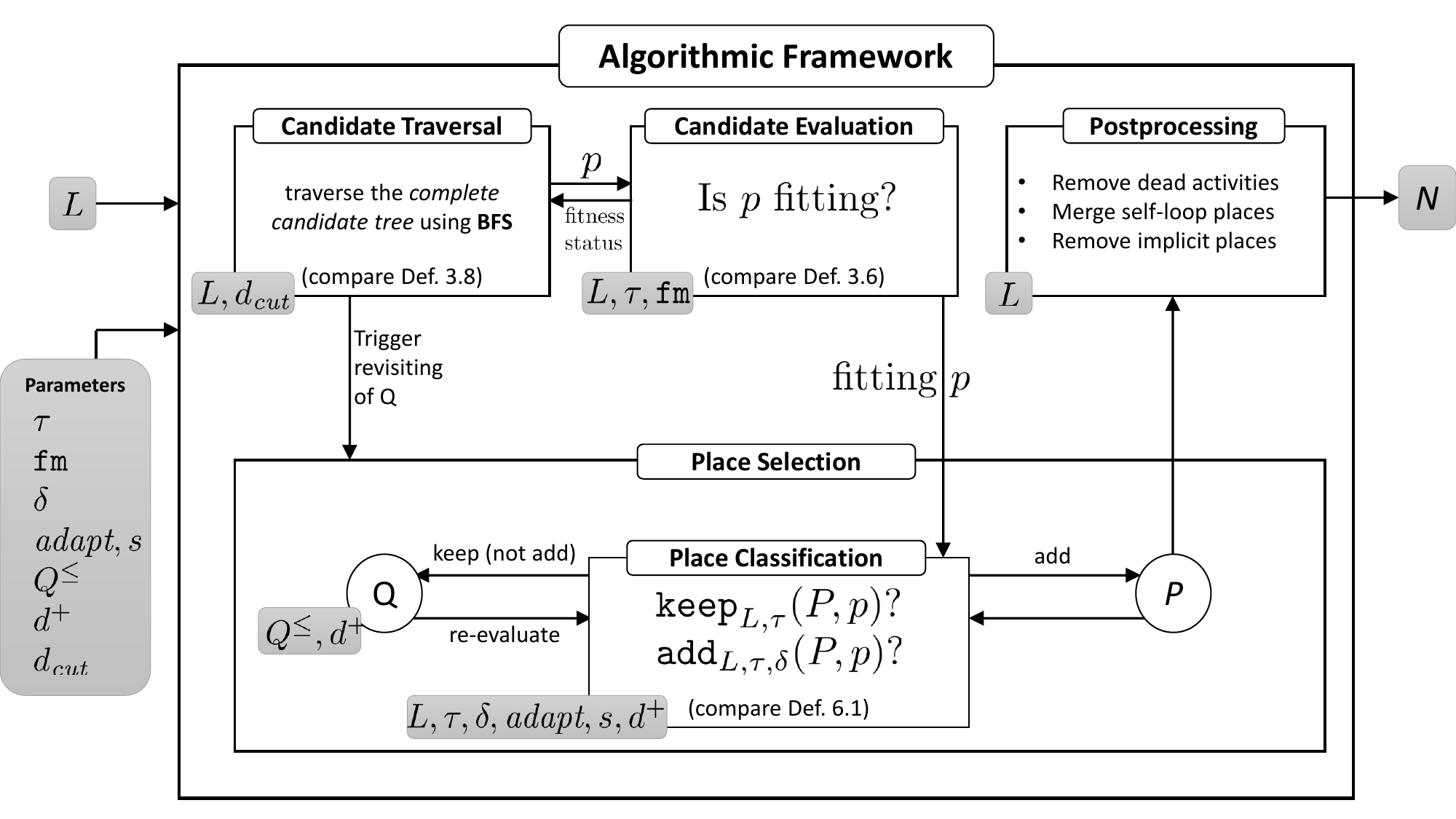}
\caption{Overview of the presented approach, including input, output, and parameter use.}
\label{fig:frameworkoverview}
\end{figure}
An overview of our approach, indicating inputs, outputs and use of parameters, is given in Figure~\ref{fig:frameworkoverview}.
Since we consider the simplicity of the model to be a desirable property, we set the eST-Miner to traverse the \emph{complete candidate tree} using BFS rather than DFS. Thus, places with few connected activities are evaluated first and can therefore be inserted into the model at an earlier stage. Furthermore, we limit the traversal depth to places with $d_\textit{cut}$ activities. This can not only significantly improve the running time but also balances precision and fitness on the one hand and simplicity and generalization on the other hand by preventing the algorithm to discover complex places located at deeper levels of the complete candidate tree. While some complex behaviors may need such places to be expressed exactly, inserting them into the Petri net is usually devastating to readability and in practical applications their constraints can often be sufficiently approximated by much simpler places.

\medskip
After the eST-Miner framework evaluates a candidate place $p$ to be fitting with respect to a threshold $\tau$, we use the \emph{classification functions} given in Definition~\ref{def:pclassify} to decide whether the place should immediately be added to the output Petri net, discarded forever or kept for re-evaluation. In the latter case it is added to a queue $Q$ of potential places which is sorted according to how interesting a place is.
In our case, we sort first by place simplicity (few transitions are better) and second by the number of replayable log traces. Optionally, the length of $Q$ can be limited to $\Qlength$ with the least interesting places being dropped if necessary, thus trading an improvement in time and space complexity for potentially lowered model quality.

Whenever the BFS candidate traversal reaches a new level in the \emph{complete candidate tree}, we revisit the potential places queue $Q$ and re-evaluate its places using the classification functions before proceeding with the traversal {of more complex places. This makes sense to promote simplicity in particular together with the sigmoid delta adaption function proposed in Section}~\ref{sec:fm:strategies}, which gives preference to places less complex than indicated by the current tree level. After reaching the lowest tree level, the approach continues to iterate over the potential places queue repeatedly {$d^+$} times. This can be relevant for delta adaption functions {depending on place complexity and current tree depth}, as exemplified in Section~\ref{sec:fm:strategies}: with each iteration the current tree depth parameter of the adaption functions is incremented (hence the term \emph{artificial tree depth}) allowing for gradually increased leniency also for the most complex places evaluated.

\medskip
Finally, the resulting Petri net $N=(A, P)$ may contain dead parts: activities which occur only in the subset of log traces that are no longer replayable by $N$ are not guaranteed to be executable at all. Therefore, as a final step, we detect and remove all activities that do not occur in $\pFit_L(P)$  together with their connected arcs. Before returning this Petri net as final output, the eST-Miner framework removes implicit places, and merges self-looping places when applicable (see Section~\ref{sec:fm:framework}).

{The approach returns a Petri net $N$ satisfying the following guarantees.}

\begin{theorem}[Guarantees]
Given an event log $L  \in \multisetset{\universeoftraces}$ over activities $A$, parameters $\tau \in \mathbb{R}_0^1, \delta \in \mathbb{R}_0^1, s \in \mathbb{N}, \Qlength \in \mathbb{N}, d^{+} \in \mathbb{N}, d_\textit{cut} \in \mathbb{N}$ and an adaption function
$\texttt{adapt} \colon \mathbb{R}_0^1 \times (\powerset{A} \times \powerset{A}) \rightarrow \mathbb{R}_0^1$, the eST-Miner extended with the place selection strategy given in Definition~\ref{def:pclassify} computes a Petri net {${N\!=\!(A', P)}$} with $A'\! \subseteq\! A$, such that $N$ can replay at least $\tau \cdot |L|$ traces from $L$ and every transition in $A'$ can be fired at least~once.
\end{theorem}

\begin{proof} The algorithm initializes the Petri net $N_0 = (A, \{\inoutpair{\emptyset}{\sactivity}, \inoutpair{\eactivity}{\emptyset}\})$ with one transition for each activity in \log. There is no place constraining the behavior of $N_0$ except for \{\inoutpair{\emptyset}{\sactivity} and \inoutpair{\eactivity}{\emptyset}\}, which allow for $\sactivity$ and \eactivity to be fired exactly once each. According to our trace definition (Definition~\ref{def:atl}), these activities do occur exactly once in each trace. Thus, $N_0$ can replay at least $\tau \cdot |L|$.
The method then iteratively adds places. According to Definition~\ref{def:pclassify} a place $p$ can be added to a Petri net $N_{1}=(A_{1}, P_{1})$ only if $\texttt{add}_{L, \tau, \delta}(P_{1}, p))$ holds, which requires  $\texttt{keep}_{L, \tau}(P_{1}, p)$ to hold. This requirement ensures that $|\pFit_L(P_{1}) \nplus \pFit_L(p)| \geq \tau \cdot |L|$, i.e., the Petri net with the place $p$ added, $N_{2}=(A_{1}, P_{1} \cup \{p\})$ can replay at least $\tau \cdot |L|$ traces from \log.

Since the requirement must hold for all added places, no further transitions are added and every transition that is not part of the replayable traces is removed, the final returned Petri net $N=(A', P)$ can replay at least $\tau \cdot |L|$ traces from \log, $A' \subseteq A$ holds and every transition can be fired at least once (when replaying a trace including the corresponding activity).
\end{proof}

 {Furthermore, if the length of $Q$ is not limited, and thus a place $p$ is discarded only if it does not satisfy $\texttt{keep}_{L, A, \tau}(P,p)$, the set of places $P$ is maximal in the sense that no place from the set of evaluated candidate places can be added without violating the fitness constraints imposed by the chosen heuristics.}

\subsection{Selection strategies} \label{sec:fm:strategies}

As  illustrated by the example place combinations in Figure~\ref{fig:combinations}, the order of places added can have a significant impact on the selected subset of places and, thus, the behavior of the returned Petri net. The presented framework allows for a wide range of heuristic functions, optimizing the place selection individually towards a variety of possible user interests. Thus, obviously, the examples presented in the following are by far not exhaustive and entirely different choices are possible, but they can serve as a starting point for an investigation of the impact and suitability of our approach.

\medskip
The \emph{sigmoid} delta adaption function aims to promote fitness and simplicity. The \emph{constant} and \emph{no delta} delta adaption functions are introduced to be used as a baseline in our experiments, towards which the effect of the sigmoid delta adaption function can be compared.

\subsubsection*{No Delta}
As a baseline to compare to, we introduce a function that ignores the parameter $\delta$ and simply adds every fitting place to the Petri net as soon as it is discovered. Within the framework, this can be formalized to
$$\texttt{adapt}_\textit{noDelta}(\delta, p) = 1.$$

\subsubsection*{Constant Delta}
Trivially, we can choose not to adapt delta at all. We simply add every fitting, non-discarded place that does not reduce the replayable traces from the log by a fraction of more than delta. Formally, this resembles the identity function:
$$\texttt{adapt}_\textit{constant}(\delta, p) = \delta$$

\subsubsection*{Sigmoid Delta Adaption}
While optimizing towards fitness as well as simplicity, we can balance the two forces in different ways based on the details of the adaption function. In this work investigate the \emph{sigmoid} delta adaption as defined in the following and visualized in Figure~\ref{fig:linearsigmoid}.

\begin{figure}[tbh]
\centering 
\includegraphics[width =0.47\linewidth, trim={0 0 12.5cm 0}, clip]{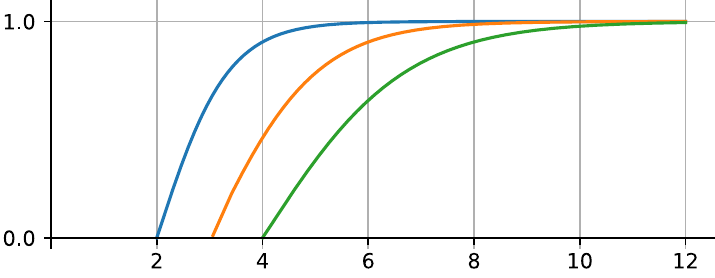}\vspace*{-1mm}
\caption{{Example behavior of the delta adaption modifier $\textit{mod}^{s,d}_\textit{sigmoid}(\delta)$ for three places with $2, 3$, and $4$ activities, respectively. The x-axis indicates the current tree depth $d$, with $d_\textit{max}=12$, while the y-axis indicates the modifier to be multiplied with $\delta$.}}
\label{fig:linearsigmoid}
\end{figure}

Given a set of activities $A$, the maximum depth of the complete candidate tree is $d_\textit{max}=2|A|$. Furthermore, let $d \in [2, 3, \dots , d_\textit{max}]$ be the current depth of the candidate tree traversal. We call $s \in \mathbb{N}\backslash \{0\}$ the \emph{steepness modifier}.

\medskip
Consider a place $p = \inoutpair{I}{O}$.
We define the \emph{sigmoid delta adaption function} as follows:
\begin{align*}
\texttt{adapt}^{s, d}_\textit{sigmoid}(\delta, \inoutpair{I}{O}) &= \delta \cdot \textit{mod}^{s, d}_\textit{sigmoid}(\inoutpair{I}{O})\\
&=\delta \cdot \left( \frac{2}{1+e^{\left( (-1) \cdot \frac{s}{(|I|+|O|)} \cdot (d-(|I|+|O|)) \right)}}-1 \right)
\end{align*}

The adaption function multiplies a modifier with the parameter $\delta$. Figure~\ref{fig:linearsigmoid} illustrates the behavior of this modifier for three example places of varying complexity. The \emph {sigmoid} delta adaption is designed to prefer simple places. When a place originates from the currently traversed level of the \emph{complete candidate tree}, i.e., it is among the most complex places currently available, the function will evaluate to $0$, meaning that only a perfectly fitting place can be added. The simpler the evaluated place is compared to the current tree level, the larger the result of the function and the more unfitting traces are allowed, with $\delta$ marking the maximal returnable value. The modifier grows fast in the beginning, but stagnates towards the end, preferring the simpler places more strongly, while the more complex places are (roughly) equally undesirable. The steepness modifier $s$ controls the intensity of the growth.

\section{Experimentation and evaluation} \label{sec:fm:eval}
We performed several experiments where we run the proposed algorithm with a wide variation of combinations of possible parameter settings on several event logs with different properties. The focus of this paper is on avoiding deadlocks and dead parts in the returned models, which is why the main focus of this evaluation is on the quality of the discovered models. In the end of this section we will briefly discuss the performance, in particular showing that the extension proposed in this work does not add significantly to the running time of the eST-Miner.

 \begin{table}[!b]
 \vspace*{-1mm}
\caption{{List of logs used for the evaluation. The upper part lists real-life logs while the lower part shows artificial logs. Logs are referred to by their abbreviations.\vspace*{-1mm}}}
\centering
\scalebox{0.9}{
 \begin{tabular}{l|l|r|r|l}
\textbf{\small {Log Name}} &\textbf{\small {Abbreviation}} &	\textbf{\small {Activities}} &	\textbf{\small {Trace Variants}} &	\textbf{\small {Reference}}\\
\hline
Sepsis &	\algorithmname{Sepsis} &		16 & 846 &~\cite{log:sepsis}\\
Road Traffic Fine Management &	\algorithmname{RTFM}	& 11	& 231 &~\cite{log:rtfm}\\
\hline
Teleclaims &	\algorithmname{Teleclaims}	& 	11 & 12  &~\cite{PMbook}\\ 
Order-Handling & \algorithmname{Orders} & 8 & 9 &~\cite{log:oderhandling} \\ 
\end{tabular} }
\label{tab:logs}

\vspace*{5mm}
\caption{Overview of the parameter settings used in our experimentation. The combinations result in $6300$ runs for each event log. The value ranges were chosen based on a smaller set of preliminary experiments, aiming to investigate a wide range of parameter settings on the one hand, while on the other hand avoiding unnecessary complexity resulting from variation without notable impact. For example, for our inputs no places were discarded for $\Qlength \geq 10000$. For $d^+$ we chose a very low and a very high value to evaluate whether it had any impact at all. Finally, for the chosen event logs $d_\textit{cut} = 5$ has shown to be sufficient to find complex structures with the standard eST-Miner, i.e., increasing the traversed tree depth increases computation time but has no strong impact on model quality.\vspace*{-1mm}}
\centering
\scalebox{0.9}{ \begin{tabular}{p{1.6cm}|p{2.9cm}|p{7.35cm}}
\textbf{Parameter} &\textbf{Used Values} &	\textbf{Purpose}\\
\hline
$\tau$ &	$0.3$, $0.4$, $0.5$, $0.6$, $0.7$, $0.8$, $0.9$ &	\small{Defines the minimal fraction of log traces that every place, as well as the final Petri net, must be able to replay.}\\
\hline
$\delta$ &	$0.05$, $0.1$, $0.15$, $0.2$, $0.25$ &	\small{Used to define the allowed reduction in log traces replayable by $N$ when adding a place.}\\
\hline
\fitnessmetric &	$\fitnessmetric_\textit{rel}, \fitnessmetric_\textit{comb}$ &	\small{Defines the fitness metric to be used.}\\
\hline
$\texttt{adapt}$ &	$\texttt{adapt}_\textit{noDelta}$, $\texttt{adapt}_\textit{constant}$, $\texttt{adapt}_\textit{sigmoid}$ & \small{The delta adaption function used to guide the heuristics.}\\
\hline
$s$ &	$1$, $2$, $3$, $4$, $5$ &	\small{The steepness of the increase of the adaption function (relevant for $\texttt{adapt}_\textit{sigmoid}$ only).}\\
\hline
$\Qlength$ &	$100$, $1000$, $10000$ &	\small{The maximal number of places stored in $Q$.}\\
\hline
$d^+$ &	$0$, $10$ &	\small{Artificial tree depth to re-evaluate places in $Q$ after end of tree traversal (relevant for $\texttt{adapt}_\textit{sigmoid}$ only).}\\
\hline
$d_\textit{cut}$ &	$5$ &	\small{Stop candidate traversal after the specified tree level.}\\
\end{tabular} }
\label{tab:params}
\end{table}

\subsection{Experimental setup}
Table~\ref{tab:logs} provides an overview of the event logs used in our experimentation. \texttt{Sepsis} has a relatively high number of different trace variants, all of which have comparable frequencies, with the most frequent trace variant making up only $3.33$~\% of the event log. Activities are repeated often within a trace, which must lead to looping behavior within a Petri net with uniquely labeled transitions.  \texttt{RTFM} is rather large, with a moderate variety of trace variants and activities. Both for variants and activities some are very frequent while others are quite infrequent. \texttt{Teleclaims} is an established artificial log useful for testing discovery of various control-flow structures. With \texttt{Orders} we can demonstrate the algorithm's ability to discover complex control flow structures, as well as the option to abstract from rare behavior.

\medskip
For each event log we perform $6300$ runs of the algorithm with varying combinations of the different parameters, as specified in Table~\ref{tab:params}. Note that we keep the order of place candidate traversal fixed for all runs.

To investigate the qualitative impact of the proposed heuristics, we need to fix the order of candidate evaluation to prevent effects due to different evaluation orders. For easy reproducibility, we use a lexicographical ordering  based on activity names.
The purpose is to focus on the effect of the different parameters, and possibly derive which of them are the most relevant for the discovery of certain models and whether certain (combinations of) settings are preferable.

\medskip
Our experiments with combined fitness have shown that only for the \texttt{Sepsis} log there are place candidates that score lower on relative fitness than aggregate fitness. However, this happened for very few parameter combinations and then only for at most $2$ candidate places. Thus, for the logs evaluated, we can conclude that the choice between using $\fitnessmetric_\textit{agg}$ and $\fitnessmetric_\textit{comb}$ does not have a significant impact.

\subsubsection*{Evaluation Metrics}

Several approaches exist to measure model quality with respect to an event log. In the following, we give a brief overview of the techniques applied in the context of this evaluation.

\medskip
We use alignment-based fitness to measure how well the behavior in the log is represented by the model. Alignments take an event log and compute the minimal number of insertions and deletions needed to make the traces fitting with respect to a given model, then normalize this value using the worst-case edit distance.
Consider the example Petri net in Figure~\ref{fig:qmetricsexample} and the trace \Trace{\sactivity, a, b, c,\eactivity}. This trace can be aligned to the Petri net by, for example, removing $b$ and $c$ or by removing $a$. Removing $a$ needs $1$ edit operation, which is the optimum in this case. The worst-case edit distance would require us to remove every activity from the trace ($5$) and insert an activity for every transition that needs to be fired to obtain the shortest path through the model ($3$). This results in an alignment-based fitness of $1-\frac{1}{5+3}$ for the example trace. For details, we refer the reader to \cite{alignments}.
\begin{figure}[hbt]
    \centering  
  \includegraphics[width=0.4\textwidth, trim={1.2cm 3.7cm 4.75cm 4cm}, clip]{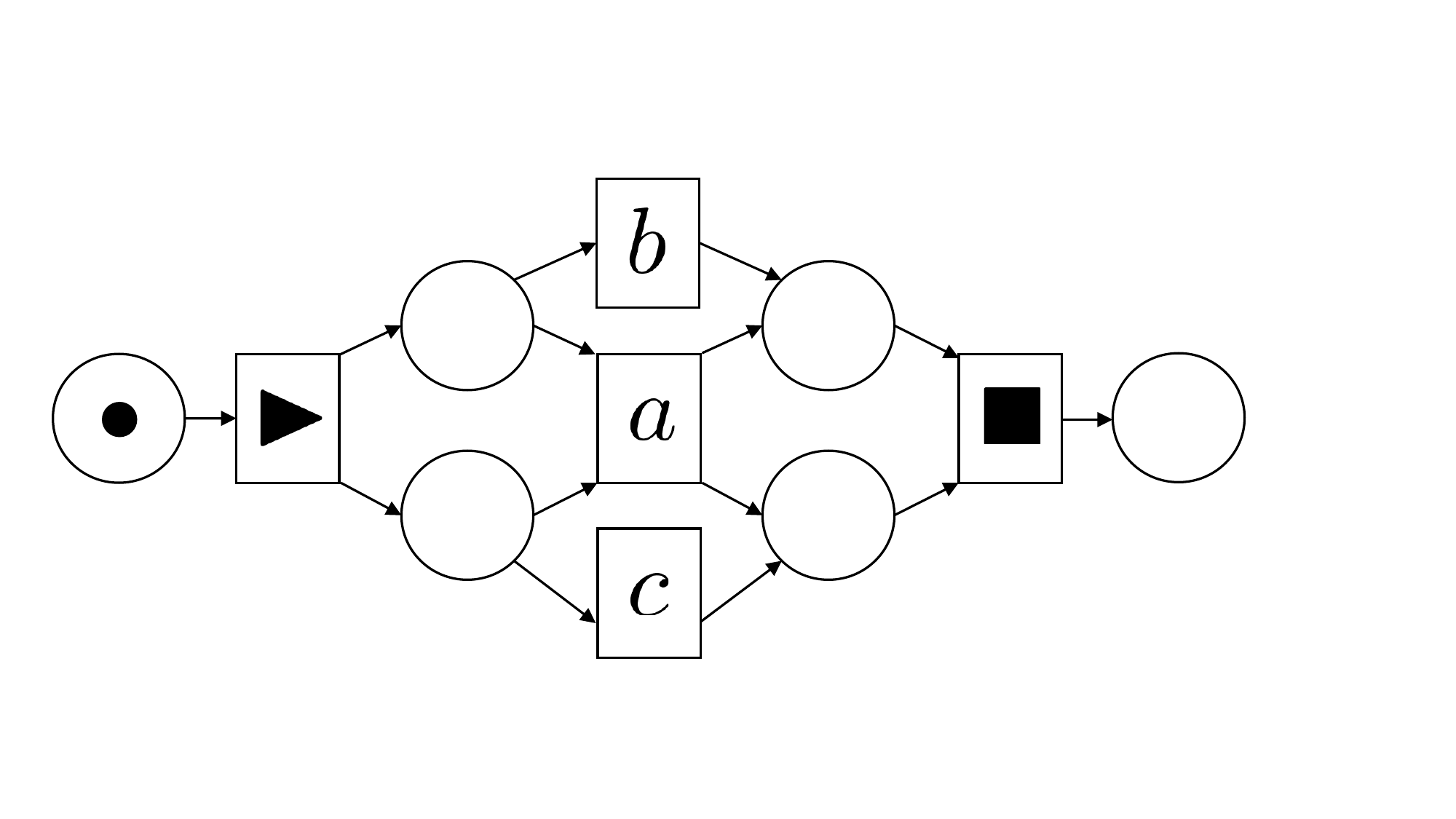}
    \caption{An example model used to illustrate the applied quality metrics.}
    \label{fig:qmetricsexample}
\end{figure}

\noindent Evaluating whether the model is sufficiently restrictive, i.e., does not allow for (much) behavior outside of the process behavior is more challenging. The event log represents a set of example process executions and cannot be expected to be complete, making it hard to reliably evaluate process-related precision. Another problem is the potentially infinite language of the model which can make a straightforward comparison infeasible. Therefore, metrics that approximate precision are commonly used. We choose a precision metric based on escaping edges. This metric compares the number of activities enabled in the model with the number of activities actually executed during the replay of the event log.
Consider the example Petri net in Figure~\ref{fig:qmetricsexample} and the traces \Trace{\sactivity, a,\eactivity} and \Trace{\sactivity, c,b,\eactivity}. We replay all prefixes of traces in the event log and take note of the frequency of that prefix, the number of enabled transitions at the end of the prefix and the number of transitions fired after the prefix. In our example, for both traces in the beginning only $\sactivity$ is enabled and consequently fired ($2 \cdot 1$ enabled, $2 \cdot 1$ fired). This enables three activities $a, b$ and $c$, of which we fire $a$ in the first trace and $c$ in the second trace ($2 \cdot 3$ enabled, $2 \cdot 2$ executed). Now, for the first trace \eactivity is the only one executed and fired ($1 \cdot 1$ enabled, $1 \cdot 1$ fired). For the second trace, after firing $c$ only $b$ is enabled and fired ($1 \cdot 1$ enabled, $1 \cdot 1$ fired) followed by the final activity, \eactivity, being enabled and fired ($1 \cdot 1$ enabled, $1 \cdot 1$ fired). This results in a precision of $\frac{2 \cdot 1+2\cdot 2+1\cdot 1+1 \cdot 1+1 \cdot 1}{2 \cdot 1+2 \cdot 3+1 \cdot 1+1 \cdot 1+1 \cdot 1}=\frac{9}{11}$, i.e., the number of transitions actually fired during replay divided by the number of enabled transitions. For details on escaping edges precision we refer the reader to \cite{ETConformance}.

\medskip
Recall, that for most real-life logs not all quality aspects can be perfectly satisfied at the same time (e.g., high fitness often entails low precision and the other way around). They should be balanced according to the user's needs, i.e., the choice of the best model depends on its purpose.
In our evaluation, we consider two target audiences. Some users prefer to directly apply a discovery algorithm to the unmodified log, expecting the algorithm to filter infrequent activities automatically and return a model that focuses on the main behavior in the event log. In the absence of a clear use-case, it is common to score process models based on the harmonic mean of fitness and precision ($F_1$-Score), which ensures that the resulting aggregated quality score reflects both metrics, i.e., a low value in one of the metrics cannot be obscured by a high value in the other. Without further information, we assume that such a user to be interested in obtaining a model with a high $F_1$-Score.

\begin{definition}[$F_1$-Score]
    Given an event log $L \in \multisetset{\universeoftraces}$ and a Petri net $N$ with a fitness score of $\texttt{fitness}_L(N)$ and a precision score of $\texttt{precision}_L(N)$, we define the $F_1$-score as

           $$F_1(L,N)=  \left\{ \begin{matrix*}[l]
    		0, \; \;  \textit{ if } \texttt{fitness}_L(N) = 0 \vee \texttt{precision}_L(N) = 0\\[0.5em]
  		 \cfrac{2}{\cfrac{1}{\texttt{fitness}_L(N)}+\cfrac{1}{\texttt{precision}_L(N)}}, \; \; \textit{ otherwise} \end{matrix*} \color{white} \right)$$
\end{definition}

\noindent
In general, users applying our algorithm have the option to perform some basic preprocessing beforehand. In particular, they may use available functionality \cite{Prom} to remove infrequent activities and infrequent trace variants they are not interested in. In such cases, all infrequent activities and traces remaining in the event log can be considered to be of interest and should be reflected in the discovered model, while still abstracting from exceptional behavior patters. Without further information, we assume that such a user is interested in obtaining a model that scores high in fitness and precision but ideally also contains all activities from the event log. Therefore, we introduce the the metric of \emph{activity-coverage} and define the \texttt{HM}-Score as the harmonic mean of fitness, precision and activity-coverage.

\begin{definition}[Activity-Coverage]
    Let $L \in \multisetset{\universeoftraces}$ be an event log over the set of activities $A$ and let $N=(A', P)$ be a Petri net with $A' \subseteq A$, {i.e., $A'$ is the subset of log activities included in the Petri net $N$.} We define \emph{activity-coverage} of $N$ with respect to $L$ as

   $$\texttt{activity-coverage}_L(N)=\frac{|A'|}{|A|}$$
\end{definition}
A value of $1$ indicates that all activities in the event log are also part of the Petri net, while a value of $0$ indicates that no activity in the event log is part of the Petri net. With respect to the event log $L=[\Trace{\sactivity, a, \eactivity}, \Trace{\sactivity, c,b, \eactivity}, \Trace{\sactivity, d, e, d, e, a,\eactivity}]$, the example Petri net in Figure~\ref{fig:qmetricsexample} achieves an activity-coverage of $\frac{3}{5}.$

\begin{definition}[HM-Score]
    Given an event log $L \in \multisetset{\universeoftraces}$ and a Petri net $N$ with a fitness score of $\texttt{fitness}_L(N)$ and a precision score of $\texttt{precision}_L(N)$, we define the \texttt{HM}-score as\vspace*{-2mm}

           $$\texttt{HM}(L,N)= \left\{ \begin{matrix*}[l]
    		0, \; \; \textit{ if } \texttt{fitness}_L(N) = 0 \vee \texttt{precision}_L(N) = 0 \vee \texttt{activity-coverage}=0\\[0.5em]
  		\cfrac{3}{\cfrac{1}{\texttt{fitness}_L(N)}+\cfrac{1}{\texttt{precision}_L(N)}+\cfrac{1}{\texttt{activity-coverage}_L(N)}}, \; \; \textit{ otherwise} \end{matrix*} \color{white} \right)\vspace*{1mm}$$
\end{definition}
The simplicity of a model is highly subjective and a variety of factors may contribute to the readability of a model. Not all of these can be easily represented by numbers, e.g., the way a Petri net is plotted. Even though several metrics have been suggested for it, they are restricted to only some aspects contributing to model understandability and no generally agreed upon solution has become the standard yet. Therefore, we forgo an extensive evaluation of simplicity and focus on a straightforward metric based on the average number of arcs per transition. Not only is this metric closely related to the strategy of the presented approach to prefer places that have few arcs, but also existing research confirms the general relevance of this aspect \cite{simplicity}.
\begin{definition}[Simplicity] \label{def:simplicity}
Given a Petri net $N=(A, P)$, we define simplicity as the average number of arcs per transitions, i.e.,

$$\texttt{simplicity}(N)=\frac{\sum_{\inoutpair{I}{O} \in P} |I|+|O|}{|A|}.$$
\end{definition}
The example Petri net in Figure~\ref{fig:qmetricsexample} achieves a simplicity of $\frac{14}{5}\approx 2.8$. On the downside, this measure of simplicity does not map to the interval between $0$ and $1$ and is therefore not directly comparable with the other quality metrics (and thus not included in the aggregated score). On the upside, it returns an objective value that allows for further subjective interpretation as the reader sees fit.

\subsection{Qualitative analysis}
Some interdependencies between the model quality aspects are to be expected and confirmed by our results. Removing a transition from a Petri net reduces the behavior of the net and therefore has a negative effect on fitness and a positive effect on precision. It is important to keep in mind the exact metrics used to measure fitness and precision to avoid misinterpretation of the results: while alignments are quite forgiving with respect to missing infrequent transitions (they simply assign a penalty whenever the corresponding activity occurs in the event log), escaping edges based precision is sensitive with respect to transitions that are frequently enabled without being fired (e.g. in the case of parallelism).

\medskip
One of our major goals in this work is to be able to avoid the removal of transitions based on infrequency. By achieving this goal, we obtain models that contain transitions which are far more often enabled than fired. Such models score significantly worse with respect to precision than models without those infrequent transitions, while fitness remains comparable. Therefore, in addition to evaluating models using the \texttt{HM}-Score, we focus on models with perfect activity-coverage by evaluating them separately from the complete set of results. Finally, we include some representative models to support the interpretation of the number-based evaluation. However, one needs to keep in mind that the choice of the best model always depends on the user's needs and models scoring high in the context of this general evaluation do not necessarily  represent the best model for every application.

\subsubsection*{Overview of Quality Results}

In Figure~\ref{fig:resultsComplete}, an overview of the quality results of the $6300$ models generated for each log is given. Fitness and simplicity remain rather stable, with  fitness being generally high and simplicity values clustering between $2$ and $3$ arcs per transition on average, which we consider a good value. On the other hand, precision and activity-coverage, and by extension the harmonic means \texttt{HM} and $F_1$, vary a lot for the discovered models. This clearly indicates that the choice of parameters has a strong impact on these quality aspects.

\begin{figure}[!ht]
\vspace*{-1mm}
    \centering 
    \includegraphics[width=0.45\linewidth]{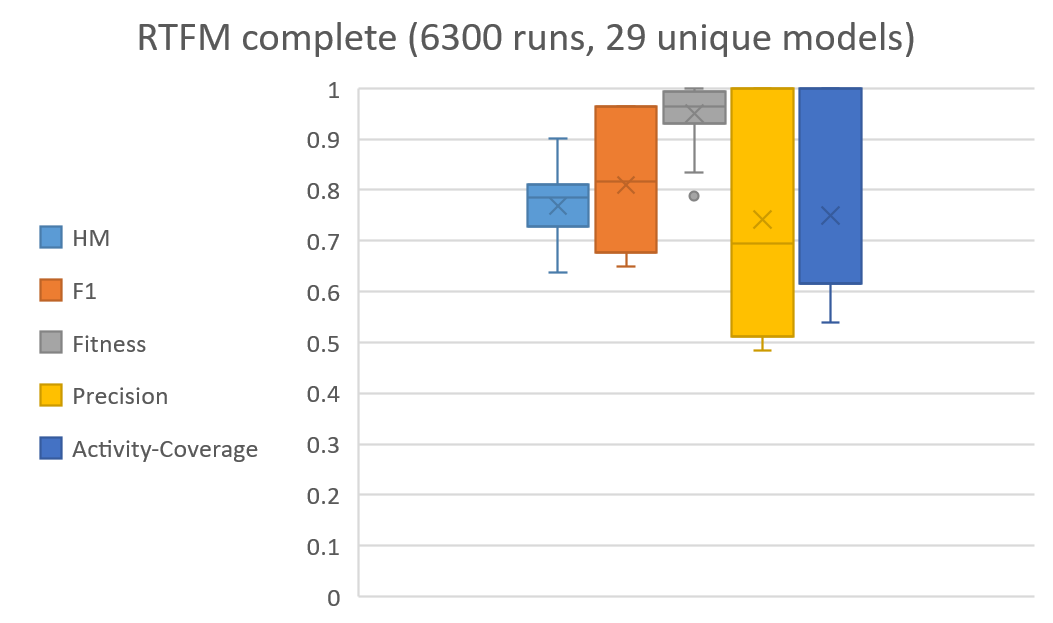}
    \includegraphics[width=0.45\linewidth]{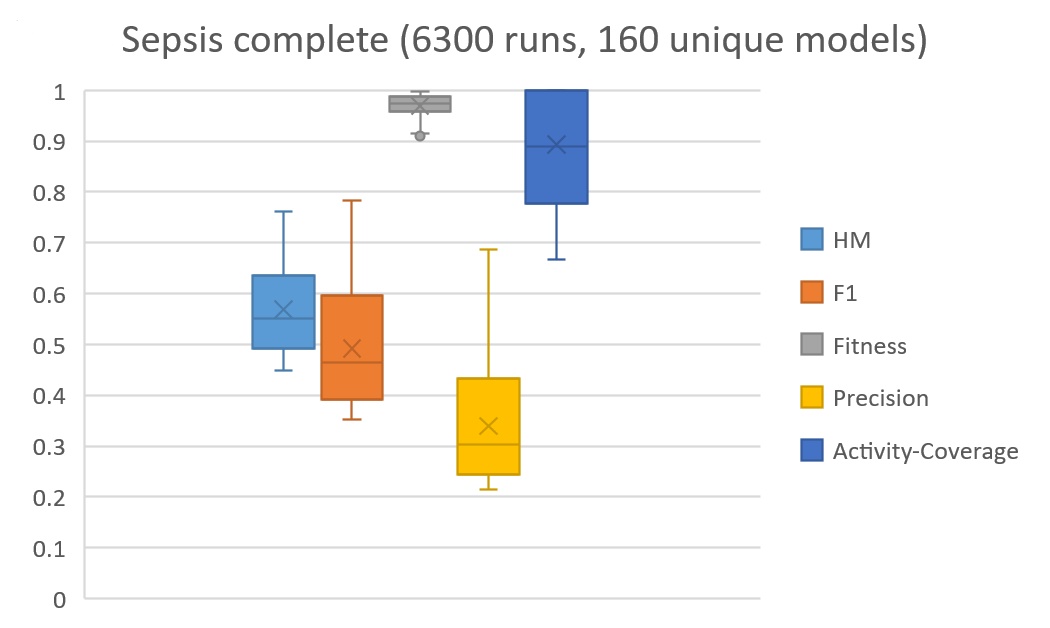}\\
    \includegraphics[width=0.45\linewidth]{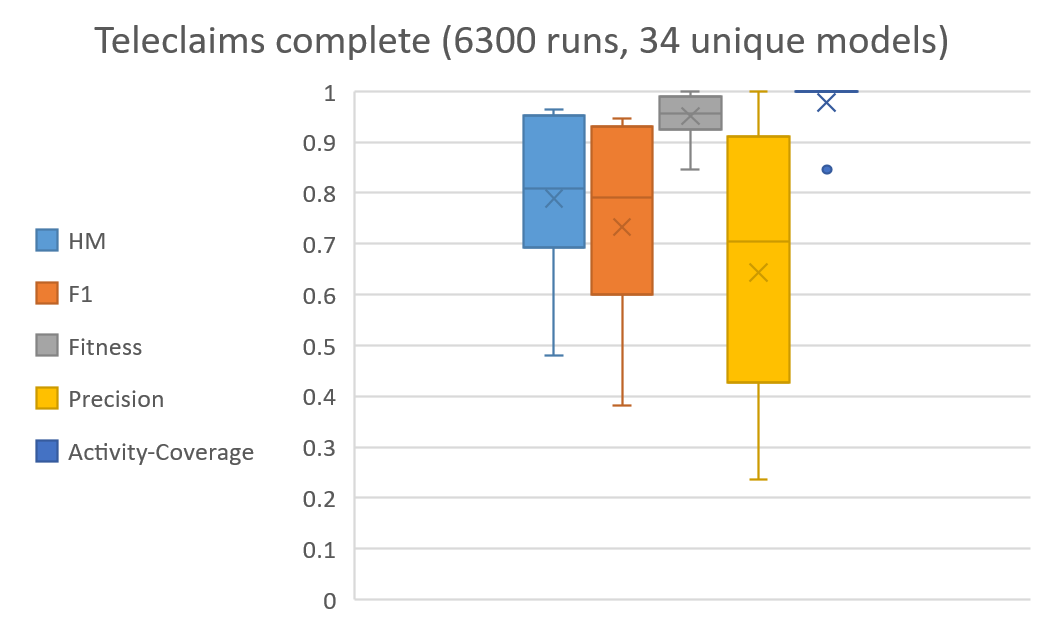}
    \includegraphics[width=0.45\linewidth]{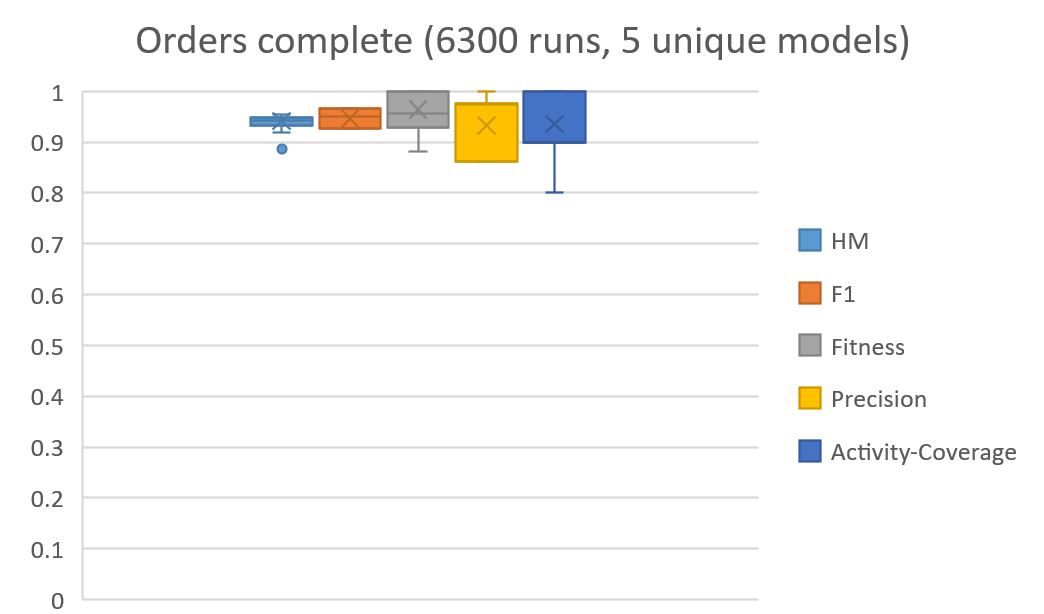}\\
    \centering
    \includegraphics[width=0.65\linewidth]{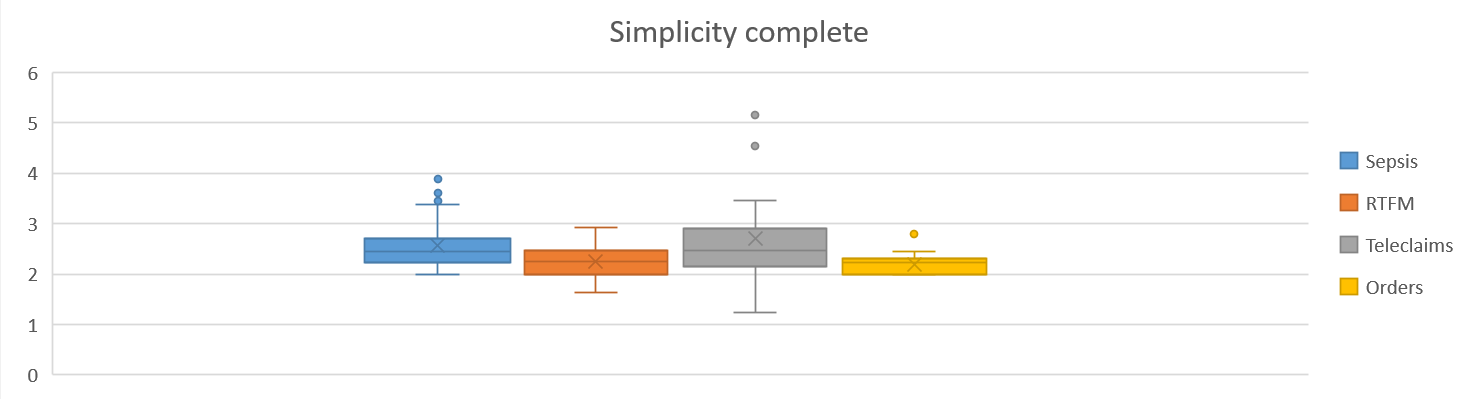}\vspace*{-1mm}
    \caption{Overview of model quality results for all $6300$ runs (complete) with varying parameters but fixed candidate traversal order.}
    \label{fig:resultsComplete}\vspace*{1mm}
\end{figure}

\begin{table}[!h]
\centering
\caption{{Overview of the qualitative results of selected models discovered during our experimentation. For each log we show the quality results of the model with the maximal \texttt{HM} value, the maximal $F_1$ value, and the model with the maximal \texttt{HM} out of the subset of models with perfect activity-coverage (aCov), as well as the frequency with which this model was discovered. Additionally, we provide scores for the Inductive Miner infrequent (IMf, default settings) and the eST-Miner with $\tau=1.0$. Green background marks comparatively large values. All of these models are also visualized in Figures}~\ref{fig:Ordersmodels} to \ref{fig:teleclaimsmodels} at the end of the section.}
\includegraphics[width =1.0\linewidth, trim={0.64cm 13cm 0.65cm 1.9cm},clip]{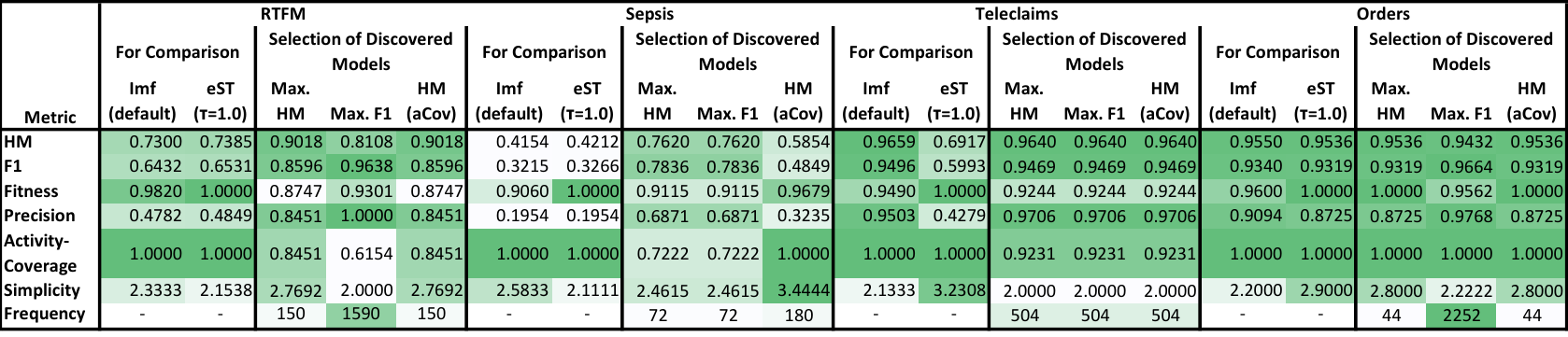} 
\label{tab:fm:modelsOverview}\vspace*{-2mm}
\end{table}


\medskip
While we discovered only $5$ unique models, i.e., models with different behavior, for \texttt{Orders}, there were $29$ unique models found for \texttt{RTFM}, $34$ for \texttt{Teleclaims} and $160$ for \texttt{Sepsis}. The quality results and frequencies of a selected subset of the discovered Petri nets are given in Table~\ref{tab:fm:modelsOverview}. Additionally, we provide the same results for the models discovered by the Inductive Miner infrequent (IMf) with default settings as implemented in ProM~\cite{Prom} and the models discovered by the eST-Miner with $\tau=1.0$ (comparable to region theory results). Our approach can discover models with \texttt{HM} and $F_1$ scores that clearly outperform IMf with default settings as well as eST-Miner with $\tau=1.0$ on the two real-life event logs. For the two artificial event logs results are comparable. A detailed comparison of these models follows at the end of this section.

\subsubsection*{Discussion of Dead Transitions}
Since we are interested in abstracting from infrequent behavioral patterns without outright removal of infrequent activities or complete trace variants, we take a closer look at the discovery of models that include all activities from the event log. During our experimentation, we discovered $2$ different such models for \texttt{Orders}, $31$ for \texttt{Teleclaims}, $29$ for \texttt{Sepsis} and $9$ for \texttt{RTFM}.

\begin{figure}[!h]
 \vspace{2mm}
    \centering
    \includegraphics[width=0.45\linewidth]{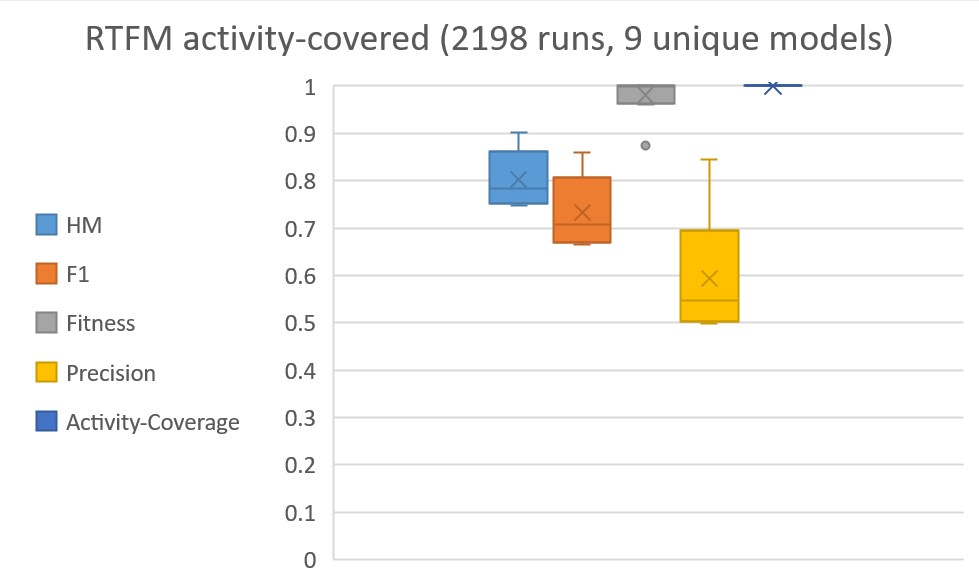}
    \includegraphics[width=0.45\linewidth]{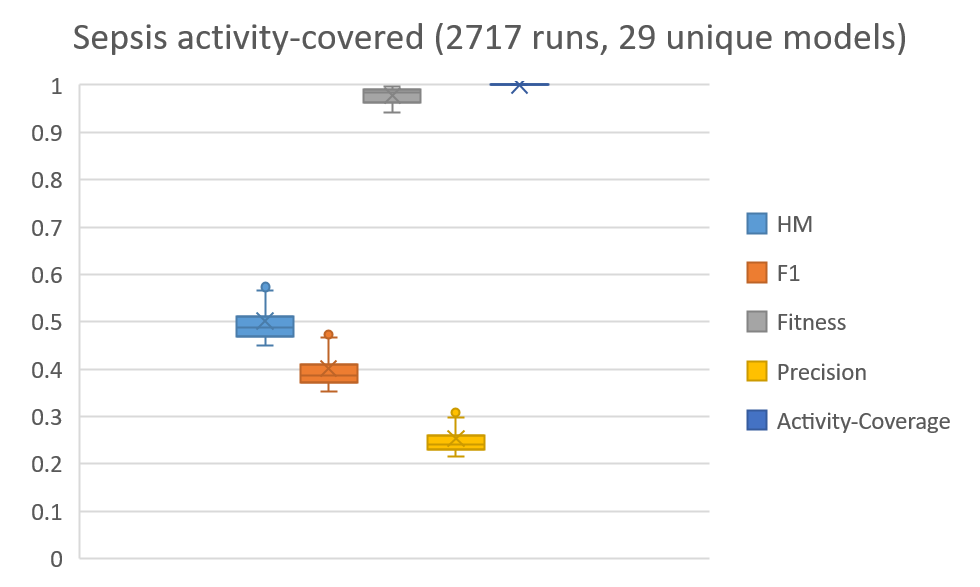}\\
    \includegraphics[width=0.45\linewidth]{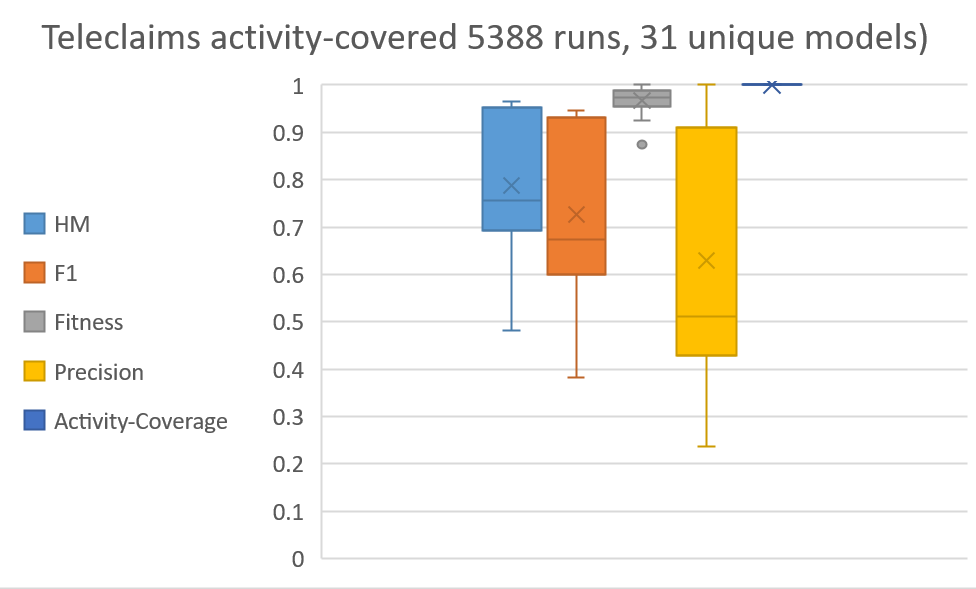}
    \includegraphics[width=0.45\linewidth]{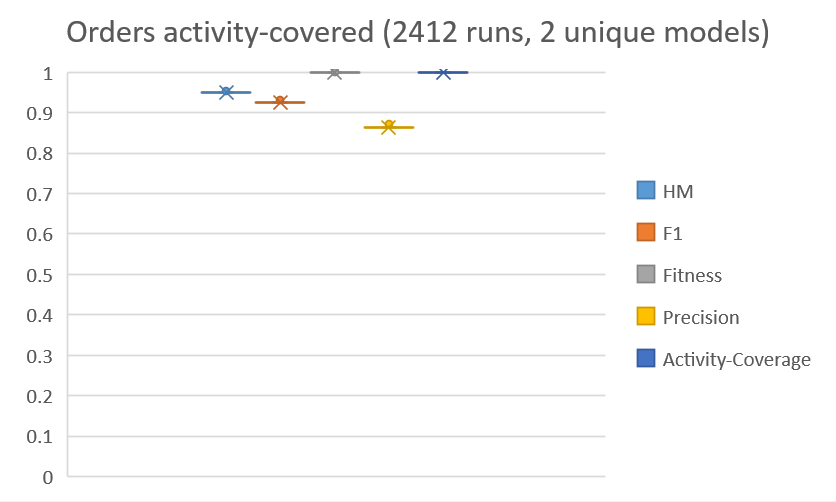}\\
    \includegraphics[width=0.65\linewidth]{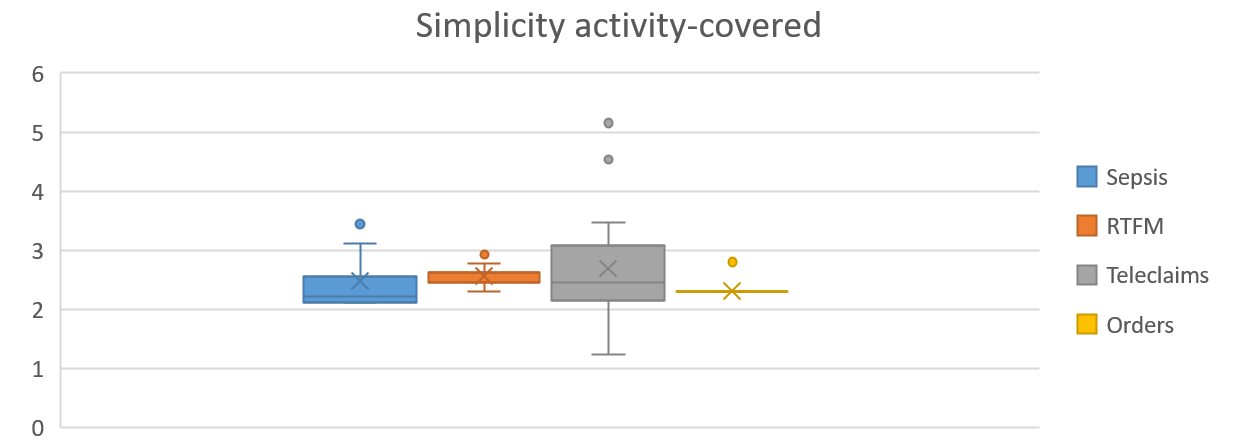}
    \caption{Overview of quality results for all models discovered in the $6300$ runs that include all activities observed in the corresponding event log. }
    \label{fig:resultsAlive}
\end{figure}

In Figure~\ref{fig:resultsAlive}, we show an overview of the quality metrics restricted to the runs that resulted in models that include all activities. The general tendencies remain similar to the results shown for the models discovered for the complete set of runs. Fitness is consistently high, with less lower scoring outliers than we have seen for the complete set of models. This is to be expected, since these models do contain all log activities and our proposed approach guarantees that they can all be fired at least once. Consequently, precision is comparable for the artificial logs which do not include a lot of noise or diverse behavior, or lower for the real-life logs, which exhibit a significantly higher variance in behavior. Notably, the number of models and variance in precision have decreased for \texttt{Sepsis} and \texttt{Orders} - apparently, the parameter combinations allowing for the discovery of models without dead transitions do result in models scoring similarly in quality.

\begin{figure}[!ht]
    \centering
    \includegraphics[width=0.45\linewidth]{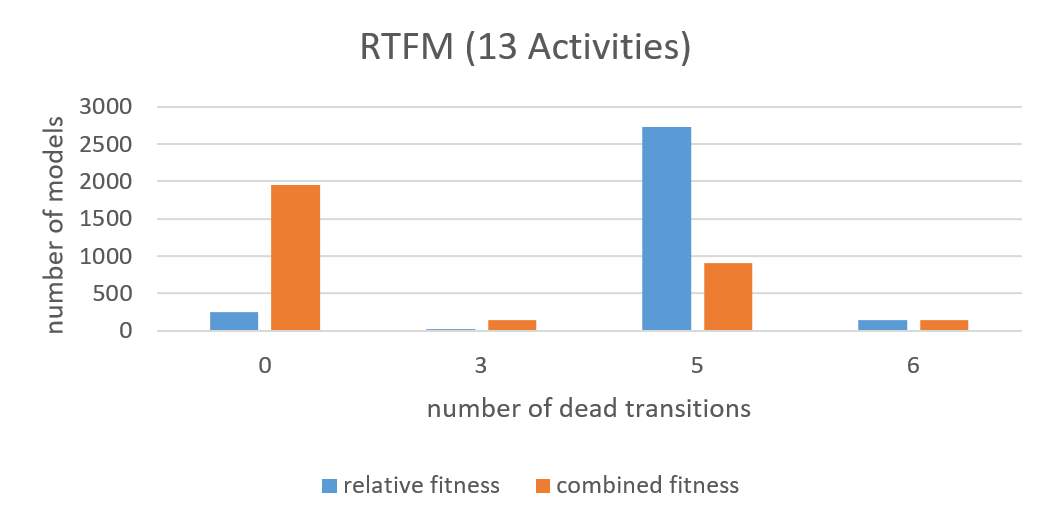}
    \includegraphics[width=0.45\linewidth]{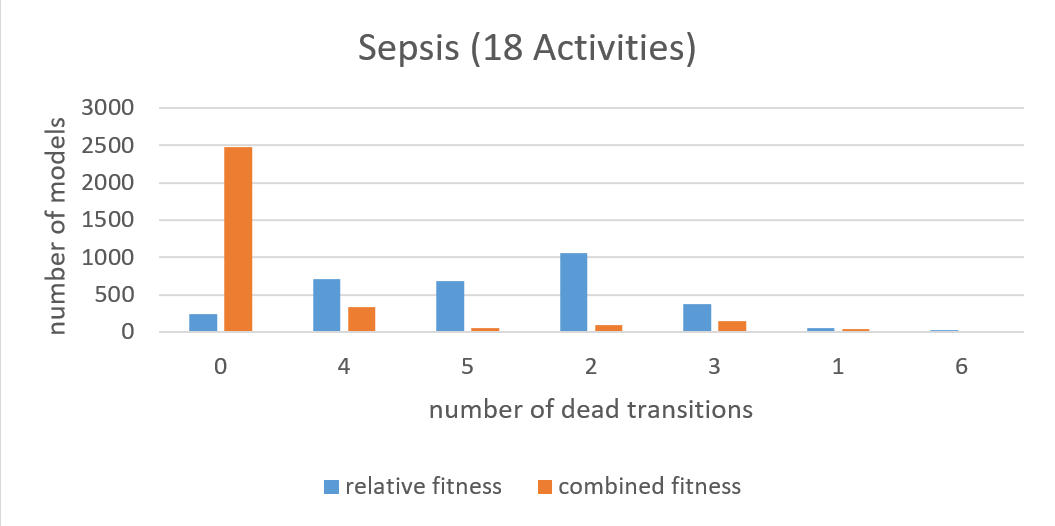}\\
    \includegraphics[width=0.45\linewidth]{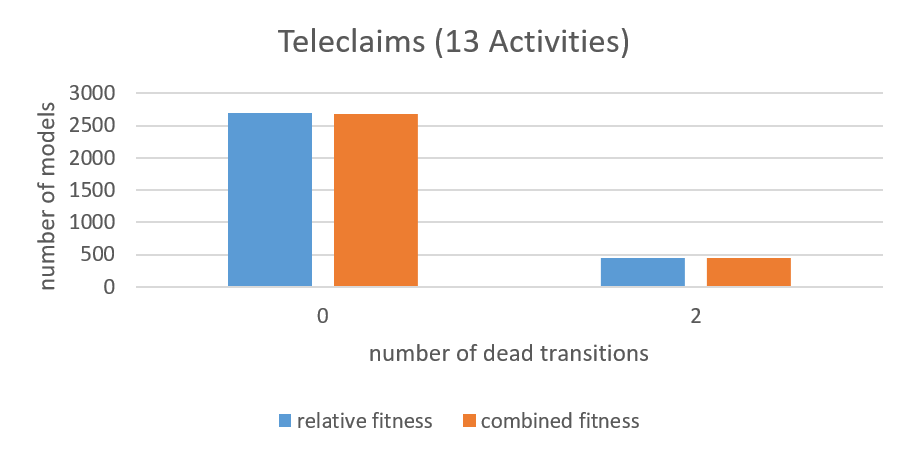}
    \includegraphics[width=0.45\linewidth]{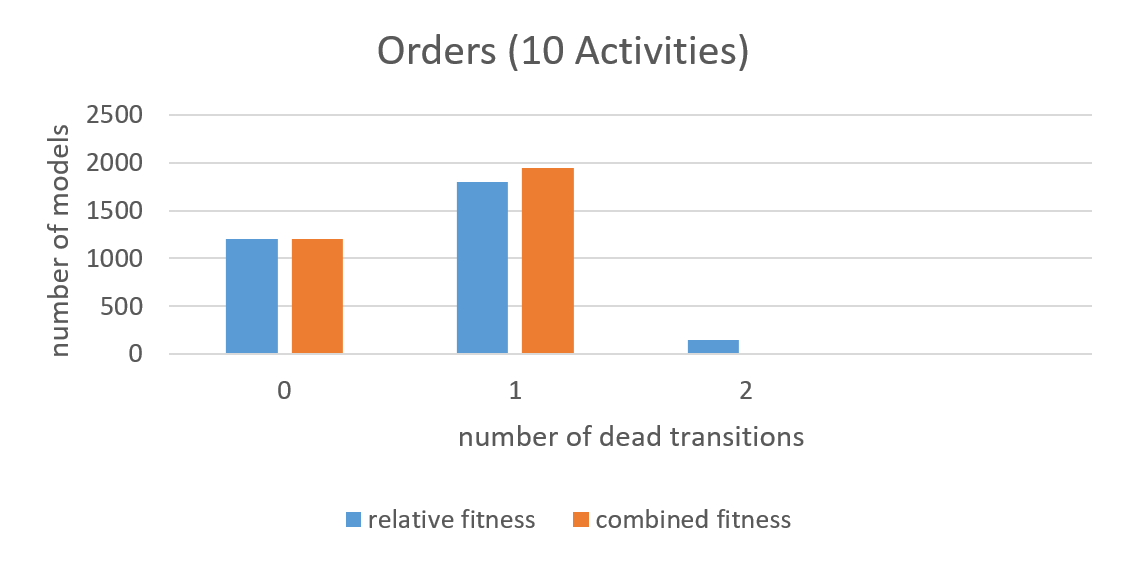}  \vspace*{-1mm}
\caption{Comparison of the number of dead transitions of the models discovered using \emph{relative fitness} and using \emph{combined fitness}
       with the proposed algorithm.}
    \label{fig:comvsrel}
\end{figure}

\begin{table}[!b]
\vspace*{-2mm}
\centering
{\scriptsize
\caption{Out of the 6300 runs we performed in our experiments, $3150$ were performed using \emph{combined fitness} and $3150$ using \emph{relative fitness}. For each event log this table gives an overview about the number of runs that resulted in models without dead transition, as well as how many different such models were discovered.}
\label{tab:aliveModelsRuns}
\begin{tabular}{@{}c|cc|cc@{}}
\toprule
 &
  \multicolumn{2}{c|}{\textbf{\begin{tabular}[c]{@{}c@{}}combined fitness\\ (3150 runs)\end{tabular}}} &
  \multicolumn{2}{c}{\textbf{\begin{tabular}[c]{@{}c@{}}relative fitness\\ (3150 runs)\end{tabular}}} \\
                    & \textbf{\#runs} & \textbf{\#unique models} & \textbf{\#runs} & \textbf{\#unique models} \\ \midrule
\textbf{\texttt{RTFM}}       & 1946            & 8                        & 252             & 1                        \\
\textbf{\texttt{Sepsis}}     & 2479            & 26                       & 238             & 3                        \\
\textbf{\texttt{Teleclaims}} & 2688            & 18                       & 2700            & 13                       \\
\textbf{\texttt{Orders}}     & 1206            & 1                        & 1206              & 1                        \\ \bottomrule
\end{tabular}
}
\end{table}

In Section~\ref{sec:fm:fitness}, we introduced \emph{combined fitness} with the goal of preventing the uncontrolled deletion of infrequent activities due to their execution being blocked accidentally. In Figure~\ref{fig:comvsrel}, we compare the impact of using combined fitness and of using relative fitness on the number of dead transitions in the discovered models. In particular for the two real-life event logs, \texttt{Sepsis} and \texttt{RTFM}, a clear effect is visible: when using combined fitness is it much more likely to discover models which include all or most of the observed activities.
This observation is confirmed by the data in Table~\ref{tab:aliveModelsRuns} where we compare the number of runs that result in models without dead transitions for the runs using combined fitness and the runs using relative fitness. While the trend is clearly visible for the two real-life event logs, for \texttt{Teleclaims} and \texttt{Orders} the majority of parameter combinations result in models without dead transitions, independently of the fitness metric chosen. Similar tendencies can be observed for the number of different models without dead transitions; especially for the two real-life logs, most such models are discovered using combined fitness.

\medskip
Another interesting aspect can be observed for the \texttt{RTFM} and \texttt{Teleclaims} event logs: models with certain numbers of dead transitions ($1,2,4$ for \texttt{RTFM}, $1$ for \texttt{Teleclaims}) are never discovered. This can be explained by groups of transitions which are closely coupled in behavior and (almost) always occur together in the same traces. They are removed from the model together once these traces are no longer replayable. 

\begin{figure}[!b]
\footnotesize{Inductive Miner infrequent (default settings):}

\includegraphics[width =0.99\linewidth]{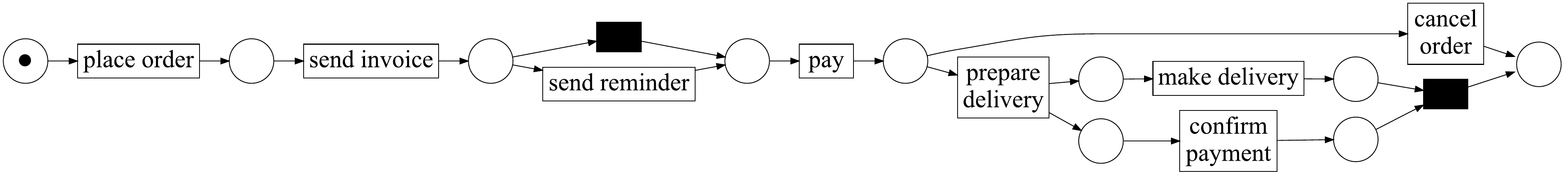}\hfill

\footnotesize{eST-Miner (${\tau=1.0}$):}

\includegraphics[width =0.99\linewidth]{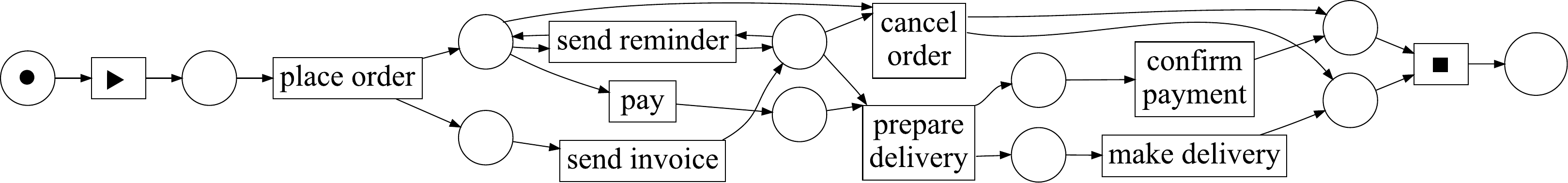}\hfill

\vspace{\figdist}
\footnotesize{Presented Approach: highest \texttt{HM} value out of the complete set of results, as well as the set of models without dead transitions. }

\includegraphics[width =0.99\linewidth]{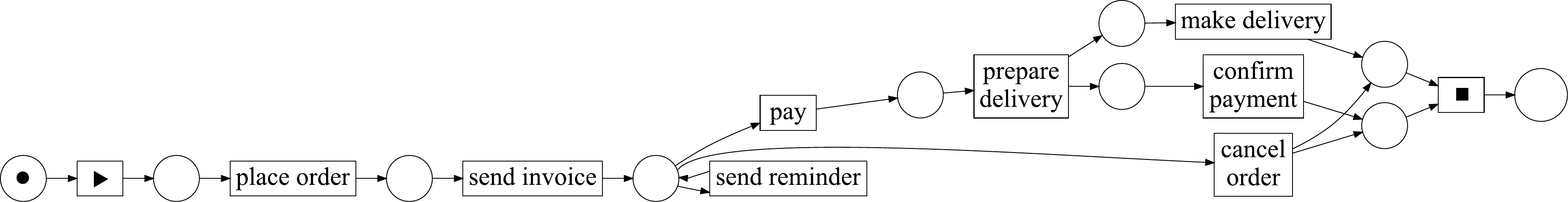}\hfill

\vspace{\figdist}
\footnotesize{Presented Approach: highest $F_1$-score out of the complete set of results. }

\vspace{\figdist}
\includegraphics[width =0.99\linewidth]{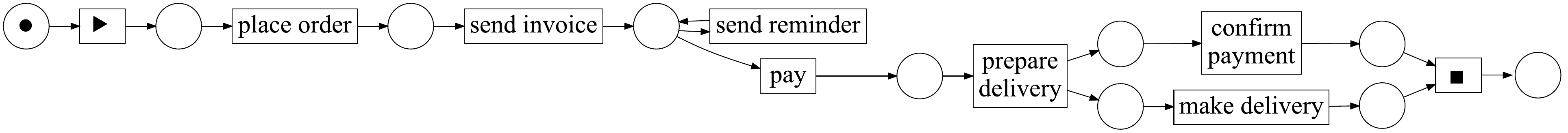}
\caption{{The Petri nets discovered based on the \texttt{Orders} log using the Inductive Miner infrequent (default settings), the eST-Miner with $\tau=1.0$, and a subset of interesting models discovered using the presented approach. }}
\label{fig:Ordersmodels}
\end{figure}

\begin{figure}[!htbp]
\footnotesize{Inductive Miner infrequent (default settings):}

\vspace{\figdist}
\includegraphics[width =0.95\linewidth]{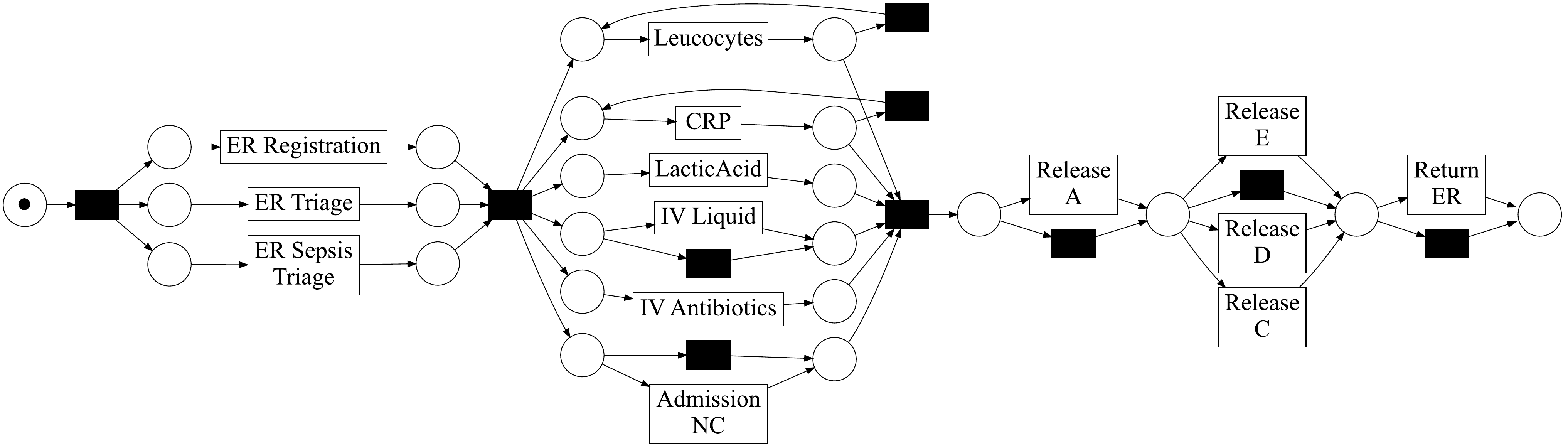}\hfill

\footnotesize{eST-Miner (${\tau=1.0}$):}

\vspace{\figdist}
\includegraphics[width =0.52\linewidth]{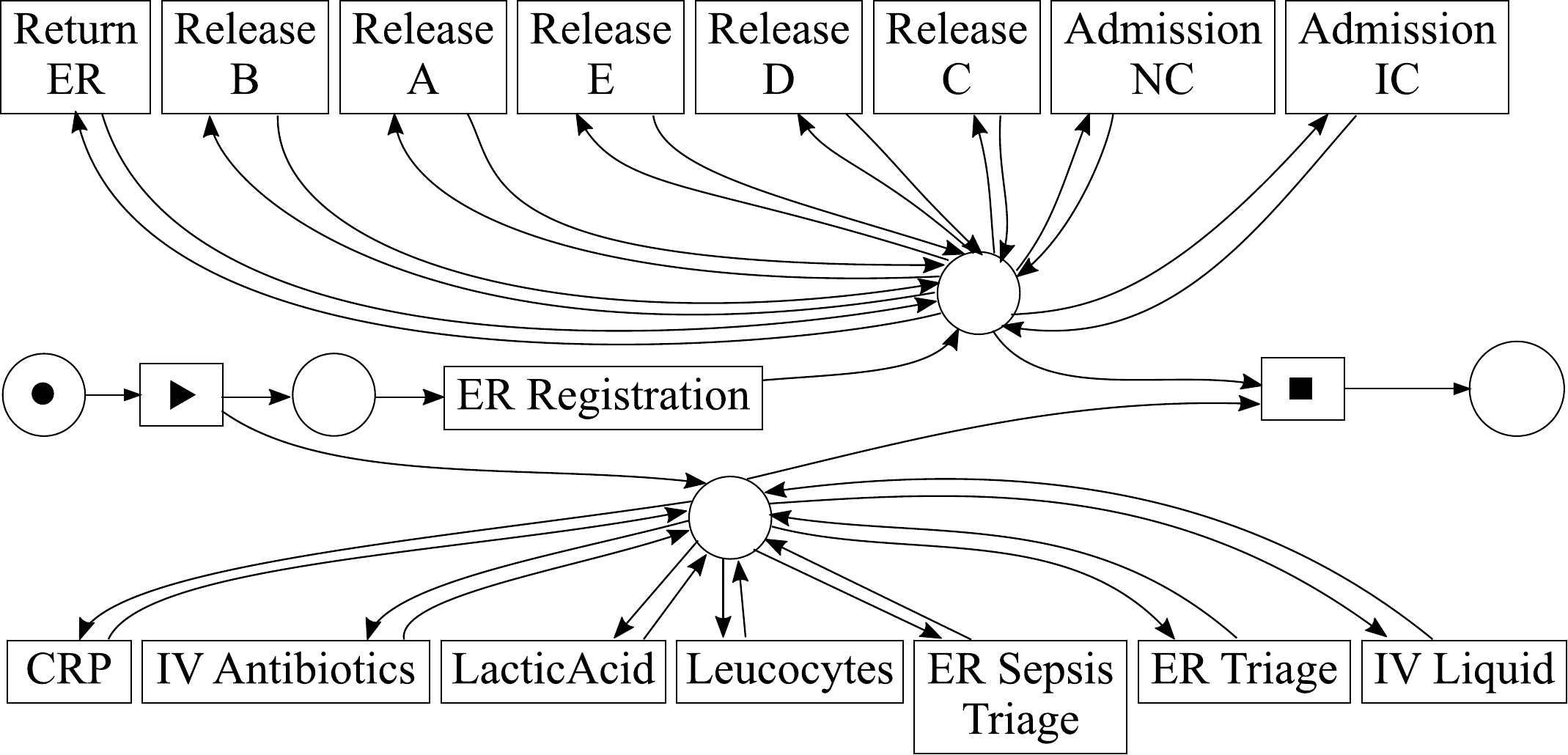}\hfill

\vspace{\figdist}
\footnotesize{Presented Approach: highest \texttt{HM} value and highest $F_1$-score out of the complete set of results.}

\vspace{\figdist}
\includegraphics[width =1.0\linewidth]{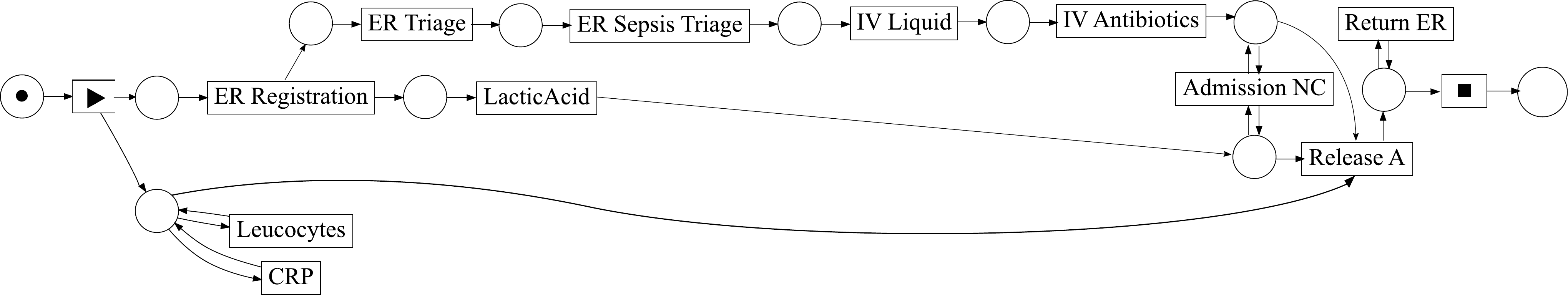}\hfill

\vspace{\figdist}
\footnotesize{Presented Approach: highest \texttt{HM} value out of the set of models without dead transitions.}

\vspace{\figdist}
\includegraphics[width =0.85\linewidth]{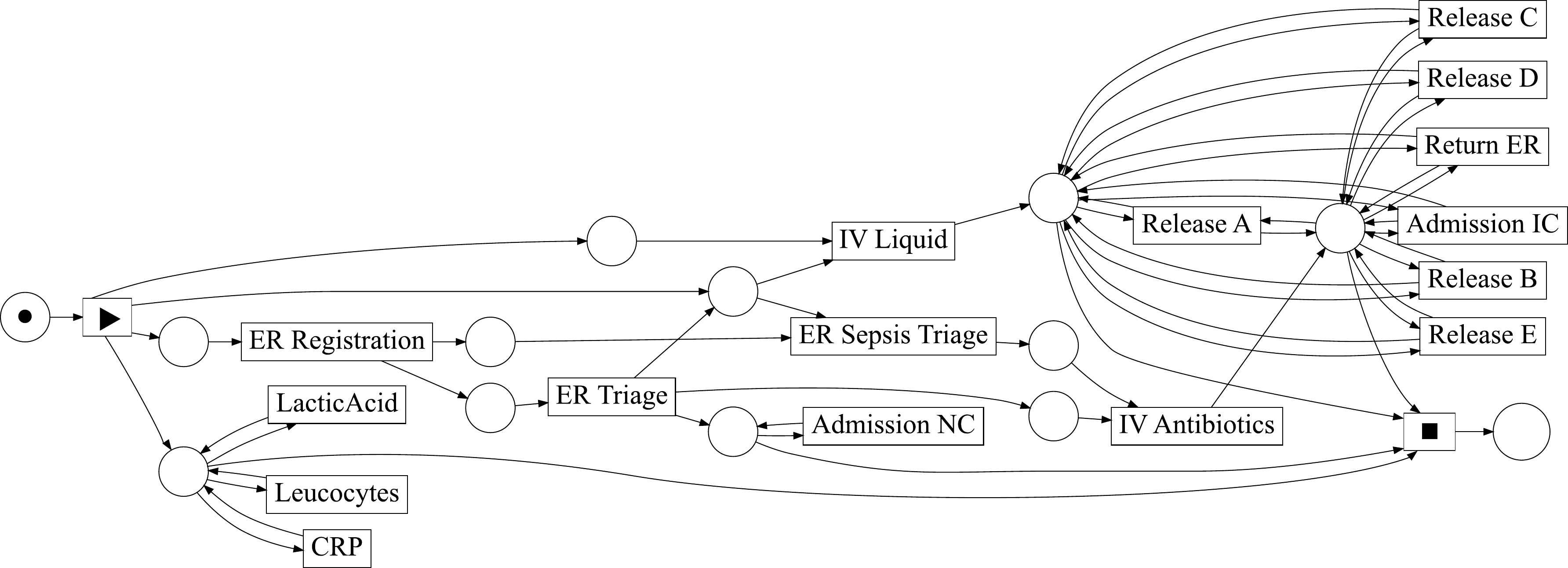}\hfill
\caption{{The Petri nets discovered based on the \texttt{Sepsis} log using the Inductive Miner infrequent (default settings), the eST-Miner with $\tau=1.0$, and a subset of interesting models discovered using the presented approach.}}
\label{fig:sepsismodels}
\end{figure}

\subsubsection*{Discussion of Selected Models}\label{subsec:models}
In Figures~\ref{fig:Ordersmodels},~\ref{fig:sepsismodels},~\ref{fig:rtfmmodels} and~\ref{fig:teleclaimsmodels} we present a selection of models for each log. For comparison, we present the models discovered by IMf (default settings) and the model discovered by the eST-Miner with $\tau=1.0$. From the many models discovered during the experimentation with our approach, we show the models scoring highest with respect to \texttt{HM} (the harmonic mean of fitness, precision, \mbox{activity-coverage}) as well as with respect to $F_1$ (harmonic mean of fitness and precision only). Furthermore, we present the models with the highest \texttt{HM} and $F_1$ values out of the set of models without dead transitions.  Of course, while these models score highest with respect to the general quality metrics used in this evaluation, other models may still be considered to be more suitable for specific contexts or applications.

\medskip
All models shown in Figure~\ref{fig:Ordersmodels} were discovered for the \texttt{Orders} log. For this rather simple event log, all models achieve relatively high scores with respect to the quality metrics. However, some notable differences in the expressed behavior can be observed in particular with respect to the activities \emph{send invoice, send reminder, pay} and \emph{cancel order}. According to the event log, in most cases the execution of \emph{send invoice} is eventually followed either by \emph{pay} (and then delivery) or by \emph{cancel order}, but never both. In rare cases, payment occurs before sending the invoice. After sending the invoice, reminders can be sent repeatedly, until payment is received or the order is canceled. This behavior is fully expressed only by the model discovered using the eST-Miner with $\tau=1.0$, which is comparable to results produced by region-based approaches. Since payment before sending the invoice is rare, users may prefer the other models which focus on behavior where payment arrives after sending the invoice. The model discovered by IMf further deviates from the log by not allowing for repeated reminders (occurring in 25 \% of the traces), and enabling the cancellation of orders after payment. {The  model with the highest \texttt{HM} value is the same for both, the complete set of models as well as the set without dead transitions. It includes all activities observed in the event log, in contrast to the model with the highest $F_1$-Score, which does not contain the activity \emph{cancel order} (occurring in $13.03$ \% of traces) at all, resulting in slightly lower fitness but increased precision.}

\medskip
The  \texttt{Sepsis} event log exhibits many repetitions of activities and a comparatively high control-flow variance, with $846$ trace variants in $1050$ traces, the most frequent of which occurs only $35$ times. Thus, the discovery of a model with simultaneously high fitness and precision is challenging. Figure~\ref{fig:sepsismodels} presents a selection of discovered Petri nets. The IMf manages to discover groups of activities that occur in sequence, however, within these groups the activities are in parallel and mostly skippable, resulting in a very low precision. The eST-Miner with $\tau=1.0$ illustrates a disadvantage of requiring perfect fitness: the resulting model allows for nearly all possible behaviors. For the model with highest \texttt{HM} value out of all models without dead transitions, this problem becomes less severe. Finally, the model with highest \texttt{HM} and $F_1$ values out of all discovered models manages to capture the main behavior hidden in the traces while ignoring infrequent activity behavior, achieving comparatively high precision at the cost of not representing all activities.

\begin{figure}[!h]
\footnotesize{Inductive Miner infrequent (default settings):}

\vspace{\figdist}
\includegraphics[width =1.0\linewidth]{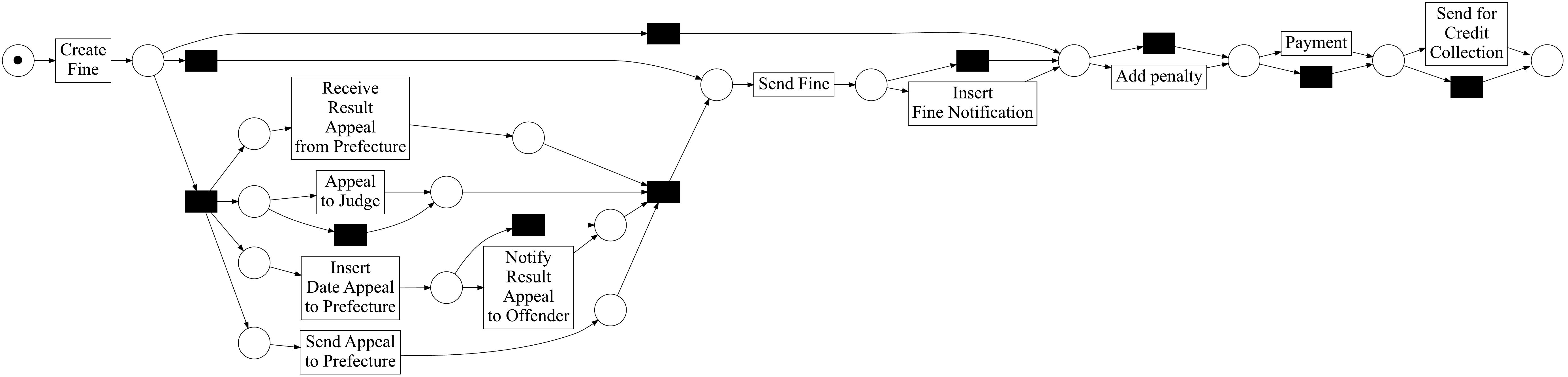}\hfill

\vspace{\figdist}
\footnotesize{eST-Miner (${\tau=1.0}$):}

\vspace{\figdist}
\includegraphics[width =0.56\linewidth]{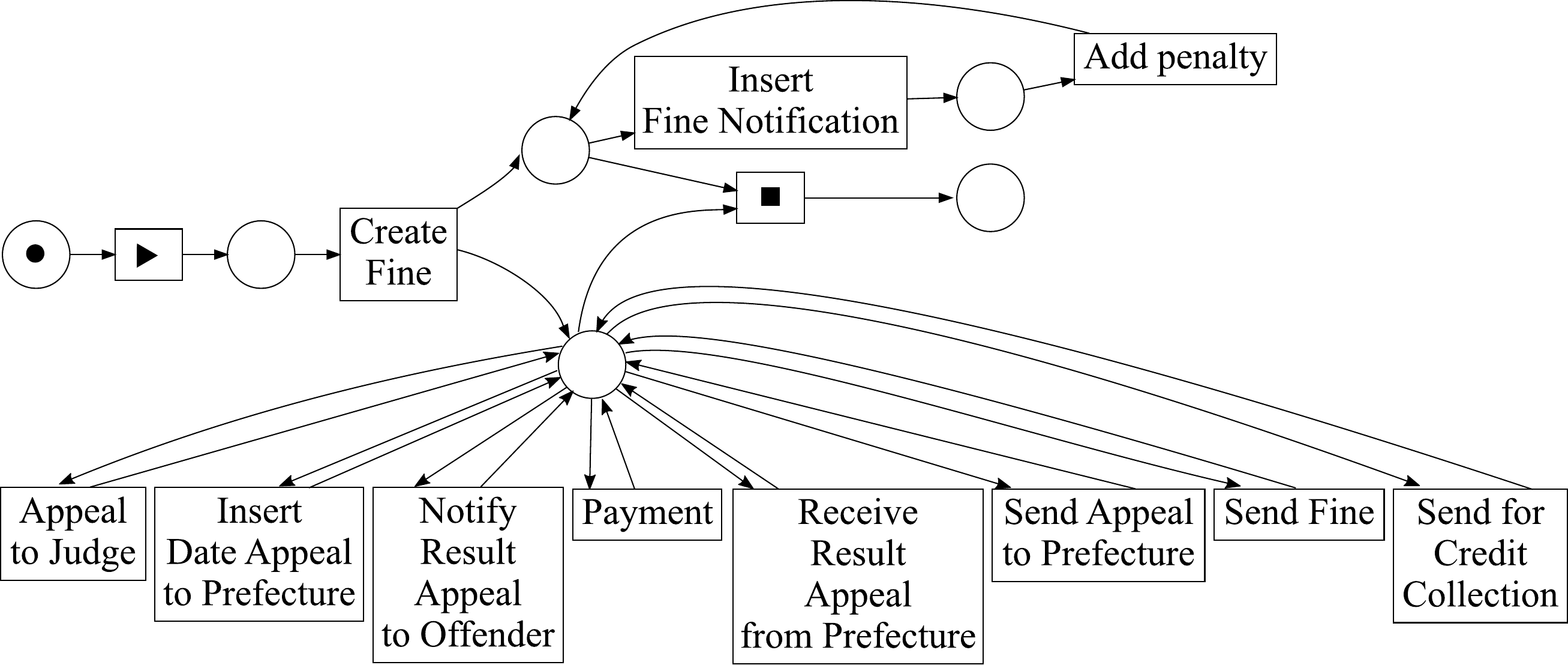}\hfill

\vspace{\figdist}
\footnotesize{Presented Approach: highest \texttt{HM} value out of the complete set of results, as well as the set of models without dead transitions.\vspace{2mm}}

\vspace{\figdist}
\includegraphics[width = \linewidth]{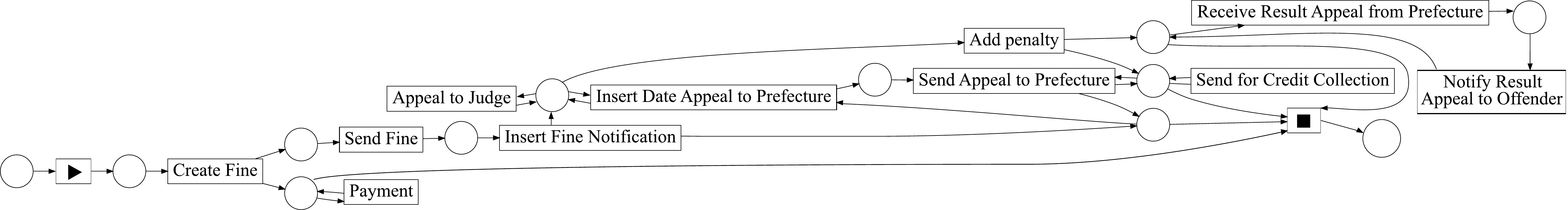}\hfill

\vspace{\figdist}
\footnotesize{Presented Approach: highest $F_1$-score out of the complete set of results.}

\vspace{\figdist}
\includegraphics[width =0.8\linewidth]{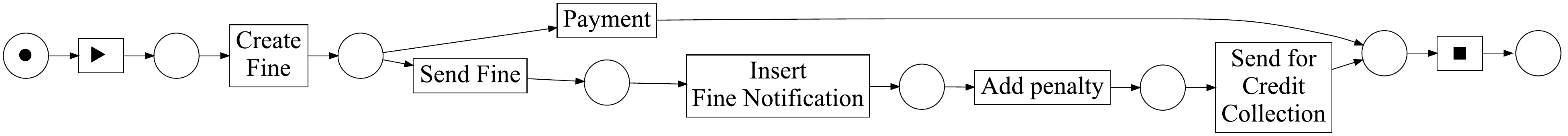}\caption{{The Petri nets discovered based on the \texttt{RTFM} log using the Inductive Miner infrequent (default settings), the eST-Miner with $\tau=1.0$, and a subset of interesting models discovered using the presented approach.}}
\label{fig:rtfmmodels}\vspace*{-3mm}
\end{figure}

\medskip
Figure~\ref{fig:rtfmmodels} shows Petri nets discovered from the \texttt{RTFM} log. Considering the models discovered by IMf and eST-Miner with $\tau=1.0$, we observe the same general tendencies as for the previous logs. For the model with the highest $F_1$-score discovered by our approach, we note that several activities are missing, meaning that they are not part of any replayable trace from the event log. The reason can be found by investigation of this particular event log, which describes two very distinct sub-processes, the more frequent of which consists of the activities still contained in the model. The activities of the infrequent sub-process related to appeals have been removed, allowing to focus on the main process. The model with the highest \texttt{HM} includes all activities from the event log. It includes the main process which is also expressed by the model with highest $F_1$, with additional self-loops that model optionality of those activities. Additionally, the control-flow of the appeal-related subprocess is included. Here, self-loops are not only used to model skippable activities but also to enforce a certain order on the events. Despite the limitations of using only uniquely labeled transitions, the positioning of the infrequent activities within the control-flow is precise and reflects their behavioral patterns in the event log.

\begin{figure}[!h]
\footnotesize{Inductive Miner infrequent (default settings):}

\vspace{\figdist}
\includegraphics[width =1.0\linewidth]{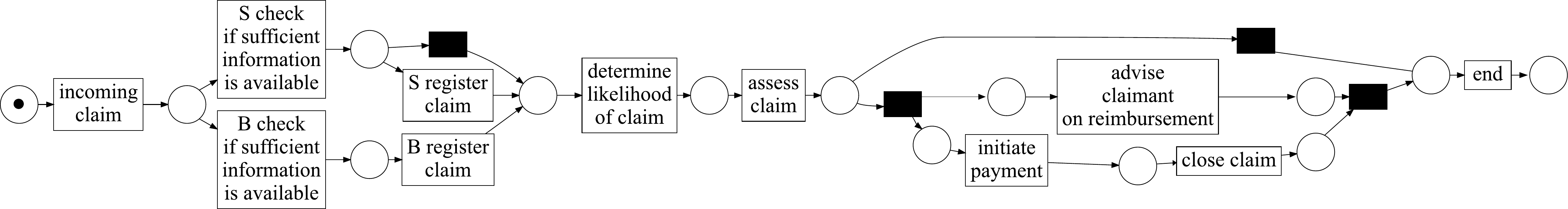}\hfill

\vspace{\figdist}
\footnotesize{eST-Miner (${\tau=1.0}$):}

\includegraphics[width =0.75\linewidth]{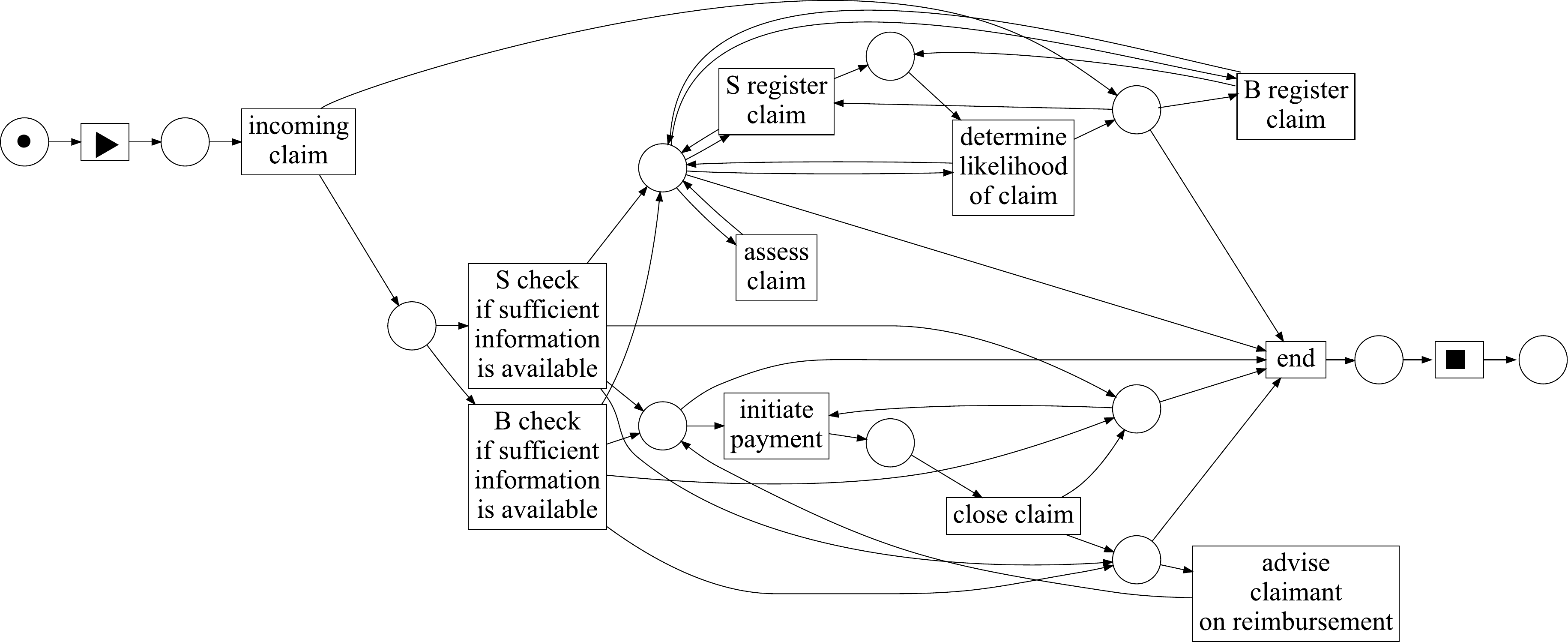}\hfill\vspace*{2mm}

\footnotesize{Presented Approach:  highest \texttt{HM} value and highest $F_1$-score out of the complete set of results, as well as the set of models without dead transitions.}

\vspace{\figdist}
\includegraphics[width =1.0\linewidth]{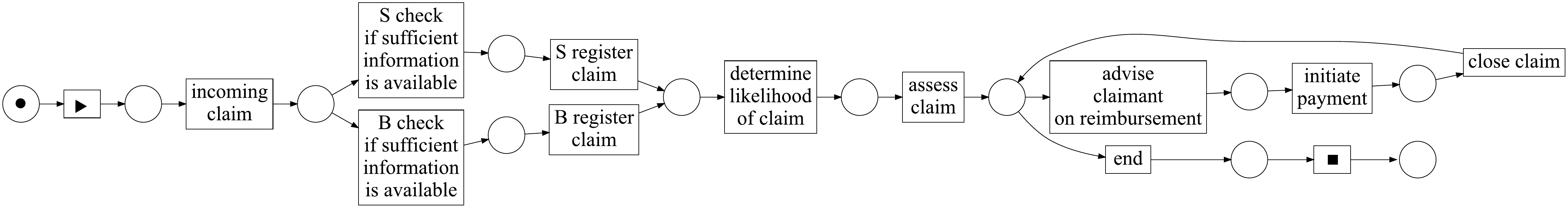}
\caption{{The Petri nets discovered based on the \texttt{Teleclaims} log using the Inductive Miner  infrequent (default settings), the eST-Miner with $\tau=1.0$, and and a subset of interesting models discovered using the presented approach.}}
\label{fig:teleclaimsmodels}\vspace*{-3mm}
\end{figure}

\medskip
A set of process models discovered from the \texttt{Teleclaims} log is presented in Figure~\ref{fig:teleclaimsmodels}. For this event log, the same model scores highest with respect to \texttt{HM} and $F_1$ for the complete set of discovered models as well as for the set of models without dead transitions. This model and the model discovered by IMf express similar behavior, with the main difference being the representation of skippable activities: with all transitions being uniquely labeled,  our approach has to rely on loop constructs rather than silent activities. The eST-Miner with $\tau=1.0$ does not abstract from infrequent behavior, which in this case results in a perfectly fitting but quite complex model.

Our results confirm the expectation that even minor gains in fitness are usually accompanied by a major drop in precision. The models with the best $F_1$-Score are usually those with the highest precision value. From  Figures~\ref{fig:Ordersmodels} to~\ref{fig:teleclaimsmodels} we can observe that these models seem to focus on the main process behavior, giving a clear representation of the control-flow of the main activities. However, they are likely not to incorporate infrequent activities. This is clearly illustrated by the \texttt{Orders} and \texttt{RTFM} event logs.  Here, infrequent but potentially vital paths in the control-flow (e.g. cancellation of an order or appeal against a fine) are revealed when the discovery algorithm abstracts from deviating behavioral patterns without ignoring infrequent activities themselves. Our results show that the presented approach is able to return models anywhere on the scale balancing fitness, precision and activity-coverage, based on the choice of parameters.

\subsection{Impact of parameter choices}

\begin{figure}[!ht]
    \centering
    \includegraphics[width =\linewidth, trim={0 14.2cm 0 0.25cm},clip]{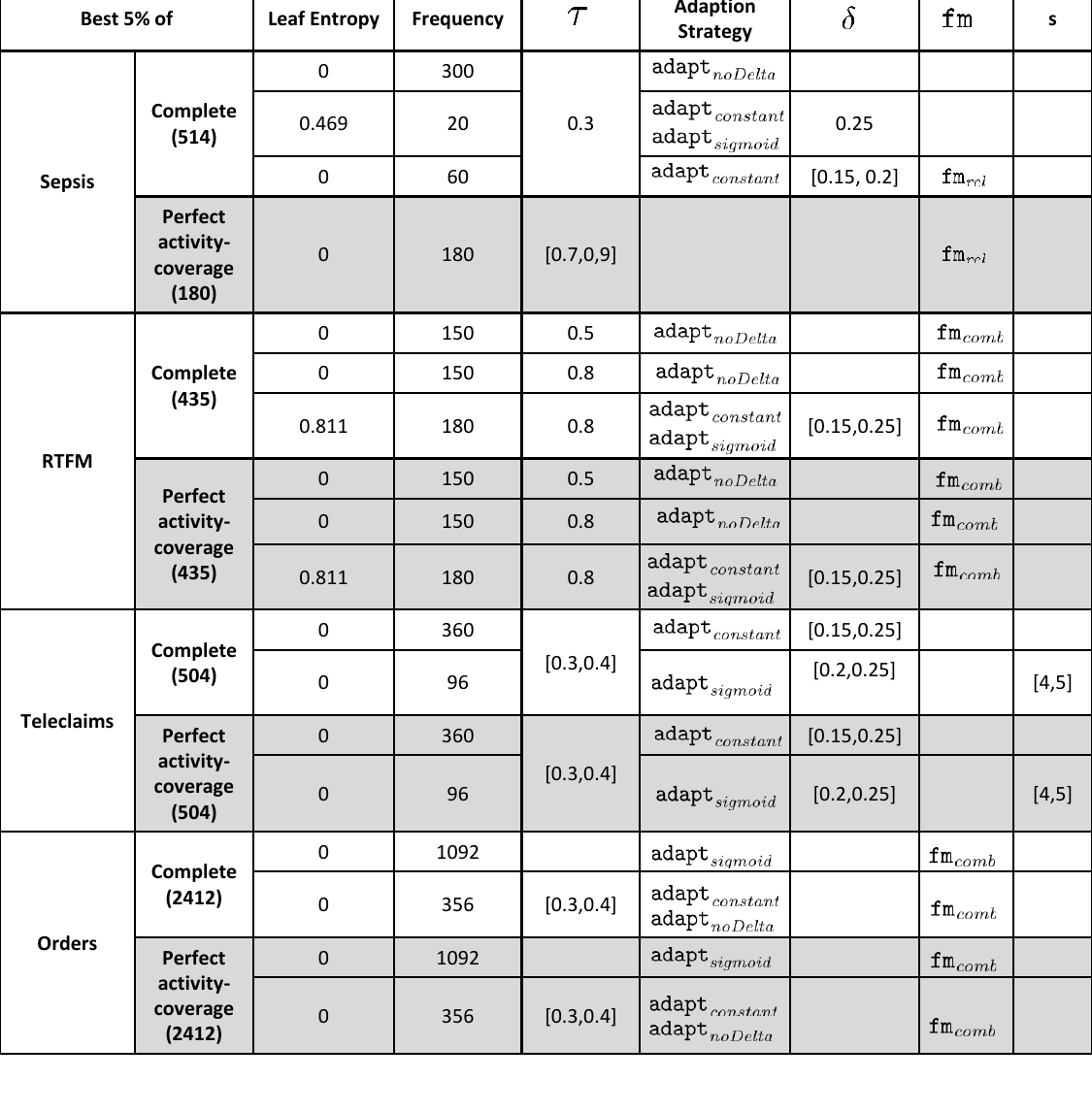}\vspace*{-5mm}
    \caption{{Overview of the parameter choices resulting in the discovery of the models with the top $0.05$ fraction of the \texttt{HM} value, once for all discovered models and once for the models with perfect activity-coverage. For each log, we indicate how often such models have been discovered in our experimentation. Each line refers to a set of parameter combinations, with the frequency and entropy of the corresponding leaf node in the decision tree. For each parameter that our decision tree analysis has revealed to be impactful, the possible values are indicated. Empty cells indicate that the corresponding parameter was insignificant for reaching the leaf. Leaves which did not include a significant number of positive instances are not included, resulting in a slight discrepancy between the number of such models discovered and the sum of leaf frequencies.}}
    \label{fig:bestparams}
\end{figure}

While the quality results discussed earlier clearly indicate that our approach is able to discover models balancing fitness, precision and activity-coverage while maintaining reasonable simplicity, the choice of parameters has a significant impact. Therefore, we investigate this further. We used decision tree analysis to search for parameter settings that would result in the highest quality models as indicated by the \texttt{HM}-Score. Since, for some event logs, the best model according to this metric was discovered very infrequently, we extended the notion to the top $5 \%$ of runs. However, in other logs the best model was discovered in significantly more than $5 \%$ of the runs. This discrepancy resulted in an imbalance of the absolute number of runs resulting in the highest-scoring model(s). Additionally, we performed the same analysis approach for the set of models without dead activities, again looking for the (at least) $5 \%$ of runs which resulted in models with the highest \texttt{HM} value.

\medskip
The results of this analysis are shown in Figure~\ref{fig:bestparams}, where each line represents a set of parameter combinations that leads to the discovery of the best model(s). Information on the frequency and entropy of the corresponding leaf in the decision tree is included as well.

\medskip
For \texttt{Sepsis} the overall best scoring model requires the lowest investigated value of $\tau$, that is $0.3$. Additionally, $\delta$ should either be ignored completely by using the $\texttt{adapt}_\textit{noDelta}$ adaption function or set to the largest value. Slightly lower values of $\delta$ paired with the less constraining relative fitness can be combined with $\texttt{adapt}_\textit{constant}$, i.e., always using the maximum value for $\delta$. All in all, the parameters indicate that rather restricting places with comparatively low fitness must be accepted to discover models with higher quality for the \texttt{Sepsis} log. This makes sense in the context of the large variety of traces without clear main behavior, i.e., all trace variants have similarly low frequency. However, the large \texttt{HM} value of the models discovered with those parameter settings is mostly based on  scoring high precision, which is also due to $5$ activities not being included. Choosing $\tau = 0.7$ or larger together with relative fitness balances fitness and precision such that the highest-scoring model that includes all log activities can be discovered.

\smallskip
For the \texttt{RTFM} event log the use of combined fitness is a prerequisite to discovering the best-scoring models. In contrast to  \texttt{Sepsis}, larger values of $\tau$ seem important: a choice $\tau=0.5$ or $\tau=0.8$ while ignoring $\delta$ ($\texttt{adapt}_\textit{noDelta}$) results in a high-quality model.  Alternatively, $\tau=0.8$ can be combined with large values of $\delta$. In this case, the highest scoring model in general and with full activity-coverage coincide.

\smallskip
For the \texttt{Teleclaims} event log the set of best-scoring models coincides with the set of best-scoring models without dead transitions. A $\tau$ value of at most $0.4$ is a prerequisite. Combining a $\delta \geq 0.15$ with the constant adaption function results in a high-quality model.  Gradually adapting $\delta$ using  sigmoid adaption requires larger values of $\delta$ and high steepness $s$.

\medskip
The set of best-scoring models in general and best-scoring models with perfect activity-coverage is the same for the \texttt{Orders} event log. To discover such models, one can combined  combined fitness either with the sigmoid adaption function or  with any other adaption function and low values for $\tau$.

For the four event logs investigated in this paper, the most important parameter seems to be $\tau$. This is not surprising, since $\tau$ has a direct impact on which places are available for addition to the Petri net. Furthermore, the combined fitness metric plays an important role in finding the best models for all event logs but \texttt{Sepsis}.
The constant and sigmoid adaption functions usually appear in combination with certain choices for $\delta$, which makes sense since $\delta$ is limiting the range of the adaption strategies, which include the use of $s$. Generally, higher values for delta seem to be favored to find good models.

Notably, the artificial tree depth $d^+$ as well as $\Qlength$ seem to have had no major impact on the discovery of any of the examined models (and therefore do not appear in Figure~\ref{fig:bestparams}). For the potential places queue $\Qlength$ this indicates that either the length limit was not reached or no important places were dropped. For $d^+$ we can conclude that in the scope of our experiments it did not result in significant complex places being added.

Some dependencies on log features are expected, and seem to be confirmed by the results in Figure~\ref{fig:bestparams}. For the \texttt{RTFM} log, which has a few very dominant trace variants, we seem to generally achieve good results for rather high values of $\tau$. In contrast, for the \texttt{Sepsis} log, which has a high variety of traces, a low $\tau$-value seems mandatory to achieve high scores. Most likely, the large variety of fitting places allows for obtaining high precision, while our heuristics seems to successfully ensure the focus on the main behavioral patterns.

Indeed, the results from the \texttt{Sepsis} log,  seem to confirm our algorithms ability to discover the main behavior hidden in an event log even in the absence of clear main trace variants: for a low value of $\tau$, e.g. $\tau=0.3$, the fraction of log traces replayable by the returned Petri net is close to $0.3$, however, the alignment-based fitness reliably remains above $0.9$, indicating that most of the traces are close to being replayable. We can conclude that the returned model successfully expresses the core behavior of the process.

\medskip
To summarize, the results clearly show that high-quality models balancing the different quality aspects can be discovered. There is a significant variance in some of the metrics, particularly precision, indicating that the settings of the algorithm have a notable impact. Our preliminary investigation shows that, based on the event log, certain parameter choices are likely to result in high-quality models. Choosing combined fitness significantly increases the likelihood of the discovered model to contain all or most of the log activities. Our experimentation and analysis give a first indication about which parameters have a more notable impact and whether certain settings are more suitable for logs with certain properties. However, we investigated only four event logs and clearly further experimentation needs to be performed to explore to which degree a generalization of our results is possible. Note that the impact of the candidate traversal order has not been investigated yet, and may allow for further improvements.

\subsection{Running time analysis}

In general, the worst-case running time of the eST-Miner is exponential in the number of log activities $A$, since $\mathcal{O}((2^{|\setofactivities}|)^2)$ candidate places may have to be evaluated (compare \cite{PN2019}).
However, limiting the tree depth to a \emph{fixed} value $2 \leq k \leq |\setofactivities|$, as was done in the experiments performed in the context of this work, can improve the worst-case running time.
In the complete candidate tree the depth of a place coincides with the number of activities connected to that place, i.e., for a place \inoutpair{I}{O} at depth $k$ we have that $|I|+|O|=k$.
The number of candidate places at depth $k$ corresponds to the number of all possible subsets of $A$ of size $k$ times all possibilities to split them over the sets of ingoing and outgoing transitions (for simplicity, we omit the insignificant border case of empty sets). For a fixed traversal depth of $k$, in the worst-case we visit all candidates on all levels from $2$ to $k$. Thus, for a fixed traversal depth $k$ the number of place candidates visited in the complete candidate tree is bounded by
\begin{align*}
&\sum_{i=2}^k \left (\binom{|\setofactivities|}{i} \cdot 2^i\right )
\leq k \cdot \left (\binom{|\setofactivities|}{k} \cdot 2^k \right )\\
\in \;&\mathcal{O}(k) \cdot \mathcal{O}(|\setofactivities|^k)\cdot \mathcal{O}(2^k)
\subseteq \; \mathcal{O}(|\setofactivities|^k) \; \; (\textit{ for fixed } k)
\end{align*}
We conclude that for a fixed traversal depth of $k$ the number of place candidates evaluated in the worst-case becomes polynomial in the number of activities.

\medskip
In our experimentation, we chose a fixed tree traversal depth of $5$ to make a large-scale experimentation more feasible. We tracked the running times of the various components of the proposed eST-Miner variant with the goal to verify that our newly introduced place selection subroutine does not significantly decrease the algorithm's performance. In Figure~\ref{fig:overviewTimes}, a summary of the running times of the different subroutines is shown on the left. Clearly, the fitness evaluation, i.e., the replay of the event log on each evaluated place candidate, makes up the bulk of the algorithms running time. The time needed for removing implicit places varies a lot over the runs (low values of $\tau$ tend to result in more places being inserted and thus more time needed for implicit place removal) but remains insignificant in most cases. Most importantly, in the context of this work, the time spend on place selection  is  even less than the time spend on computing the place candidates (tree traversal), as can be seen on the right-hand side of Figure~\ref{fig:overviewTimes} (note that the scale is different).
\begin{figure}[tbh]
    \centering
    \includegraphics[width =0.475\linewidth]{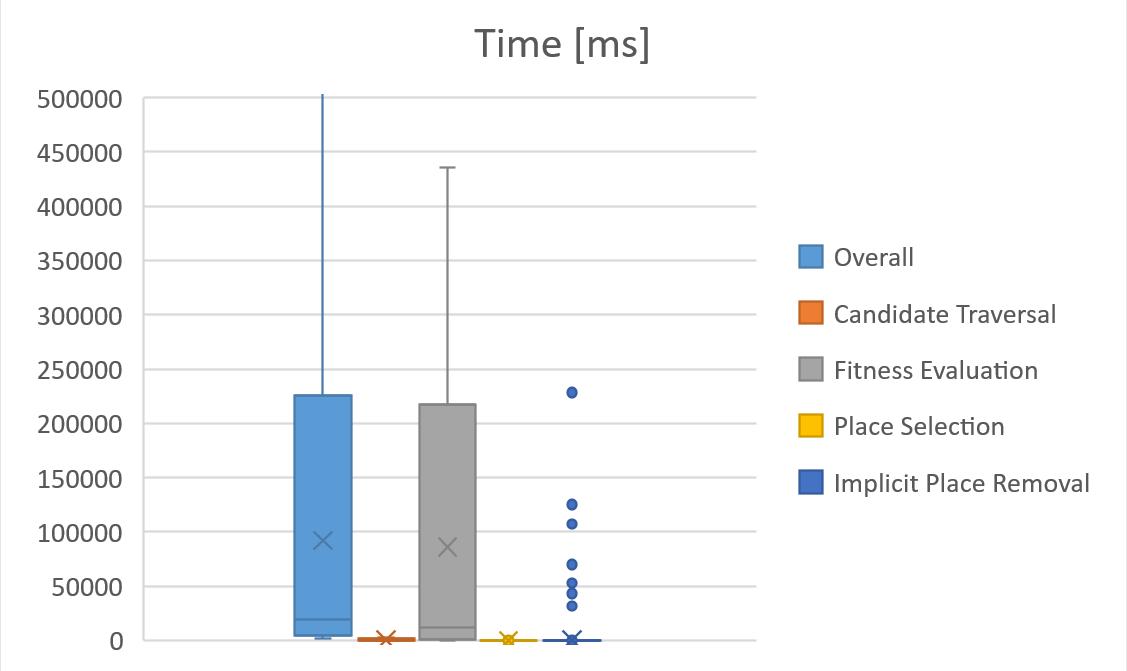}
    \includegraphics[width =0.475\linewidth]{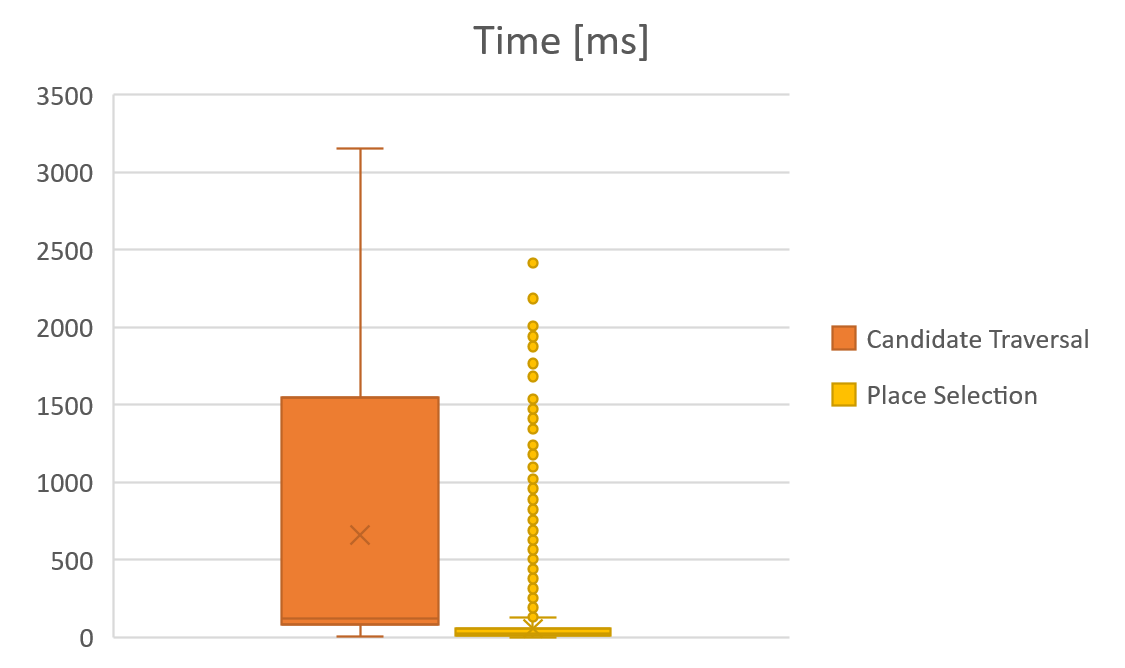}
    \caption{Overview of the running times achieved in our experiments on the left, with a smaller scale visualization of the shorter running times on the right. The newly added place selection subroutine does not add significantly to the overall running time of the algorithm.}
    \label{fig:overviewTimes}\vspace*{-5mm}
\end{figure}

\section{Discussion and future work}\label{sec:fm:discussion}
The algorithm presented in this work is an extension of the eST-Miner. The goal is to provide fitness guarantees on the returned Petri net and to abstract from infrequent behavioral patterns without categorically removing infrequent activities while preserving the advantages of the algorithm, such as its ability to reliably discover complex control-flow structures (e.g. long-term dependencies) and guarantee of returning a maximal set of places. The latter are closely related to achieving high precision.
The algorithmic framework we proposed allows for a lot of flexibility in a variety of components and only a subset of the possible options has been explored in this work. On the one hand we can imagine several extensions and refinements of the strategies proposed so far. On the other hand, a strategy to reduce the load of decision making on the user, such as a recommender system or simplified interface, becomes more pressing the more complex the extensions to the algorithm grow.

In Section~\ref{sec:fm:fitness}, we introduced a new fitness metric usable within the eST-Mining framework with the goal of avoiding the discovery of places that block infrequent activities. Aggregated fitness does not allow for any place to restrict an activity more than what is allowed according to the noise threshold, which guarantees that no activity can be blocked (dead) (if $\tau > 0$). Thus, activities get removed from the Petri net only if all the traces that contain them happen to be no longer replayable.
This strategy has the intended positive effect, i.e., our evaluation clearly shows that we succeed in the sense that the models discovered using combined fitness are much more likely to contain all activities from the event log. However, the decision tree analysis shows that we do not necessarily find the best-scoring models using combined fitness, in particular with respect to precision. While inclusion of infrequent but well-defined behavior has been the goal, achieving it results not only in the representation of infrequent but interesting behavioral patterns by the returned process models, but also activities without a well-defined control-flow remain part of the discovered model. The eST-Miner framework connects these activities using the most constraining places allowed, which in the worst-case result in an activity that is enabled by \sactivity, may loop, and gets disabled by \eactivity. An example for such structures can be seen in Figure~\ref{fig:sepsismodels} for the activities \textit{LacticAcid, Leucocytes} and \textit{CRP}. Obviously, such structures have a strong negative impact on precision, even though they seem reasonable in the context of the goals and constraints. {Less constraining fitness metrics allow for more restrictive places to be added, which may remove such activities or connect them in a more restrictive way, resulting in higher precision values.} Based on the assumption of our user being interested in all activities given in the event log, an investigation of a suitable adaption of aggregated fitness or a postprocessing step for more restrictive re-connection of such activities are promising future work.
Fitness metrics diverging significantly from what has been proposed so far may be useful depending on the quality aspects a potential user is interested in.  However, to utilize the performance optimization enabled by the eST-Miners complete candidate tree, they must satisfy the monotonicity properties discussed in Section~\ref{sec:fm:fitness}. Straightforward relaxations of aggregated fitness such as the average, mean or harmonic mean of the individual fitness of a place's transitions do not satisfy this requirement.

Another interesting topic to investigate is the removal of activities that are no longer part of the replayable event log. In particular, when using combined fitness, this straightforward approach to guarantee deadlock-freeness of the returned model is unnecessarily strict: the removed activity may not be the reason for the trace being unfitting and may not be included in a deadlock at all. Future work includes the investigation of alternative approaches for deadlock detection and prevention that may keep such activities as part of the model.

We proposed and evaluated several delta adaption strategies. Unfortunately, no reliable conclusions can be drawn from the experiments performed so far. In any case, improvements or variations of the adaption strategies are likely possible. In fact, there is a multitude of options for delta adaption functions which do not necessarily need to be based on the parameters of place complexity, but may incorporate any information available at the place level. Examples include all kinds of relations between or constraints on the connected activities, token behavior and/or attributes of related activities or replayable traces.  It would be particularly interesting to investigate to which degree the approach can be used to prioritize non-standard quality aspects, for example related to user interests such as compliance or performance.

There is room for improvement concerning the time performance of the algorithm as well. When an activity is determined to be no longer part of the replayable event log and is therefore removed, all subtrees in the complete candidate tree including this activity may be cut off without impacting the returned model. Furthermore, in this work, we apply the standard ILP-based approach to implicit place removal, which identifies implicit places based on the structure of the Petri net. Unfortunately, this approach becomes very time-consuming for larger sets of places. The replay-based approach introduced in \cite{ATAED2020} is much faster and can identify most implicit places. However, in the standard eST-Mining framework it does not verify whether all places required to make a currently evaluated place implicit are indeed present in the net, which is incompatible with the use of combined fitness in the framework proposed in this work. An adaption of this implicit place removal strategy may contribute to the overall performance of the proposed eST-Miner variant.
Related to performance, we have shown that the number of candidate places becomes polynomial when limiting the depth of a tree. Tighter bounds on the degree of the polynomial would be of interest.

\medskip
Finally, with a working solution to select subsets of fitting places such that high quality models without dead parts can be discovered, the remaining major limitation of the eST-Miner is its current inability to include silent or duplicate transition labels in the discovered Petri nets. With their added expressiveness, even better results with respect to fitness and, in particular, precision could be achieved.

\section{Conclusion} \label{sec:fm:conclusion}
In this paper, we proposed various extensions to the eST-Miner. We introduced a new fitness metric, aggregated fitness, investigated its properties and showed that it can be incorporated into the eST-Mining framework.  The goal of this metric is to improve the eST-Miner's place evaluation to avoid the discovery of places that prevent infrequent activities from being executed categorically, while maintaining its ability to abstract from infrequent behavioral patterns.
Furthermore, we propose a framework for place selection with the goal of guaranteeing that the discovered Petri net is free of deadlocks and satisfies a user-definable minimal fitness constraint. The approach employs heuristics to efficiently select a suitable subset of the discovered fitting places, while aiming towards high precision and simplicity. The algorithm is capable of discovering complex control-flow structures such as non-local dependencies. Furthermore, it is able to abstract from infrequent behavioral patters in the event log without simply filtering out infrequent activities or trace variants and to provide guarantees without over- or underfitting.

\medskip
Our experiments, using four different event logs, clearly show that not only is it possible to discover high-quality models using the introduced approach, but also the heuristics applied have a significant impact on the obtained Petri net.  Based on the parameter settings, models with a very different focus with respect to fitness, precision and the handling of infrequent behavior can be discovered. Some parameters have a stronger effect than others and some parameter choices seem to be more suitable for logs with certain properties, which should be verified by further experimentation.
A theoretical analysis of the running time, as well as an experimental overview of the time needed for the various subroutines of the algorithm, was presented, followed by a discussion of design decisions, open questions and future work.

\medskip
Besides the aspects discussed in detail in Section~\ref{sec:fm:discussion}, future work includes further experimentation to explore the generalization of the presented preliminary results, as well as the impact of the candidate place traversal order and its interaction with the heuristics used.
The dead transitions removed from the model because they are no longer part of the replayable event log give rise to further possible extensions of the eST-Miner. When detected early on, they can be used to identify and cut off candidate subtrees consisting of dead places to improve the running time. Further investigation into the cause of their removal may lead to better noise handling strategies to improve the quality of discovered models.
Finally, it would be interesting to investigate whether the presented place selection strategies can be adapted to improve other algorithms as well.

\flushleft
\begin{minipage}{0.6\columnwidth}
\subsection*{Acknowledgments:}
\noindent

We very much appreciate the time and effort of the reviewers, whose comments and suggestions contributed to improve the quality of this work.

Special thanks go to Tobias Brockhoff for his support with conducting the presented experiments.

The authors gratefully acknowledge the financial support by the Federal Ministry of Education and Research (BMBF) for the joint project Bridging AI (grant no. 16DHBKI023).

We thank the Alexander von Humboldt (AvH) Stiftung for supporting our research.
\end{minipage}~
\begin{minipage}{0.4\columnwidth}
\hspace*{10mm}\includegraphics[width=0.7\columnwidth]{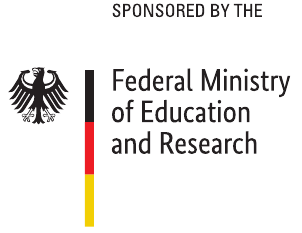}
\end{minipage}



\end{document}